\documentclass{article}

\usepackage{fullpage}
\setlength{\pdfpageheight}{\paperheight}
\setlength{\pdfpagewidth}{\paperwidth}
\usepackage{amsmath,amssymb,subfigure,graphicx,multirow}
\usepackage{amsthm}

\usepackage[ruled,lined,noend,linesnumbered]{algorithm2e}
\SetArgSty{textrm}
\usepackage{enumerate}
\usepackage{verbatim}
\usepackage{bm}
\usepackage{pseudocode}
\usepackage{tabularx,times}
\usepackage{algorithm2e}

\newcolumntype{Y}{>{\centering\arraybackslash}X}

\let \originalleft \left
\let\originalright\right
\renewcommand{\left}{\mathopen{}\mathclose\bgroup\originalleft}
\renewcommand{\right}{\aftergroup\egroup\originalright}

\newcommand{\Tl}{T_L}
\newcommand{\Tr}{T_R}
\newcommand{\bh}{\hat{h}}
\newcommand{\node}{\mbox{\tt Node}}
\newcommand{\leaf}{\mbox{\tt Leaf}}
\newcommand{\func}[1]{{\sc #1}}
\newcommand{\join}{\func{Join}}
\renewcommand{\split}{\func{Split}}
\newcommand{\joinTwo}{\func{Join2}}
\newcommand{\filter}{\func{Filter}}
\newcommand{\forallnew}{\func{ForAll}}
\newcommand{\union}{\func{Union}}
\newcommand{\insertnew}{\func{Insert}}
\newcommand{\delete}{\func{Delete}}
\newcommand{\intersect}{\func{Intersect}}

\newcommand{\build}{\func{Build}}
\newcommand{\expose}{\func{Expose}}
\newcommand{\difference}{\func{Difference}}

\newcommand{\range}{\func{Range}}
\newcommand{\textal}[1]{{{\sc {#1}}}}

\newcommand{\bound}{{$O\left(m\log (\frac{n}{m}+1)\right)$}}
\newcommand{\boundcontent}{{m\log \left(\frac{n}{m}+1\right)}}

  % to fix
\newcommand{\hide}[1]{} %hide

\newcommand{\para}[1]{\vspace{0.01in}\noindent\textbf{#1 }}
 %section reference
  % Real numbers
 %section reference

\newcommand{\betaone}{{\frac{1}{\beta}}}
\newcommand{\proofoutline}{{\textbf{Proof outline.}}}
\newcommand{\apply}[1]{{({\rm{applying~(#1)}})}}

\newtheorem{definition}{Definition}

\newtheorem{theorem}{Theorem}

\newtheorem{corollary}{Corollary}
\newtheorem{lemma}{Lemma}

\newcommand{\fullin}[1]{{\if 1<2
{#1}
\else
{}
\fi}}

\begin{document}

\title{Parallel Ordered Sets Using Join}

%\numberofauthors{4} %  in this sample file, there are a *total*
% of EIGHT authors. SIX appear on the 'first-page' (for formatting
% reasons) and the remaining two appear in the \additionalauthors section.
%
\author{
Guy Blelloch\\
Carnegie Mellon University\\
guyb@cs.cmu.edu
\and
Daniel Ferizovic\\
Karlsruhe Institute of Technology\\
dani93.f@gmail.com
\and
Yihan Sun\\
Carnegie Mellon University\\
yihans@cs.cmu.edu
}
%\author{
%\alignauthor Guy Blelloch\\
%       \affaddr{Carnegie Mellon University}\\
%       \email{guyb@cs.cmu.edu}
%\alignauthor
%Daniel Ferizovic\\
%       \affaddr{Karlsruhe Institute of Technology}\\
%       \email{dani93.f@gmail.com }
%\alignauthor
%Yihan Sun\\
%       \affaddr{Carnegie Mellon University}\\
%       \email{yihans@cs.cmu.edu}
%}
% There's nothing stopping you putting the seventh, eighth, etc.
% author on the opening page (as the 'third row') but we ask,
% for aesthetic reasons that you place these 'additional authors'
% in the \additional authors block, viz.

\maketitle
\begin{abstract}
  Ordered sets (and maps when data is associated with each key) are
  one of the most important and useful data types.  The set-set
  functions union, intersection and difference are particularly useful
  in certain applications.  Brown and Tarjan first described an
  algorithm for these functions, based on 2-3 trees, that meet the
  optimal $\Theta\left(\boundcontent\right)$ time
  bounds in the comparison model ($n$ and $m \le n$ are the input
  sizes).  Later Adams showed very elegant algorithms for the
  functions, and others, based on weight-balanced trees.  They only
  require a single function that is specific to the balancing
  scheme---a function that joins two balanced trees---and hence can be
  applied to other balancing schemes.  Furthermore the algorithms are
  naturally parallel.  However, in the twenty-four years since, no one
  has shown that the algorithms are work efficient (or optimal), sequential or
  parallel, and even for the original weight-balanced trees.

  In this paper we show that Adams' algorithms are both work efficient
  and highly parallel (polylog depth) across four different balancing
  schemes---AVL trees, red-black trees, weight balanced trees and
  treaps.  To do this we need careful, but simple, algorithms for
  \join{} that maintain certain invariants, and our proof is (mostly)
  generic across the schemes.

  To understand how the algorithms perform in practice we have also
  implemented them (all code except \join{} is generic across the
  balancing schemes).  Interestingly the implementations on all four
  balancing schemes and three set functions perform similarly in time
  and speedup (more than 45x on 64 cores).  We also compare the
  performance of our implementation to other existing libraries and
  algorithms including the standard template library (STL)
  implementation of red-black trees, the multicore standard template
  library (MCSTL), and a recent parallel implementation based on
  weight-balanced trees.  Our implementations are not as fast as the
  best of these on fully overlapping keys (but comparable), but better
  than all on keys with a skewed overlap (two Gaussians with different
  means).
\end{abstract}

\hide{
  In this paper we show how a single function, \join, can be used to
  simply and efficiently implement a parallel interface for ordered
  sets and maps for a variety of balanced tree data structures.  The
  \join{} function takes two ordered sets (or maps) $L$ and $R$ and a
  key $k$ such that $\max(L) < k < \min(R)$, and returns an ordered
  set containing $L \cup \{k\} \cup R$.  If implemented with binary
  search trees, and if balance is not considered, \join{} can simply
  create a new tree node containing the three arguments.  For balanced
  trees, however, \join{} needs to rebalance, and we describe
  efficient algorithms for it for four data structures: AVL trees,
  red-black trees, weight-balanced trees and treaps.  Using just
  \join{} we describe simple generic algorithms for many set and map
  functions including insertion, deletion, union, intersection, difference, split,
  range, and filter.  All algorithms are optimal in the comparison
  model, and we provide a general proof for all four balancing schemes
  to show the efficiency in work of our algorithms.
  Furthermore the aggregate functions are highly parallel, all
  with polylogarithmic span.

We also implemented and tested our algorithm on all the four balancing schemes.
Interestingly all four data structures have quite similar
performance (within 20\% across most sizes).  They also have the same
speedup characteristics, getting around 40x speedup on 64 cores.
Compared to the STL red-black tree implementation, our implementation
running on 1 core is about 8x faster for union of two equal sized maps
($10^8$ each), and over 4 orders of magnitude faster when one is of
size $10^4$ and the other $10^8$.  This is because our implementation
is asymptotically more efficient than the STL implementation.
}

\hide{
\category{H.4}{Information Systems Applications}{Miscellaneous}

\category{D.2.8}{Software Engineering}{Metrics}[complexity measures, performance measures]

\terms{Theory}

\keywords{ACM proceedings, \LaTeX, text tagging} % NOT required for Proceedings
}
\clearpage
\section{Introduction}
\label{intro}
Ordered sets and ordered maps (sets with data associated with each
key) are two of the most important data types used in modern
programming.  Most programming languages either have them built in as
basic types (e.g. python) or supply them as standard libraries (C++,
C\# Java, Scala, Haskell, ML).  These implementations are based on
some form of balanced tree (or tree-like) data structure and, at
minimum, support lookup, insertion, and deletion in logarithmic time.
Most also support set-set functions such as union, intersection, and
difference.  These functions are particularly useful when using
parallel machines since they can support parallel bulk updates.  In
this paper we are interested in simple and efficient parallel
algorithms for such set-set functions.

The lower bound for comparison-based algorithms for union,
intersection and difference for inputs of size $n$ and $m \le n$ and output an
ordered structure\footnote{By ``ordered structure'' we mean any data structure
that can output elements in sorted order without any comparisons.}
 is
$\log_2 {m+n \choose n} =
\Theta\left(\boundcontent\right)$.
%($m+n \choose n$ is the number of possible ways $n$ keys can be interleaved with $m$keys).
Brown and Tarjan first described a sequential algorithm for
merging that asymptotically match these bounds~\cite{brown1979fast}.
It can be adapted for union, intersection and difference with the same
bounds.  The bound is interesting since it shows that implementing
insertion with union, or deletion with difference, is asymptotically
efficient ($O(\log n)$ time), as is taking the union of two equal
sized sets ($O(n)$ time).  However, the Brown and Tarjan algorithm is
complicated, and completely sequential.

\begin{figure*}[th]
\hrulefill\\[-.1in]
\begin{minipage}[t]{.45\columnwidth}
\begin{lstlisting}[numbers=none]
@\textbf{insert}@$(T,k)$ =
  $(T_L,m,T_R)$ = split$(T,k)$;
  join$(\Tl,k,\Tr)$@\vspace{.1in}@
@\textbf{delete}@$(T,k)$ =
  $(\Tl,m,\Tr)$ = split$(T,k)$;
  join2$(\Tl,\Tr)$@\vspace{.1in}@
@\textbf{split}@$(T,k)$ =
  if $T$ = Leaf then (Leaf,false,Leaf)
  else $(L,m,R)$ = expose$(T)$;
    if $k = m$ then ($L$,true,$R$)
    else if $k < m$ then
      $(L_L,b,L_R)$ = split$(L,k)$;
      ($L_L$,$b$,join$(L_R,m,R)$)
    else $(R_L,b,R_R)$ = split$(R,k)$;
      (join$(L,m,R_L),b,R_R)$  @\vspace{.1in}@
@\textbf{splitLast}@($T$) =
  $(L,k,R)$ = expose$(T)$;
  if $R$ = Leaf then ($L,k$)
  else $(T',k')$ = splitLast($R$);
    (join($L,k,T'$),$k'$)@\vspace{.1in}@
@\textbf{join2}@($\Tl$,$\Tr$) =
  if $\Tl$ = $\leaf$ then $\Tr$
  else $(\Tl',k)$ = splitLast($\Tl$);
    join($\Tl',k,\Tr$)
\end{lstlisting}
\end{minipage}
\begin{minipage}[t]{\columnwidth}
\begin{lstlisting}[numbers=none]
@\textbf{union}@($T_1$,$T_2$) =
  if $T_1$ = Leaf then $T_2$
  else if $T_2$ = Leaf then $T_1$
  else ($L_2$,$k_2$,$R_2$) = expose($T_2$);
    ($L_1$,$b$,$R_1$) = split($T_1$,$k_2$);
    $\Tl$ = union($L_1$,$L_2$) || $\Tr$ = union($R_1$,$R_2$);
    join($\Tl$,$k_2$,$\Tr$) @\vspace{.1in}@
@\textbf{intersect}@($T_1$,$T_2$) =
  if $T_1$ = Leaf then Leaf
  else if $T_2$ = Leaf then Leaf
  else ($L_2$,$k_2$,$R_2$) = expose($T_2$);
    ($L_1$,$b$,$R_1$) = split($T_1$,$k_2$);
    $\Tl$ = intersect($L_1$,$L_2$) || $\Tr$ = intersect($R_1$,$R_2$);
    if $b$ = true then join($\Tl$,$k_2$,$\Tr$)
    else join2($\Tl$,$\Tr$) @\vspace{.1in}@
@\textbf{difference}@($T_1$,$T_2$) =
  if $T_1$ = Leaf then Leaf
  else if $T_2$ = Leaf then $T_1$
  else ($L_2$,$k_2$,$R_2$) = expose($T_2$);
    ($L_1$,$b$,$R_1$) = split($T_1$,$k_2$);
    $\Tl$ = difference($L_1$,$L_2$) || $\Tr$ = difference($R_1$,$R_2$);
    join2($\Tl$,$\Tr$) @\vspace{.1in}@
\end{lstlisting}
\end{minipage}

\hrulefill

\caption{ Implementing \union{}, \intersect{}, \difference{},
  \insertnew, \delete, \split, and \joinTwo{} with just \join.
  \expose{} returns the left tree, key, and right tree of a node.  The
  $||$ notation indicates the recursive calls can run in parallel.
  These are slight variants of the algorithms described by
  Adams~\cite{adams1992implementing}, although he did not consider
  parallelism.}
\label{fig:unioncode}
\end{figure*}

Adams later described very elegant algorithms for union, intersection,
and difference, as well as other functions using weight-balanced trees,
based on a single
function, \join{}~\cite{adams1992implementing,adams1993functional}
(see Figure~\ref{fig:unioncode}).  The algorithms are naturally
parallel.  The \join$(L,k,R)$ function takes a key $k$ and two ordered
sets $L$ and $R$ such that $L < k < R$ and returns the union of the
keys~\cite{Tarjan83,ST85}.  \join{} can be used to implement
\joinTwo$(L,R)$, which does not take the key in the middle, and
\split$(T,k)$, which splits a tree at a key $k$ returning the two
pieces and a flag indicating if $k$ is in $T$ (See Section \ref{sec:operations}).  With these three
functions, union, intersection, and difference (as well as insertion,
deletion and other functions) are almost trivial.  Because of this at
least three libraries use Adams' algorithms for their implementation
of ordered sets and tables (Haskell~\cite{marlow2010haskell} and
MIT/GNU Scheme, and SML).

\join{} can be implemented on a variety of different balanced tree
schemes.  Sleator and Tarjan describe an algorithm for \join{} based
on splay trees which runs in amortized logarithmic time~\cite{ST85}.
Tarjan describes a version for red-black tree that runs in worst case
logarithmic time~\cite{Tarjan83}.  Adams describes version based on
weight-balanced trees~\cite{adams1992implementing}.\footnote{Adams'
  version had some bugs in maintaining the balance, but these were
  later fixed~\cite{hirai2011balancing,straka2012adams}.}  Adams' algorithms
were proposed in an international competition for the Standard ML community, which
is about implementations on ``set of integers''. Prizes were awarded in two categories:
fastest algorithm, and most elegant yet still efficient program. Adams won the elegance award, while
his algorithm is as fast as the fastest program for very large sets, and was faster for smaller sets.
Adams' algorithms actually show that in principle all balance criteria for
search trees can be captured by a single function \join{}, although he only considered weight-balanced trees.

Surprisingly, however, there have been almost no results on bounding
the work (time) of Adams' algorithms, in general nor on specific
trees.  Adams informally argues that his algorithms take $O(n + m)$
work for weight-balanced tree, but that is a very loose bound.  Blelloch and Miller later show
that similar algorithms for treaps~\cite{BR98}, are optimal for work
(i.e. $\Theta\left(\boundcontent\right)$), and also
parallel.  Their algorithms, however, are specific for treaps.  The
problem with bounding the work of Adams' algorithms, is that just
bounding the time of \split{}, \join{} and \joinTwo{} with logarithmic
costs is not sufficient.\footnote{Bounding the cost of \join{},
  \split{}, and \joinTwo{} by the logarithm of the \emph{smaller tree} is
  probably sufficient, but implementing a data structure with such
  bounds is very much more complicated.}  One needs additional
properties of the trees.

The contribution of this paper is to give first work-optimal bounds for
Adams' algorithms. We do this not only for the weight-balanced trees.
we bound the work and depth of Adams' algorithms (union,
intersection and difference) for four different balancing shemes: AVL
trees, red-black trees, weight-balanced trees and treaps.  We analyze
exactly the algorithms in Figure~\ref{fig:unioncode}, and the bounds
hold when either input tree is larger.  We show that with appropriate
(and simple) implementations of \join{} for each of the four balancing
schemes, we achieve asymptotically optimal bounds on work.
Furthermore the algorithms have $O(\log n \log m)$ depth, and hence
are highly parallel.
To prove the bounds on work we show that our
implementations of \join{} satisfy certain conditions based on a rank
we define for each tree type.  In particular the cost of \join{} must
be proportional to the difference in ranks of two trees, and the rank
of the result of a join must be at most one more than the maximum rank
of the two arguments.

In addition to the theoretical analysis of the algorithms, we
implemented parallel versions of all of the algorithms on all four
tree types and describe a set of experiments.  Our implementation is
generic in the sense that we use common code for the algorithms in
Figure~\ref{fig:unioncode}, and only wrote specialized code for each
tree type for the \join{} function.  Our implementations of \join{}
are as described in this paper.  We compare performance across a
variety of parameters.  We compare across the tree types, and
interestingly all four balance criteria have very similar
performance.  We measure the speedup on up to 64 cores and achieve close
to a 46-fold speedup.  We compare to other implementations, including
the set implementation in the C++ Standard Template library (STL) for
sequential performance, and parallel weight-balanced B-trees
(WBB-trees) \cite{WBTree} and the multi-core standard template library
(MCSTL) \cite{FS07} for parallel performance.  We also compare for different data
distributions.   The conclusion from the experiments is that although
not always as fast as (WBB-trees) \cite{WBTree} on uniform distributions, the generic code is
quite competitive, and on keys
with a skewed overlap (two Gaussians with different means), our implementation
is much better than all the other baselines.

\newcommand{\tr}[2]{T_{#1}^{#2}}

\paragraph{Related Work}
Parallel set operations on two ordered sets have been well-studied,
but each previous algorithm only works on one type of balance tree.
Paul, Vishkin, and Wagener studied bulk insertion and deletion on 2-3
trees in the PRAM model~\cite{PVW83}.  Park and Park showed similar
results for red-black trees~\cite{PP01}.  These algorithms are not
based on \join{}.  Katajainen~\cite{katajainen1994efficient} claimed an
algorithm with \bound{} work and $O(\log n)$ depth using 2-3 tree, but
was shown to contain some bugs so the bounds do not hold \cite{BR98}.
Akhremtsev and Sanders~\cite{akhremtsevsanders} (unpublished) recently
fixed that and proposed an algorithm based on $(a,b)$-trees with
optimal work and $O(\log n)$ depth. Their algorithm only works for
$(a,b)$-trees, and they only gave algorithms on \union{}. Besides, our
algorithm is much simpler and easy to be implemented than theirs.
Blelloch and Miller showed a similar algorithm as Adams' (as well as
ours) on treaps with optimal work and $O(\log n)$ depth on a EREW PRAM
with scan operation using pipelines (implying $O(\log n\log m)$ depth
on a plain EREW PRAM, and $O(\log^* m \log n)$ depth on a plain CRCW
PRAM).  The pipelining is quite complicated.  Our focus in this paper
is in showing that very simple algorithm are work efficient and have
polylogarithmic depth, and less with optimizing the depth.

Many researchers have considered concurrent implementations of
balanced search trees (e.g.,~\cite{KL80,Lersen00,BCCO10,NM14}).  None
of these is work efficient for \union{} since it is necessary to
insert one tree into the other taking work at least $O(m \log n)$.
Furthermore the overhead of concurrency is likely to be very high.

\hide{
Merging two ordered sets has been well studied. Hwang and Lin
\cite{hwang1972simple} describe an algorithm to merge two arrays,
which costs optimal work. Their algorithm works for arrays, and since
writing back both array costs $O(m+n)$ work, the algorithm only
returns the cross pointers between two arrays. Brown and Tarjan
\cite{brown1979fast} considered input data to be arranged in a BST,
which allows the merged result to be explicitly given by a new BST in
time \bound{}.  Their algorithm works on AVL and 2-3 trees. None of
these algorithms considered parallelism. Katajainen et
al. \cite{katajainen1992space} studied the space-efficiency on merging
two sets in parallel. Their focus is not on reducing time complexity.
Furthermore the above-mentioned works are not based on join and are
much more complicated than our algorithm.

\join{} and \split{} appear in the LEDA library~\cite{Leda99} for
sorted sequences, and the CGAL library for ordered maps~\cite{Wein05}.
None of this work considered parallel algorithms based on the
functions, nor how to build an interface out of just \join{}. Frias
and Singler~\cite{FS07} use \join{} and \split{} on red-black trees
for an implementation of the MCSTL, a multi-core version of the C++
Standard Template Library (STL).  Their algorithms are lower level
based on partitioning across processors, and are for bulk insertion
and deletion.

Several researchers have studied how to implement aggregate functions
on balanced trees in parallel.  Paul, Vishkin, and Wagener studied
bulk insertion and deletion on 2-3 trees in the PRAM
model~\cite{PVW83}.  Park and Park showed similar results for
red-black trees~\cite{PP01}.  These algorithms are based on particular
trees, are not based on \join{}, and are highly synchronous.
}

\hide{Researchers have also studied distributed memory
implementations of maps and sets, including a distributed version of
STL as part of the HPC++ effort~\cite{JG97}, and the STAPL
library~\cite{TRBAR07}.  The emphasis of this work is on how the maps
and sets are partitioned across the memories.}
\hide{
\subsection{Old}

A Binary Search Tree (BST) is a binary tree data structure which keeps its keys in sorted order, i.e., the in-order of the tree should be a sorted list. It provides very convenient interface for searching, inserting and deleting, with time complexity of $O(h(T))$. A BST is an efficient data structure for storing and searching ordered data. A BST with $n$ keys has the lower bound of height of $\log n$, and could degenerate to a linked list in the worst case, where the costs of searching and inserting are both $O(n)$. Thus in practice, BSTs are necessary to be organized in a balanced structure, where the left subtree and the right subtree of a certain node do not differ too much in height. Generally speaking, a Balanced BST (BBST) refers to BSTs with bounded height of $O(\log n)$ (or has a height bounded by $O(\log n)$ w.h.p.\footnote{We use ``with high probability'' (w.h.p.) to mean probability at least $1-n^{-c}$ for any constant $c>0$.} for randomized BSTs). Balanced BSTs are widely used in sorting, file systems, databases and geometry algorithms. Balanced BST is also a efficient way to implement and maintain some structures requiring ordering, such as priority queues and ordered sets. In this paper, we focus on ordered sets implemented by BBSTs.

Conventional operations on balanced BSTs are based on insertion and deletion, which are both essentially sequential, and hard to parallelize. From a parallel perspective, instead of \texttt{insert} and \texttt{delete}, some ``batch'' operations such as \texttt{join} and \texttt{split} are more applicable. In this paper, we use \texttt{join} as the basic operation, and give efficient algorithm of \texttt{join} on four typical BSTs: AVL, red-black tree, treap and weight balanced tree. With \texttt{join} as subroutine and another sequential tool \texttt{split} deriving from \texttt{join}, and further discuss the implementation of other tree operations, which are not dependent on BST types, such as building a tree from the associate array, splitting a tree by a key, getting the union, intersection or difference of two trees. All these are basic and frequently-used operations for BSTs and are also essential for BST's application on ordered sets. By applying \texttt{join}, all these operations can be highly-paralleled with a work of theoretical lower bound and a polylogarithmic depth. Also, other basic tree operations, such as insertion and deletion, can also be done sequentially within time no more than the conventional implementation, i.e., $O(\log N)$.

In this paper, we design algorithms for basic tree operation based on the \texttt{join} paradigm. This framework helps tree operations easy to be paralleled. For four most commonly used balanced BSTs, i.e., AVL, red-black tree, treap and weight-balanced tree, we theoretically prove the \texttt{build} operation can be done with $O(N\log N)$ work and $O(\log^3 N)$ depth, whereas \texttt{merge} (or \texttt{union}), \texttt{intersect} and \texttt{difference} two trees of size $M$ and $N$ ($N<M$) takes work $O\left(N\log(\frac{M}{N}+1)\right)$ and depth of $O(\log M \log N)$. For general insertion and deletion, the implementation based on \texttt{split} and \texttt{join} still takes time $O(\log N)$. All these algorithms are work-efficient. We also implement our algorithm and build a library on persist set operations, and apply it on some real-world settings. We test it on a 40-core (with two-way hyper-threading) machine and get a speedup at up to 58. Experiments show that our algorithm provides efficient interfaces for parallel.

The organization of this paper is as follows. Section \ref{preli} prepares preliminary of this paper. Section \ref{paradigm} briefly introduces the basic paradigms \texttt{join}. Section \ref{operation} describes the detail the other operations based on them. Section \ref{imple} explains some key issue in implementing the persist set operation library. Then in Section \ref{exp} we show the result of our experiment to show the superiority of our algorithms. In Section \ref{conclusion}, we conclude this paper.
}

\section{Preliminaries}
\label{sec:prelim}
A \emph{binary tree} is either a \leaf{}, or a node consisting of a
\emph{left} binary tree $\Tl$, a value (or key) $v$, and a
\emph{right} binary tree $\Tr$, and denoted \node$(\Tl,v,\Tr)$.  The
\emph{size} of a binary tree, or $|T|$, is $0$ for a \leaf{} and
$|\Tl| + |\Tr|+1$ for a \node$(\Tl,v,\Tr)$.  The \emph{weight} of a
binary tree, or $w(T)$, is one more than its size (i.e., the number of
leaves in the tree).  The \emph{height} of a binary tree, or $h(T)$,
is $0$ for a \leaf, and $\max(h(\Tl), h(\Tr)) + 1$ for a
\node$(\Tl,v,\Tr)$.  \emph{Parent}, \emph{child}, \emph{ancestor} and
\emph{descendant} are defined as usual (ancestor and descendant are
inclusive of the node itself).
%A node has \emph{depth} $d$ if taking its parent $d$ times returns the root.
The \emph{left spine} of a binary tree is the path of nodes from the root to a leaf always
following the left tree, and the \emph{right spine} the path to a leaf
following the right tree.  The \emph{in-order values} of a binary tree
is the sequence of values returned by an in-order traversal of the
tree.

A \emph{balancing scheme} for binary trees is an invariant (or set of
invariants) that is true for every node of a tree, and is for the
purpose of keeping the tree nearly balanced.  In this paper we
consider four balancing schemes that ensure the height of every tree
of size $n$ is bounded by $O(\log n)$.  For each balancing scheme we
define the \emph{rank} of a tree, or $r(T)$.

\textbf{AVL trees}~\cite{avl} have the invariant that
for every $\node(\Tl,v,\Tr)$, the height of $\Tl$ and $\Tr$ differ by
at most one.  This property implies that any AVL tree of size $n$ has
height at most $\log_{\phi}(n+1)$, where $\phi =
\frac{1+\sqrt{5}}{2}$ is the golden ratio.    For AVL trees
$r(T) = h(T)-1$.

\textbf{Red-black (RB) trees}~\cite{redblack} associate a color with
every node and maintain two invariants: (the red rule) no red node has
a red child, and (the black rule) the number of black nodes on every
path from the root down to a leaf is equal.  Unlike some other
presentations, we do not require that the root of a tree is black.
Our proof of the work bounds requires allowing a red root.  We define
the \emph{black height} of a node $T$, denoted $\bh(T)$ to be the
number of black nodes on a downward path from the node to a leaf
(inclusive of the node).  Any RB tree of size $n$ has height at most
$2\log_2 (n+1)$.  In RB trees $r(T)=2(\bh(T)-1)$ if $T$ is black and
$r(T)=2\bh(T)-1$ if $T$ is red.

\textbf{Weight-balanced (WB) trees} with parameter $\alpha$ (also called
BB$[\alpha]$ trees)~\cite{weightbalanced} maintain for every
$T=\node(\Tl,v,\Tr)$ the invariant
$\alpha \le \frac{w(\Tl)}{w(T)}\le 1-\alpha$.  We say two
weight-balanced trees $T_1$ and $T_2$ have \emph{like} weights if
$\node(T_1,v,T_2)$ is weight balanced.  Any $\alpha$ weight-balanced
tree of size $n$ has height at most $\log_{\frac{1}{1-\alpha}}n$.  For
$\frac{2}{11} < \alpha \le 1-\frac{1}{\sqrt{2}}$ insertion and
deletion can be implemented on weight balanced trees using just single
and double rotations~\cite{weightbalanced,blum1980average}.  We
require the same condition for our implementation of \join{}, and in
particular use $\alpha=0.29$ in experiments.
For WB trees $r(T) = \lceil\log_2(w(T))\rceil-1$.

\textbf{Treaps}~\cite{SA96} associate a uniformly random
priority with every node and maintain the invariant that the priority
at each node is no greater than the priority of its two children.  Any
treap of size $n$ has height $O(\log n)$ with high probability
(w.h.p)\footnote{Here w.h.p. means that height $O(c\log n)$ with probability at least $1 -
1/n^c$ ($c$ is a constant)}.   For treaps $r(T) = \lceil\log_2(w(T))\rceil-1$.

For all the four balancing schemes $r(T)=\Theta(\log(|T|+1))$.
The notation we use for binary trees is summarized in
Table~\ref{fig:notation}.  %For our generic functions we only use
%\join{}, \expose{} and $r(T)$ to access the balanced trees.

\begin{comment}
We also denote $\beta = \frac{1-\alpha}{\alpha}$, which means that
either subtree could have a size of more than $\beta$ times of the
other subtree.
\end{comment}

A \emph{Binary Search Tree} (BST) is a binary tree in which each value
is a key taken from a total order, and for which the in-order values
are sorted.  A \emph{balanced BST} is a BST maintained with a
balancing scheme, and is an efficient way to represent ordered sets.

%We note that the balancing schemes
%defined above although typically applied to BSTs do not require that
%the binary tree be a BST.

%When describing operations on a pair of balanced BST (e.g. union or
%intersection) we use $n$ and $m$ to denote the size of the two trees
%where we assume $m \ge n$.

Our algorithms are based on nested parallelism with nested fork-join
constructs and no other synchronization or communication among
parallel tasks.\footnote {This does not preclude using our algorithms
  in a concurrent setting.}  All algorithms are deterministic.  We use
work ($W$) and span ($S$) to analyze asymptotic costs, where the work
is the total number of operations and span is the critical path.  We
use the simple composition rules $W(e_1~||~e_2) = W(e_1) + W(e_2) + 1$
and $S(e_1~||~e_2) = \max(S(e_1),S(e_2)) + 1$.  For sequential
computation both work and span compose with addition.  Any computation
with $W$ work and $S$ span will run in time $T < \frac{W}{P} + S$
assuming a PRAM (random access shared memory) with $P$ processors and
a greedy scheduler~\cite{Brent74,BL98}.

%For different balancing schemes, we define the \emph{rank} of a tree node, denoted as $r(\cdot)$. For AVL nodes, the rank is defined as the height of the node. For red-black trees, the rank is defined as the black height of the node. For treap, it is simply defined as the priority of the node, and for weight-balanced trees, it is defined as the log of size of the subtree. The rank of a tree $T$ is defined as the rank of the tree root. For AVL trees, red-black trees and weight-balanced trees, $r(T)=O(\log |T|)$.

\begin{comment}
We use a function \textal{Connect3}$(\Tl,k,\Tr)$, where $\Tl$ and $\Tr$ are two BBSTs and $k$ is one node of a BBST, to return a new balanced BST of root $k$ whose left child is $\Tl$ and right child is $\Tr$. This requires that all keys in $\Tl$ is smaller than $k$ and all keys in $\Tr$ is larger than $k$. Tab. \ref{notations} shows some notations used in our paper.
% We also use $h(T)$ to denote the height of BST $T$.
\end{comment}

\begin{table}
  \centering
  \begin{tabular}{c|c}
    \hline
    % after \\: \hline or \cline{col1-col2} \cline{col3-col4} ...
    \textbf{Notation} & \textbf{Description} \\
    \hline
    $|T|$ & The size of tree $T$\\
    $h(T)$ & The height of tree $T$ \\
    $\bh(T)$ & The black height of an RB tree $T$ \\
    $r(T)$ & The rank of tree $T$ \\
    $w(T)$ & The weight of tree $T$ (i.e, $|T| + 1$)\\
    $p(T)$ & The parent of node $T$\\
    $k(T)$ & The value (or key) of node $T$ \\
    %\hline
    $L(T)$ & The left child of node $T$ \\
    %\hline
    $R(T)$ & The right child of node $T$ \\
    \texttt{expose}$(T)$ & $(L(T),k(T),R(T))$ \\
    %\hline
    %$k.{\rm{data}}$ & The data stored in node $k$ \\
    %$k_1<k_2$ & $k_1.{\rm{data}}<k_2.{\rm{data}}$ \\
    %$T_1<T_2$ & $\forall k_1 \in T_1, k_2\in T_2$, $k_1<k_2$ \\
    %$T_1<k$ & $\forall k_1 \in T_1$, $k_1<k$ \\
    \hline
  \end{tabular}
  \caption{Summary of notation.}\label{fig:notation}
\end{table}

\section{The JOIN Function}
\label{sec:paradigm}
Here we describe algorithms for \join{} for the four balancing schemes
we defined in Section~\ref{sec:prelim}.  %We use a generalized version
%of \join{} that works for any binary tree, whether a BST or not.
The function \join$(\Tl,k,\Tr)$ takes two binary trees
$\Tl$ and $\Tr$, and a value $k$, and returns a new binary tree for
which the in-order values are a concatenation of the in-order values
of $\Tl$, then $k$, and then the in-order values of $\Tr$.  %If the
%binary trees are BSTs, then this is the same as the definition in the
%introduction.  %However, binary trees can also be used to represent
%sequences, and \join{} might be useful for sequences.

%is useful in this context as describe at the end of Section~\ref{sec:??}.

As mentioned in the introduction and shown in Section \ref{sec:operations}, \join{} fully captures what is
required to rebalance a tree and can be used as the only function that
knows about and maintains the balance invariants.  For AVL, RB and WB
trees we show that \join{} takes work that is proportional to the
difference in rank of the two trees.  For treaps the work depends on
the priority of $k$.  All versions of \join{} are sequential so the
span is equal to the work.  Due to space limitations, we
describe the algorithms, state the theorems for all balancing schemes,
but only show a proof outline for AVL trees.
%We show in the next section that all
%balancing schemes we consider lead to optimal work algorithms for many
%other functions on maps and sets.

\begin{figure}[!h!t]
\small
\begin{lstlisting}[frame=lines]
joinRight$(\Tl,k,\Tr)$ =
  $(l,k',c)$ = expose$(\Tl)$;
  if $h(c) \leq h(\Tr) + 1$ then @\label{line:avlbase}@
    $T' = \node(c,k,\Tr)$; @\label{line:avlnew}@
    if $h(T') \leq h(l) + 1$ then $\node(l ,k', T')$
    else rotateLeft($\node$($l, k',$ rotateRight$(T')$)) @\label{line:avldouble}@
  else
    $T'$ = joinRight$(c,k,\Tr)$;
    $T''$ = $\node(l, k',T')$;
    if $h(T') \leq h(l) + 1$ then $T''$
    else rotateLeft$(T'')$@\vspace{.1in}@ @\label{line:avlup}@
join$(\Tl,k,\Tr)$ =
  if $h(\Tl) > h(\Tr) + 1$ then joinRight$(\Tl,k,\Tr)$ @\label{line:avljoinright}@
  else if $h(\Tr) > h(\Tl) + 1$ then joinLeft$(\Tl,k,\Tr)$
  else $\node(\Tl,k,\Tr)$
\end{lstlisting}
\caption{AVL \join{} algorithm.}
\label{fig:avljoin}
\end{figure}

\begin{figure}[!h!t]
\small
\begin{lstlisting}[frame=lines]
joinRightRB$(\Tl,k,\Tr)$ =
  if ($r(\Tl) = \lfloor r(\Tr)/2 \rfloor\times 2$) then
    $\node(\Tl,\left<k,\texttt{red}\right>,\Tr)$;
  else
    $(L',\left<k',c'\right>,R')$=expose($\Tl$);
    $T'$ = $\node(L', \left<k',c'\right>$,joinRightRB$(R',k,\Tr))$;
    if ($c'$=black) and ($c(R(T'))=c(R(R(T')))$=red) then
      $c(R(R(T')))$=black;
      $T''$=rotateLeft$(T')$
    else $T''$@\vspace{.1in}@
joinRB$(\Tl,k,\Tr)$ =
  if $\lfloor r(\Tl)/2 \rfloor > \lfloor r(\Tr)/2 \rfloor$ then
    $T'=$joinRightRB$(\Tl,k,\Tr)$;
    if ($c(T')$=red) and ($c(R(T'))$=red) then
      $\node(L(T'),\left<k(T'),\texttt{black}\right>,R(T'))$
    else $T'$
  else if $\lfloor r(\Tr)/2 \rfloor > \lfloor r(\Tl)/2 \rfloor$ then
    $T'=$joinLeftRB$(\Tl,k,\Tr)$;
    if ($c(T')$=red) and ($c(L(T'))$=red) then
      $\node(L(T'),\left<k(T'),\texttt{black}\right>,R(T'))$
    else $T'$
  else if ($c(\Tl)$=black) and ($c(\Tr)$=black) then
    $\node(\Tl,\left<k,\texttt{red}\right>,\Tr)$
  else $\node(\Tl,\left<k,\texttt{black}\right>,\Tr)$
\end{lstlisting}
\caption{RB \join{} algorithm.}
\label{fig:rbjoin}
\end{figure}

\begin{figure}[!h!t]
\small
\begin{lstlisting}[frame=lines]
joinRightWB$(\Tl,k,\Tr)$ =
  ($l,k',c$)=expose($\Tl$);
  if (balance($|\Tl|,|\Tr|$) then $\node(\Tl,k,\Tr))$;
  else
    $T'$ = joinRightWB$(c,k,\Tr)$;
    $(l_1,k_1,r_1)$ = expose$(T')$;
    if like$(|l|,|T'|)$ then $\node$($l, k', T'$)
    else if (like$(|l|,|l_1|)$) and (like$(|l|+|l_1|,r_1)$) then
      rotateLeft($\node$($l, k', T'$))
    else rotateLeft($\node$($l,k'$,rotateRight$(T')$))@\vspace{.1in}@
joinWB$(\Tl,k,\Tr)$ =
  if heavy($\Tl, \Tr$) then joinRightWB$(\Tl,k,\Tr)$
  else if heavy($\Tr,\Tl$) then joinLeftWB$(\Tl,k,\Tr)$
  else $\node(\Tl,k,\Tr)$
\end{lstlisting}
\caption{WB \join{} algorithm.}
\label{fig:wbjoin}
\end{figure}

\begin{figure}[!h!t]
\small
\begin{lstlisting}[frame=lines]
joinTreap$(\Tl,k,\Tr)$ =
  if prior($k, k_1$) and prior($k,k_2$) then $\node(\Tl,k,\Tr)$
  else ($l_1,k_1,r_1$)=expose($\Tl$);
    ($l_2,k_2,r_2$)=expose($\Tr$);
    if prior($k_1,k_2$) then
      $\node$($l_1,k_1$,joinTreap$(r_1,k,\Tr)$)
    else $\node$(joinTreap($\Tl,k,l_2$),$k_2,r_2$)
\end{lstlisting}
\caption{Treap \join{} algorithm.}
\label{fig:treapjoin}
\end{figure}

\para{AVL trees.}
Pseudocode for AVL \join{} is given in Figure~\ref{fig:avljoin} and
illustrated in Figure~\ref{avlrebalance}.  Every node stores its own
height so that $h(\cdot)$ takes constant time.  If the two trees $\Tl$
and $\Tr$ differ by height at most one, \join{} can simply create a
new \node$(\Tl,k,\Tr)$.  However if they differ by more than one then
rebalancing is required.  Suppose that $h(\Tl) > h(\Tr) + 1$ (the
other case is symmetric).  The idea is to follow the right spine of
$\Tl$ until a node $c$ for which $h(c) \leq h(\Tr) + 1$ is found
(line~\ref{line:avlbase}).  At this point a new $\node(c,k,\Tr)$ is
created to replace $c$ (line~\ref{line:avlnew}).  Since either $h(c) =
h(\Tr)$ or $h(c) = h(\Tr) + 1$, the new node satisfies the AVL
invariant, and its height is one greater than $c$.  The increase in
height can increase the height of its ancestors, possibly invalidating
the AVL invariant of those nodes.  This can be fixed either with a
double rotation if invalid at the parent (line~\ref{line:avldouble})
or a single left rotation if invalid higher in the tree
(line~\ref{line:avlup}), in both cases restoring the height for any
further ancestor nodes.  The algorithm will therefore require at most
two rotations.

\begin{figure*}[t!h!]
\centering
  \includegraphics[width=0.9\columnwidth]{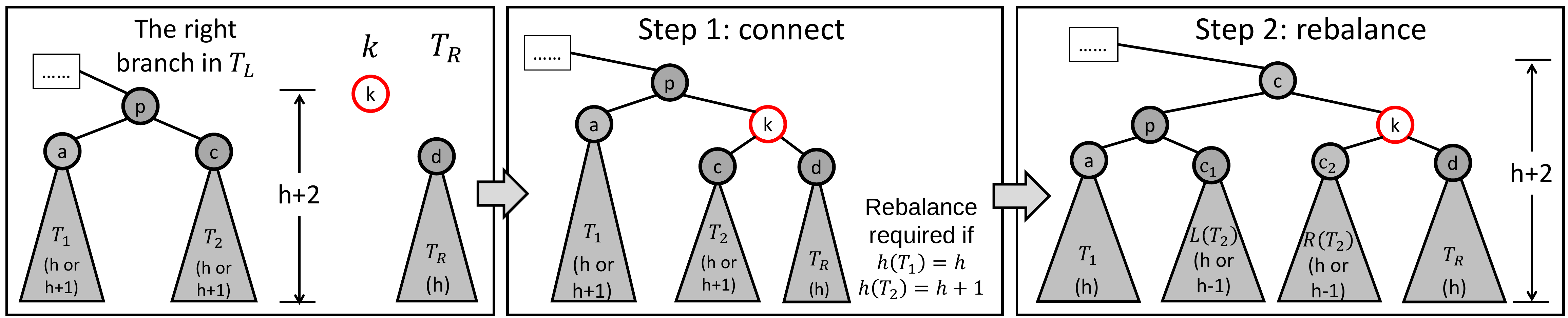}\\
  \caption{An example for \join{} on AVL trees ($h(\Tl) > h(\Tr)+1$).
    We first follow the right spine of $\Tl$ until a subtree of height at most $h(T_r) +
    1$ is found (i.e., $T_2$ rooted at $c$).  Then a new \node$(c,k,\Tr)$
    is created, replacing $c$ (Step 1).    If $h(T_1) = h$ and $h(T_2)
    = h + 1$, the node $p$ will no longer satisfy the AVL invariant.
    A double rotation (Step 2) restores both balance and
    its original height.}
\label{avlrebalance}
\end{figure*}

\begin{lemma}
\label{lem:AVL}
For two AVL trees $\Tl$ and $\Tr$, the AVL \join{} algorithm works
correctly, runs with $O(|h(\Tl)-h(\Tr)|)$ work, and returns a
tree satisfying the AVL invariant with height at most
$1 +\max(h(\Tl),h(\Tr))$.
\end{lemma}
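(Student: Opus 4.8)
The plan is to establish the three assertions---correctness of the in-order values, the $O(|h(\Tl)-h(\Tr)|)$ work bound, and the balance/height guarantee---after first reducing to the single interesting case. If $|h(\Tl)-h(\Tr)|\le 1$, then \join{} returns $\node(\Tl,k,\Tr)$, which is immediately AVL-balanced, has the correct in-order values, and has height $1+\max(h(\Tl),h(\Tr))$, so the lemma holds with $O(1)$ work. Otherwise, by the left-right symmetry of the algorithm, I would treat only $h(\Tl)>h(\Tr)+1$, where \join{} calls \texttt{joinRight}; the case $h(\Tr)>h(\Tl)+1$ is identical under mirroring. For the in-order claim I would induct on the recursion of \texttt{joinRight}: after \expose{} gives $\Tl=\node(l,k',c)$, the recursive call returns a tree whose in-order values are those of $c$, then $k$, then $\Tr$, and re-attaching it as the right subtree of a node with key $k'$ and left subtree $l$ prepends exactly the in-order values of $l$ and $k'$. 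Since single and double rotations preserve in-order values, the result has the in-order values of $\Tl$, then $k$, then $\Tr$, as required.

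The heart of the proof is the balance and height analysis, and the main obstacle is that the naive inductive hypothesis ``\texttt{joinRight} returns a balanced tree of height $h(\Tl)$ or $h(\Tl)+1$'' is too weak: it does not forbid a returned tree that is balanced, tall, yet has two equal-height children, and such a tree would leave the result left-leaning after a single rotation and invalidate the single-rotation step one level higher. I would therefore strengthen the hypothesis to the following: \emph{\texttt{joinRight}$(\Tl,k,\Tr)$ with $h(\Tl)>h(\Tr)+1$ returns a valid AVL tree $T$ with $h(\Tl)\le h(T)\le h(\Tl)+1$, and whenever $h(T)=h(\Tl)+1$ the root of $T$ is strictly right-leaning}, meaning its right subtree is taller than its left subtree by exactly one. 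The extra clause is precisely what rules out the degenerate configuration above.

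With this invariant I would run the induction along the right spine of $\Tl$. In the base case $h(c)\le h(\Tr)+1$, I would first note that the stopping condition together with the AVL invariant on $\Tl$ forces $h(c)\in\{h(\Tr),h(\Tr)+1\}$, so $T'=\node(c,k,\Tr)$ (line~\ref{line:avlnew}) is balanced with $h(T')=h(c)+1$; a short case split on $h(l)\in\{h(c)-1,h(c),h(c)+1\}$ then shows $\node(l,k',T')$ is either already balanced or repaired by the single double rotation on line~\ref{line:avldouble}, and in every sub-case the height is $h(\Tl)$ or $h(\Tl)+1$ with the right-leaning clause holding whenever it is $h(\Tl)+1$. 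In the recursive case $h(c)>h(\Tr)+1$, the inductive hypothesis gives $h(T')\in\{h(c),h(c)+1\}$ with $T'$ strictly right-leaning when $h(T')=h(c)+1$; checking $h(l)\in\{h(c)-1,h(c),h(c)+1\}$ shows that either $\node(l,k',T')$ is balanced directly, or, exactly when $h(l)=h(c)-1$ and $h(T')=h(c)+1$, the strict right-leaning of $T'$ guarantees the single left rotation on line~\ref{line:avlup} restores balance and returns the original height $h(\Tl)$. The key payoff is that a recursive single rotation never increases the height, so the imbalance cannot propagate; this both preserves the invariant and, via $h(T)\le h(\Tl)+1=\max(h(\Tl),h(\Tr))+1$, yields the claimed height bound.

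Finally, for the work bound I would observe that \texttt{joinRight} performs $O(1)$ work per invocation (one \expose{}, a constant number of height comparisons, and at most one single or double rotation) and recurses only on the right child $c$, whose height is strictly smaller than that of the exposed node. Since the recursion stops as soon as the right child has height at most $h(\Tr)+1$, the number of nested calls is at most $h(\Tl)-h(\Tr)$, giving total work---and, since the algorithm is sequential, span---$O(h(\Tl)-h(\Tr))=O(|h(\Tl)-h(\Tr)|)$, with the mirrored \texttt{joinLeft} case symmetric.
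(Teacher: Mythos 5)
Your proof is correct and follows the same overall plan as the paper's: reduce to the case $h(\Tl)>h(\Tr)+1$, induct along the right spine, and case-split on the height of the left sibling at each level. The genuine difference is your strengthened inductive hypothesis---that whenever \texttt{joinRight} returns a tree one taller than its first argument, that tree is strictly right-leaning---and this is exactly the point the paper elides. The paper's proof outline says only that ``rotations are used to restore the invariant (details left out),'' and its longer induction analyzes a generic rebalancing step that applies a single \emph{or double} rotation at interior nodes depending on the shape of the raised child (its case table includes a right-then-left double rotation for a left-leaning child, and a height-increasing single rotation for a child with two equal-height subtrees). The code at line~\ref{line:avlup}, however, performs only a single left rotation above the base case, so the paper's table really proves correctness of a slightly different procedure; what justifies the code as written is precisely your observation that the left-leaning and equal-height configurations cannot occur when that rotation fires, because a height increase is always accompanied by a right-leaning root. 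Your version also yields the sharper facts that the rotation at line~\ref{line:avlup} restores the original height (so the imbalance never propagates past a rotation and at most one rotation event occurs along the spine) and that the final height is at most $1+\max(h(\Tl),h(\Tr))$, both of which the paper asserts but does not derive. The work-bound argument is the same in both proofs.
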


\proofoutline{} Since the algorithm only visits nodes on the path from
the root to $c$, and only requires at most two rotations, it does work
proportional to the path length.   The path length
is no more than the difference in height of the two trees since
the height of each consecutive node along the right spine of $\Tl$
differs by at least one.  Along with the case when
$h(\Tr) > h(\Tl) + 1$, which is symmetric, this gives the stated work
bounds.  The resulting tree satisfies the AVL
invariants since rotations are used to restore the invariant (details
left out).    The height of any node can increase by at most
one, so the height of the whole tree can increase by at most one.
\qed

\fullin{
\proof
For symmetry, here we only prove the case when $h(\Tl)>h(\Tr)$. In our algorithm, we will first go right down to a proper level then apply rebalance along the path upwards. We first prove that after applying rebalance at some node $v$, $v$ will be AVL-balanced, and the height of the subtree rooted at $v$ will increase at most $1$. Recall that we denote the node that we stop going down in $\Tl$ as $c$, its parent as $p$, and $p$'s parent as $g$. Suppose $h(\Tr)=h$. We prove this theorem by induction.

As shown in Fig. \ref{avlrebalance}, after appending $\Tr$ to $\Tl$ by the intermediate key $k$, $k$ is guaranteed to be balanced, and the subtree rooted at $k$ has a height of $h+1$. We discuss the three possible scenarios based on the height of $p$' left tree (i.e., $T_1$ rooted at $a$ as shown in Fig. \ref{avlrebalance}), i.e., $h-1$, $h$ and $h+1$. The only case that an unbalance will occur at $p$ is when the height of the left child of $p$ is $h-1$. We apply a left rotation here and the subtree originally rooted at $p$ is now rooted at $k$, and is now balanced. The whole subtree height is now $h+2$ instead of original $h+1$. If $p$'s left child has a height of $h$, the height of the original subtree is $h+1$. In this case, $p$ is now balanced, and the height is currently $h+2$. If the height of the left child of $p$ is $h+1$, $p$ is also naturally balanced, and the height of $p$ does not increase. Thus in all three scenarios the proposition holds at the very first linking of $\Tl$ and $\Tr$. If we have reached the root, the rebalance stops and the current tree is a valid AVL. Otherwise we call rebalance for $p$'s parent.

When we reach some interior nodes $v$, $v$ is originally balanced, and the height of the right child (denoted as $u$) of $v$ increases by at most $1$. If $v$ is still balanced, the whole height of $v$ is increased by at most $1$. Suppose originally the height of the right child of $v$ is $h$. The only case that $v$ is now unbalanced is when $v$'s left child's height is $h-1$, and the current height of $v$'s right child (i.e., $u$) is $h+1$. Similar as stated above, based on the height of $u$'s left child ($h_L$) and right child ($h_R$), we apply the following rotations:
\begin{center}
\begin{tabular}{c|c|c|c|c}
  \hline
  \multirow{2}{*}{$h_L$} & \multirow{2}{*}{$h_R$} & \multirow{2}{*}{Operation} & Original & Final\\
  &&&Height & Height\\
  \hline
  $h$ & $h$ &  Left rotation & $h+1$ & $h+2$\\
  \hline
  \multirow{2}{*}{$h$} & \multirow{2}{*}{$h-1$} & Right rotation  & \multirow{2}{*}{$h+1$} & \multirow{2}{*}{$h+1$} \\
  &&then left rotation&\\
  \hline
  $h-1$ & $h$ & Left rotation & $h+1$ & $h+1$ \\
  \hline
\end{tabular}
\end{center}
In each case the subtree originally at $v$ will be balanced and the height will increase by at most $1$. If we have reached the root, the rebalance stops and the current tree is a valid AVL. Otherwise we call rebalance for $v$'s parent.

As the process will guarantee that current node to be rebalanced will go up one level, we finally will reach the root and end rebalance process, hence get a valid AVL.
\qed
}

\para{Red-black Trees.}
Tarjan describes how to implement the \join{} function for red-black
trees~ \cite{Tarjan83}.  Here we describe a variant that does not
assume the roots are black (this is to bound the increase in rank
by \union{}).
The pseudocode is given in Figure \ref{fig:rbjoin}.
We store at every node its black height
$\bh(\cdot)$.  The first case is when $\bh(\Tr)=\bh(\Tl)$.  Then if
both $k(\Tr)$ and $k(\Tl)$ are black, we create red
$\node(\Tl,k,\Tr)$, otherwise we create black $\node(\Tl,k,\Tr)$.  The
second case is when $\bh(\Tr)<\bh(\Tl)=\bh$ (the third case is symmetric).
Similarly to AVL trees, \join{} follows the right spine of $\Tl$ until
it finds a black node $c$ for which $\bh(c) = \bh(\Tr)$.  It then
creates a new red \node$(c,k,\Tr)$ to replace $c$. Since both $c$
and $\Tr$ have the same height, the only invariant that can be
violated is the red rule on the root of $\Tr$, the new node, and its
parent, which can all be red.  In the worst case we may have three red
nodes in a row.  This is fixed by a single left rotation: if a black
node $v$ has $R(v)$ and $R(R(v))$ both red, we turn $R(R(v))$ black
and perform a single left rotation on $v$. %turning the new node black, and then performing a single left rotation on $v$.
The update is
illustrated in Figure~\ref{rbrebalance}.  The rotation, however can
again violate the red rule between the root of the rotated tree and
its parent, requiring another rotation.
% but expect the bottommost level, a triple-red issue does not happen.
The double-red issue might
proceed up to the root of $\Tl$. If the original root of $\Tl$ is red,
the algorithm may end up with a red root with a red child, in which
case the root will be turned black, turning $\Tl$ rank from $2\bh-1$
to $2\bh$.  If the original root of $\Tl$ is black, the algorithm
may end up with a red root with two black children, turning the rank
of $\Tl$ from $2\bh-2$ to $2\bh-1$.  In both cases the rank of the
result tree is at most $1+r(\Tl)$.

\hide{Similarly to AVL trees, the algorithm
follows the right spine of $\Tl$ until it finds a black node $c$ for
which $\bh(c) = \bh(\Tr)$.  It then creates a new red \node$(c,k,\Tr)$
to replace $c$.  Since both $c$ and $\Tr$ are black and have the same
height, the only invariant that can be violated is the red rule
between the new node and its parent, which can be red.  This is fixed
by turning the new node black, and then performing a single left
rotation on its grandparent to restore the black rule.  The update is
illustrated in Figure~\ref{rbrebalance}.  The rotation, however can
again violate the red rule between the root of the rotated tree and
its parent, requiring another rotation.  This
might proceed up to the root of $\Tl$, which is turned black if red,
therefore possibly incrementing the overall black height by one.
}

\begin{lemma}
\label{lem:RB}
For two RB trees $\Tl$ and $\Tr$, the RB \join{} algorithm works
correctly, runs with $O(|r(\Tl)-r(\Tr)|)$ work, and
returns a tree satisfying the red-black invariants and with rank
at most $1+\max(r(\Tl),r(\Tr))$.
\end{lemma}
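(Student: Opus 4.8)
The plan is to mirror the structure of the AVL argument: reduce to one side by symmetry, establish an inductive invariant for the one-sided recursion \texttt{joinRightRB} strong enough to be maintained, and then read off correctness, the rank bound, and the work bound. By symmetry I would only treat $\bh(\Tl)\ge \bh(\Tr)$ (equivalently $\lfloor r(\Tl)/2\rfloor \ge \lfloor r(\Tr)/2\rfloor$); the \texttt{joinLeftRB} branch is the mirror image. The equal-black-height base handled directly inside \texttt{joinRB} is immediate: when both roots are black a red parent has black height $\bh$ and rank $2\bh-1=1+\max(r(\Tl),r(\Tr))$, and when some root is red a black parent has black height $\bh+1$ and rank $2\bh=1+\max(r(\Tl),r(\Tr))$, so the only real work is analyzing \texttt{joinRightRB} in the strict case $\bh(\Tl)>\bh(\Tr)$.

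The heart of the argument is an invariant I would prove by induction on the height of $\Tl$ (equivalently the recursion depth): \texttt{joinRightRB}$(\Tl,k,\Tr)$ returns a tree $\rho$ whose in-order values are those of $\Tl,k,\Tr$, with $\bh(\rho)=\bh(\Tl)$ and the black rule satisfied everywhere, and whose red-rule violations are confined to the top of the right spine. Concretely I would show that if the root of $\Tl$ is black then $\rho$ is fully valid, while if the root of $\Tl$ is red then $\rho$ has a red root and every violation is a red-red edge on the right spine at the root or at the root's right child. The base case creates a red $\node(\Tl,k,\Tr)$ with $\bh=\bh(\Tr)=\bh(\Tl)$, valid unless the root of $\Tr$ is red (the single permitted violation). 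In the inductive step I would form $T'=\node(L',k',\rho')$ from the recursive result $\rho'$ and split on the color $c'$ of the current node: when $c'$ is black, the recolor-and-\texttt{rotateLeft} step (triggered exactly when $R(T')$ and $R(R(T'))$ are both red) recolors the offending red grandchild black and rebalances, and I would verify that this both preserves $\bh$ and removes all violations, so a black node always returns a valid tree; when $c'$ is red, the original red rule forces $L'$ and the former $R'$ to be black, and I would show the only new violations are the red-red edges at the root, to be repaired by the (necessarily black) parent one step up.

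The step I expect to be the main obstacle is pinning down exactly this residual-violation invariant, because of the ``three red nodes in a row'' case: when the recursion returns a node $\rho'$ that is itself red-red with its right child and is then attached under a red node, two consecutive red-red edges stack on the right spine. I would need to check that recoloring the single middle node black at the first black ancestor cancels \emph{both} red-red edges at once, and that the accompanying left rotation restores the black height; and I would need the observation that in $\Tl$ a red node always has a black parent, so such stacks never grow beyond length two and are always discharged one level up. A closely related subtlety is that at the \emph{top} level \texttt{joinRB} only recolors the root, so I must argue that the strict inequality $\bh(\Tl)>\bh(\Tr)$ guarantees the top node's right child is never the base case when that node is red, leaving at most one violation for \texttt{joinRB} to fix.

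Finally I would assemble the conclusions. Correctness of the in-order values is immediate since only node creation and rotations are used. For the rank bound I would combine the invariant with \texttt{joinRB}'s wrapper: if the root of $\Tl$ was black the result is valid with rank at most $2\bh(\Tl)-1=1+r(\Tl)$, and if it was red the final recolor of a red root to black raises the rank from $2\bh(\Tl)-1$ to $2\bh(\Tl)=1+r(\Tl)$; since $\bh(\Tl)>\bh(\Tr)$ forces $r(\Tl)=\max(r(\Tl),r(\Tr))$, the bound $1+\max(r(\Tl),r(\Tr))$ follows, with the equal case handled above. For the work bound I would note that the recursion only descends the right spine of $\Tl$ until the black height drops to $\bh(\Tr)$ and does $O(1)$ work per level; since the black height decreases at least once every two levels, the spine length, and hence the work, is $O(\bh(\Tl)-\bh(\Tr))=O(|r(\Tl)-r(\Tr)|)$.
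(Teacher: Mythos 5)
Your proposal follows essentially the same route as the paper: reduce to one side by symmetry, walk down the right spine of $\Tl$ to a black node of black height $\bh(\Tr)$, attach a red \node{}, and discharge the resulting double-red violations on the way back up with the recolor-and-rotate step, concluding that the rank grows by at most one and that the work is proportional to the spine length, hence to $|r(\Tl)-r(\Tr)|$ since the black height drops at least once every two spine nodes. You also correctly isolate the two points the paper glosses over: that a red--red--red stack is cancelled in one shot by blackening the middle node at the first black ancestor, and that the top-level recoloring in \texttt{joinRB} suffices because $\bh(\Tl)>\bh(\Tr)$ prevents the root's right child from being the base case.

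The one concrete flaw is in your stated induction hypothesis: ``if the root of $\Tl$ is black then $\rho$ is fully valid'' is contradicted by your own base case, where $\Tl$ is black (the test $r(\Tl)=\lfloor r(\Tr)/2\rfloor\times 2$ forces a black root) yet the returned red \node$(\Tl,k,\Tr)$ may have a red right child. The invariant you actually need, and implicitly use when you allow the red-rooted case to carry a violation at the root's \emph{right child}, is the weaker statement that for black-rooted $\Tl$ all violations are confined to the single edge from the root of $\rho$ to its right child (full validity does hold in the non-base recursive step, and that is the only form in which the black case is consumed at the top level). With that correction the induction closes and the rest of your argument --- the rank accounting $2\bh-1\to 2\bh$ versus $2\bh-2\to 2\bh-1$ and the work bound --- goes through as in the paper.
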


The proof is similar as Lemma \ref{lem:AVL}.
\hide{
\proofoutline{} Since the algorithm only visits nodes on the path from
the root to $c$, and only requires at most a single rotation per node
on the path, the overall work for the algorithm is proportional to the
depth of $c$ in $\Tr$.  This in turn is no more than twice the
difference in black height of the two trees since the black height
decrements at least every two nodes along the path.  Along with the
case when $\bh(\Tr) > \bh(\Tl)$, which is true by symmetry, this gives the stated
work bounds.  The algorithm returns a tree that satisfies the
RB invariants since rotations are used to restore the invariant on
every visited node. \qed}

\fullin{
\proof
For symmetry, here we only prove the case when $h(\Tl)>h(\Tr)$. We prove the proposition by induction.

As shown in Fig. \ref{rbrebalance}, after appending $\Tr$ to $\Tl$, if $p$ is black, the rebalance has been done, the height of each node stays unchanged. Thus the RB tree is still valid. Otherwise, $p$ is red, $p$'s parent $g$ must be black. By applying a left rotation on $p$ and $g$, we get a balanced RB tree rooted at $p$, except the root $p$ is red. If $p$ is the root of the whole tree, we change $p$'s color to black, and the height of the whole tree increases by 1. The RB tree is still valid. Otherwise, if the current parent of $p$ (originally $g$'s parent) is black, the rebalance is done here. Otherwise a similar rebalance is required on $p$ and its current parent. Thus finally we will either find the current node valid (current red node has a black parent), or reach the root, and change the color of root to be black. Thus when we stop, we will always get a valid RB tree. And the black height of the whole tree will increase by at most 1.
\qed
}

\begin{figure}[t!h!]
\centering
  \includegraphics[width=0.6\columnwidth]{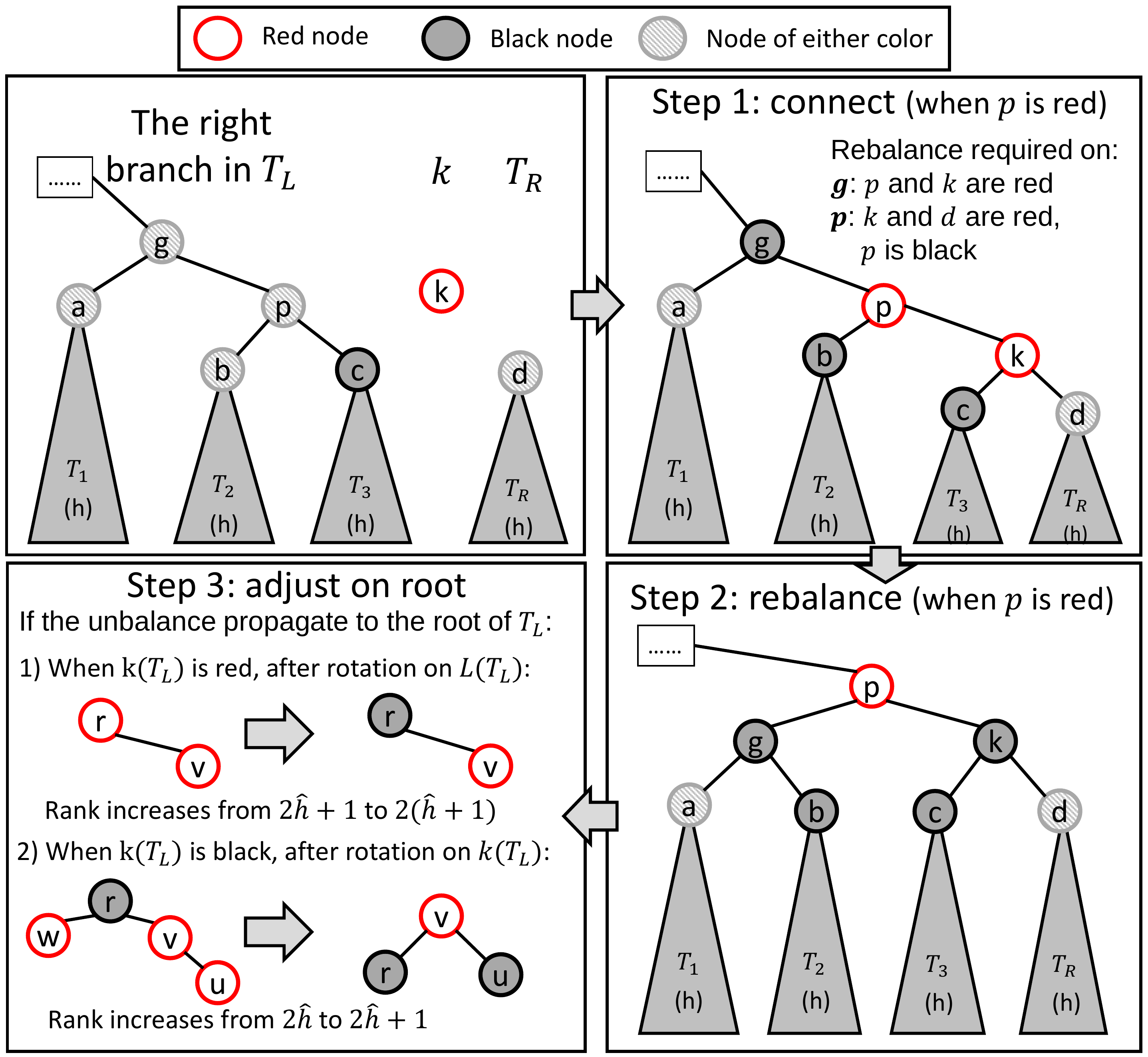}\\
  \caption{An example of \join{} on red-black trees ($\bh=\bh(\Tl) >
    \bh(\Tr)$).  We follow the right spine of $\Tl$ until we find a
    black node with the same black height as $\Tr$ (i.e., $c$).  Then
    a new red \node$(c, k, \Tr)$ is created, replacing $c$ (Step 1).
    The only invariant that can be violated is when either $c$'s
    previous parent $p$ or $\Tr$'s root $d$ is red.  If so, a left
    rotation is performed at some black node. Step 2 shows the
    rebalance when $p$ is red. The black height of the rotated subtree
    (now rooted at $p$) is the same as before ($h+1$), but the parent
    of $p$ might be red, requiring another rotation.  If the red-rule
    violation propagates to the root, the root is either colored red,
    or rotated left (Step 3).}
\label{rbrebalance}
\end{figure}

\para{Weight Balanced Trees.}
We store the weight of each subtree at every node.  The algorithm for
joining two weight-balanced trees is similar to that of AVL trees and
RB trees. The pseudocode is shown in Figure \ref{fig:wbjoin}.
The \texttt{like} function in the code returns true if the two input tree
sizes are balanced, and false otherwise.
If $\Tl$ and $\Tr$ have like weights the algorithm
returns a new \node$(\Tl,k,\Tr)$.  Suppose $|\Tr|\le|\Tl|$, the
algorithm follows the right branch of $\Tl$ until it reaches a node
$c$ with like weight to $\Tr$.  It then creates a new
$\node(c,k,\Tr)$ replacing $c$.  The new node will have weight
greater than $c$ and therefore could imbalance the weight of $c$'s
ancestors.  This can be fixed with a single or double
rotation (as shown in Figure~\ref{wbtree1}) at each node assuming
$\alpha$ is within the bounds given in Section~\ref{sec:prelim}.
\begin{figure}[!t!h]
  % Requires \usepackage{graphicx}
  \centering
  \includegraphics[width=0.6\columnwidth]{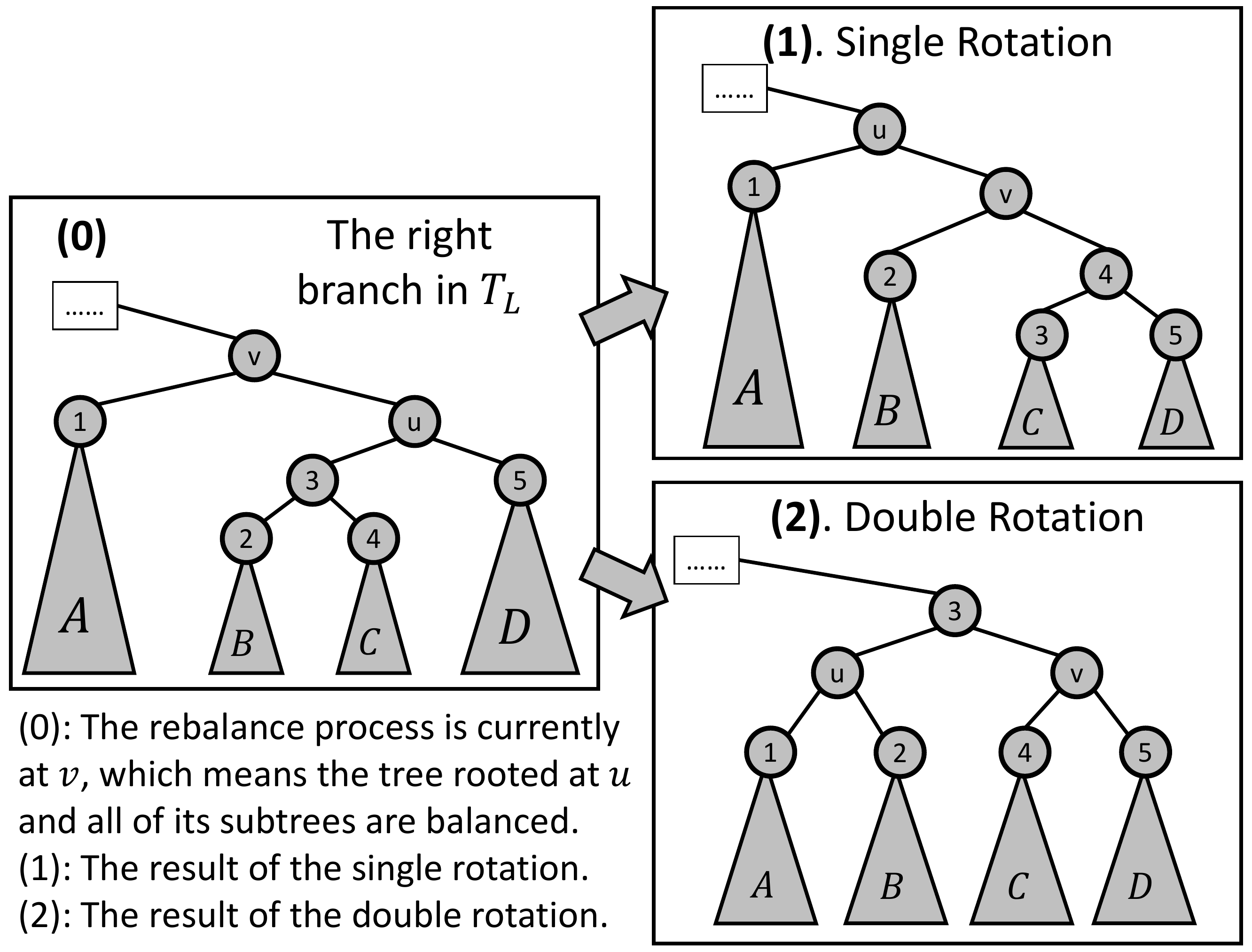}\\
  \caption{An illustration of single and double rotations possibly
    needed to rebalance weight-balanced trees.   In the figure the
    subtree rooted at $u$ has become heavier due to joining in $\Tl$ and its
    parent $v$ now violates the balance invariant.}
\label{wbtree1}
\end{figure}

\begin{lemma}
\label{lem:wbvalid}
For two $\alpha$ weight-balanced trees $\Tl$ and $\Tr$ and
$\alpha \le 1-\frac{1}{\sqrt{2}}\approx 0.29$, the weight-balanced
\join{} algorithm works correctly, runs with $O(|\log(w(\Tl)/w(\Tr))|)$ work, and returns a tree satisfying the $\alpha$ weight-balance invariant and with rank
at most $1+\max(r(\Tl),r(\Tr))$.
\end{lemma}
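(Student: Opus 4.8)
The plan is to prove the four claims of the lemma — correctness, the work bound, the weight-balance invariant, and the rank bound — reusing the skeleton of the AVL argument (Lemma~\ref{lem:AVL}) wherever the reasoning is insensitive to the particular balancing scheme, and isolating the genuinely weight-balanced content in the rotation analysis. Correctness follows exactly as for AVL: \join{} only rewrites the right spine of $\Tl$, descending until it reaches a node $c$ whose weight is \emph{like} that of $\Tr$, replacing $c$ by $\node(c,k,\Tr)$, and then applying a single or double rotation on the way back up. Since every key of $\Tr$ exceeds $k$, which exceeds every key on the visited spine, and since rotations preserve the in-order sequence, the returned tree has in-order values equal to those of $\Tl$, followed by $k$, followed by those of $\Tr$; the case $w(\Tr)>w(\Tl)$ is symmetric. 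The rank bound is then purely arithmetic and independent of the rotations: the output has size $|\Tl|+|\Tr|+1$, hence weight $w(\Tl)+w(\Tr)$. Assuming without loss of generality $w(\Tl)\ge w(\Tr)$, we have $w(\Tl)+w(\Tr)\le 2w(\Tl)$, so $r(\text{output})=\lceil\log_2(w(\Tl)+w(\Tr))\rceil-1\le\lceil\log_2(2w(\Tl))\rceil-1=\lceil\log_2 w(\Tl)\rceil=r(\Tl)+1\le 1+\max(r(\Tl),r(\Tr))$.

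For the work bound I would track how the weight shrinks along the right spine. By the weight-balance invariant the right child of any node has weight at most $(1-\alpha)$ times that of the node, so after $d$ descent steps the current weight is at most $(1-\alpha)^d\,w(\Tl)$. The descent continues only while the current node is too heavy to be \emph{like} $\Tr$, i.e.\ while its weight exceeds a fixed constant multiple of $w(\Tr)$ (determined by $\alpha$). Combining the geometric upper bound with this stopping threshold bounds the spine length by $d=O(\log_{1/(1-\alpha)}(w(\Tl)/w(\Tr)))=O(\log(w(\Tl)/w(\Tr)))$. Since each visited node costs $O(1)$ (it performs at most one double rotation and a constant number of weight comparisons), and the heavy-$\Tr$ case is symmetric, the total work is $O(|\log(w(\Tl)/w(\Tr))|)$, which equals the span since the algorithm is sequential.

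The heart of the proof, and the step I expect to be the main obstacle, is re-establishing the $\alpha$-weight-balance invariant after the rotations. I would argue by induction along the spine from $c$ up to the root. The base case is immediate, since $\node(c,k,\Tr)$ is balanced because $c$ and $\Tr$ have like weights. For the inductive step, consider a node $v$ whose right child was replaced by the recursively joined tree $T'$, now heavier than the original child. Unlike an insertion, where a subtree grows by a single element, here $T'$ can be substantially heavier than the node it replaces, so the difficulty is to pin down the exact range of weight ratios that can arise at $v$ — carrying through the induction a quantitative bound on how much the subtree weight can have grown — and to verify that the three cases of the pseudocode (no rotation when $l$ and $T'$ are like, a single left rotation, and a double rotation) are exhaustive and that each restores $v$, together with the nodes it touches, to the balanced band. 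This is exactly where the hypothesis $\alpha\le 1-\frac{1}{\sqrt{2}}$ enters: it is the classical range in which single and double rotations suffice to rebalance BB$[\alpha]$ trees, and the plan is to reduce the inequalities arising in the join setting to (suitably strengthened versions of) that condition, simultaneously checking that the controlled weight increase propagated by the induction keeps the parent of $v$ within reach of one further rotation.
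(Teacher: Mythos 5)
Your treatment of correctness, the work bound, and the rank bound is sound. The rank argument in particular is clean and fully self-contained: since the output has weight exactly $w(\Tl)+w(\Tr)\le 2\max(w(\Tl),w(\Tr))$ regardless of what rotations occur, the ceiling-of-log definition of rank immediately gives an increase of at most one. The geometric-decay argument for the spine length is likewise correct and matches the intended reasoning.

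However, on the balance invariant --- which you yourself identify as the heart of the lemma --- your proposal describes what a proof would have to do rather than doing it, and this is precisely where all of the content of the paper's proof lives. Two things are missing. First, the quantitative bound you say you would ``carry through the induction'': the paper obtains it not by induction but directly from the algorithm's descent condition. In its notation, if the rebalanced right subtree of $v$ decomposes into $A,B,C$ with sizes $a,b,c$ and $v$'s other child is $D$ with size $d$, then writing $x$ for the size of the inserted tree and $y$ for the size of the original child it was merged into, the facts that the descent passed below $v$ (so $x<\frac{1}{\beta}(d+y)$ with $\beta=\frac{1}{\alpha}-1$) and that $y$ was originally balanced against $d$ yield $a+b+c=x+y<(1+\beta+\frac{1}{\beta})\,d$. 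This upper bound, paired with the lower bound $a+b+c>\beta d$ that holds whenever $v$ is actually out of balance, is what makes the subsequent case analysis close; without stating and deriving it, the induction you sketch cannot be completed. Second, the case analysis itself is nontrivial and is where the hypothesis $\alpha\le 1-\frac{1}{\sqrt 2}$ (equivalently $\beta\ge 1+\sqrt 2$) is actually consumed: one must split on whether $B+C$ is balanced with, too light for, or too heavy for $D$, show the ``too light'' case is impossible, and in the remaining cases verify roughly a dozen inequalities certifying that a single or double rotation restores every affected pair to the balanced band. Appealing to ``the classical range in which single and double rotations suffice for BB$[\alpha]$ trees'' does not discharge this, because the classical analysis assumes the subtree grew by a single insertion, whereas here it grew by an entire tree; the whole point of the bound $a+b+c<(1+\beta+\frac{1}{\beta})d$ is to recover a usable substitute for the single-insertion assumption. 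As written, your proposal has a genuine gap at this step.
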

The proof is shown in the Appendix.

%The proof is intuitively similar as the proof stated in \cite{weightbalanced,blum1980average}, which proved that when $\frac{2}{11}\le \alpha \le 1-\frac{1}{\sqrt{2}}$, the rotation will rebalance the tree after one single insertion. In fact, in our \join{} algorithm, the ``inserted'' subtree must be along the left or right spine, which makes the analysis much easier.
%of the leaf weight is increased to at most twice of its original weight. In our version, the weight of one subtree (in fact, the subtree must be along the left or right spine, which makes the analysis much easier) will be increased to at most $\frac{1}{\alpha}$ times of its original weight. Using similar proof as in \cite{weightbalanced,blum1980average}, the rotation still works.

\begin{comment}
Notice that this upper bound is the same as the restriction on $\alpha$ to yield a valid weighted-balanced tree when inserting a single node.% to make the balance algorithm work when inserting a single node.
Then we can induce that when the rebalance process reaches the root, the new weight-balanced tree is valid.
\end{comment}

%Notice that the upper bound is the same as the restriction to guarantee the balance algorithm to work when inserting a single node. The proof of the theorem is shown in the Appendix.

\para{Treaps.}  The treap \join{} algorithm (as in Figure
\ref{fig:treapjoin}) first picks the key with the highest priority
among $k$, $k(\Tl)$ and $k(\Tr)$ as the root.  If $k$ is the root then
the we can return \node$(\Tl,k,\Tr)$.  Otherwise, WLOG, assume
$k(\Tl)$ has a higher priority.  In this case $k(\Tl)$ will be the
root of the result, $L(\Tl)$ will be the left tree, and $R(\Tl)$, $k$
and $\Tr$ will form the right tree.  Thus \join{} recursively calls
itself on $R(\Tl)$, $k$ and $\Tr$ and uses result as $k(\Tl)$'s right
child.  When $k(\Tr)$ has a higher priority the case is symmetric.
The cost of \join{} is therefore the depth of the key $k$ in the
resulting tree (each recursive call pushes it down one level).  In treaps
the shape of the result tree, and hence the depth of $k$, depend only
on the keys and priorities and not the history.  Specifically, if a
key has the $t^{th}$ highest priority among the keys, then its
expected depth in a treap is $O(\log t)$ (also w.h.p.).  If it is the
highest priority, for example, then it remains at the root.

\hide{The treap \join{} algorithm first creates the \node$(\Tl,k,\Tr)$,
and then as long as one of the children of $k$ has a higher priority
than $k$, rotates the tree around $k$ so the highest priority key is at
the root.  The cost of \join{} is therefore the depth of the key $k$
in the result tree (each rotation pushes it down one level).  In
treaps the shape of the result tree, and hence the depth of $k$, depend
only on the keys and priorities and not the history.  Specifically, if
a key has the $t^{th}$ highest priority among the keys, then its
expected depth in a treap is $O(\log t)$.  If it is the highest
priority, for example, then it remains at the root.}

\begin{lemma}
\label{lem:treap}
For two treaps $\Tl$ and $\Tr$, if the priority of $k$ is the $t$-th
highest among all keys in $\Tl \cup \{k\} \cup \Tr$, the treap \join{}
algorithm works correctly, runs with $O(\log t+1)$ work
in expectation and w.h.p., and returns a tree satisfying the treap
invariant with rank at most $1 +\max(r(\Tl),r(\Tr))$.
\end{lemma}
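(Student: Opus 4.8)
The plan is to prove the three assertions of the lemma—correctness, the rank bound, and the $O(\log t+1)$ work bound—separately, in increasing order of difficulty, with the probabilistic analysis of the depth of $k$ as the crux. Throughout I read ``higher priority'' as ``closer to the root in the heap order,'' consistent with the statement that a highest-priority key remains at the root. For correctness I would induct on $|\Tl|+|\Tr|$, using the fact that for a fixed key set with distinct priorities there is a \emph{unique} binary search tree that is simultaneously a heap on the priorities. In the base case $k$ outranks both $k(\Tl)$ and $k(\Tr)$; since those are the top-priority keys of $\Tl$ and $\Tr$, $k$ is the global maximum, so $\node(\Tl,k,\Tr)$ respects the heap order, and its in-order values are correct because every key of $\Tl$ precedes $k$ which precedes every key of $\Tr$. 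In the inductive step, WLOG $k(\Tl)$ outranks both $k(\Tr)$ and $k$, so $k(\Tl)$ is the global maximum and must be the root; its left subtree is $L(\Tl)$ unchanged, and its right subtree must hold $R(\Tl)\cup\{k\}\cup \Tr$, which is exactly $\join(R(\Tl),k,\Tr)$ by the induction hypothesis. Hence the output is the unique treap on the combined keys.

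The rank bound is immediate from the rank definition and is independent of the treap structure. The result has weight $w(\Tl)+w(\Tr)$, and assuming WLOG $w(\Tl)\ge w(\Tr)$ we get $\log_2(w(\Tl)+w(\Tr)) \le 1 + \log_2 w(\Tl)$, so $\lceil \log_2(w(\Tl)+w(\Tr))\rceil - 1 \le \lceil \log_2 w(\Tl)\rceil = 1 + r(\Tl) = 1 + \max(r(\Tl),r(\Tr))$.

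For the work, the cost is $O(d+1)$ where $d$ is the depth of $k$ in the output, since each recursive call descends one level and places exactly one node above $k$, so everything reduces to bounding $d$. I would first observe that every ancestor of $k$ has higher priority than $k$, and that whether a higher-priority key is an ancestor of $k$ is unaffected by deleting keys of priority below $k$'s; consequently $k$'s ancestor set, and hence $d$, is identical in the full treap and in the treap built from only the $t$ keys of priority at least that of $k$. In this restricted treap $k$ is the minimum-priority element and therefore a leaf, with $a$ keys to its left and $b$ to its right, $a+b=t-1$. A left key is an ancestor of $k$ iff, read from $k$ outward, its priority is a running maximum; so the number of left ancestors is the number of records (prefix maxima) in a uniformly random permutation of $a$ priorities, and symmetrically on the right. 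Thus $d$ equals a sum of $t-1$ mutually independent record indicators with expectation $H_a+H_b \le 2H_{t-1} = O(\log t)$, where $H_m=\sum_{i=1}^m 1/i$, giving the expected bound.

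Since $d$ is a sum of independent $0/1$ indicators, a Chernoff bound yields concentration, from which I would conclude $d=O(\log t)$ w.h.p., supplemented where $t$ is small by the unconditional height bound $h=O(\log n)$ w.h.p. so that the failure probability is $n^{-c}$ in the total size $n=|\Tl|+|\Tr|+1$. I expect this last part to be the main obstacle: the delicate points are establishing the deletion-invariance of $k$'s ancestor set and its exact reduction to records, and then pinning down the precise sense of ``w.h.p.'' so that the Chernoff tail delivers the bound uniformly in $t$. By comparison the correctness and rank arguments are routine.
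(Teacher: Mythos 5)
Your proposal is correct and follows essentially the same route the paper takes: the paper does not spell out a proof of this lemma, but its stated justification is exactly your chain of reasoning---the cost equals the depth of $k$ in the (unique, history-independent) result treap, and the depth of the $t$-th-highest-priority key is $O(\log t)$ in expectation and w.h.p. You supply the details the paper leaves implicit (the reduction to the $t$ higher-priority keys and the prefix-maxima/record analysis), which is the standard argument for that depth bound.
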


From the above lemmas we can get the following fact for \join.

\begin{theorem}
\label{thm:main}
For AVL, RB and WB trees \join$(\Tl,k,\Tr)$ does $O\left(|r(\Tl)-r(\Tr)|\right)$
work. For treaps \join{} does $O(\log t)$ work in expectation if $k$ has the $t$-th
highest priority among all keys.
For AVL, RB, WB trees and treaps, \join{} returns a tree $T$ for which the rank satisfies
$\max\left(r(\Tl),r(\Tr)\right)\le r(T)\le 1+\max(r(\Tl),r(\Tr))$.
\end{theorem}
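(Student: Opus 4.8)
The plan is to derive Theorem~\ref{thm:main} as a consolidation of the four per-scheme results (Lemmas~\ref{lem:AVL}--\ref{lem:treap}), together with two pieces of bookkeeping: translating each scheme's native cost measure into the uniform rank $r(\cdot)$ defined in Section~\ref{sec:prelim}, and establishing the lower bound $r(T)\ge\max(r(\Tl),r(\Tr))$, which is the one claim not already contained in the lemmas.

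First I would dispatch the work bounds and the upper rank bound, which follow immediately from the lemmas after rewriting the native measures as ranks. For AVL trees $r(T)=h(T)-1$, so $|h(\Tl)-h(\Tr)|=|r(\Tl)-r(\Tr)|$ and Lemma~\ref{lem:AVL} gives the bound verbatim. For RB trees Lemma~\ref{lem:RB} is already stated in ranks. For WB trees $r(T)=\lceil\log_2 w(T)\rceil-1$, so $r(T)<\log_2 w(T)\le r(T)+1$; hence $|\log(w(\Tl)/w(\Tr))|\le|r(\Tl)-r(\Tr)|+1$, and the $O(|\log(w(\Tl)/w(\Tr))|)$ bound of Lemma~\ref{lem:wbvalid} becomes $O(|r(\Tl)-r(\Tr)|)$ up to the usual additive constant. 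For treaps Lemma~\ref{lem:treap} already gives $O(\log t+1)$ expected (and w.h.p.) work. The upper bound $r(T)\le 1+\max(r(\Tl),r(\Tr))$ is asserted in each of the four lemmas, so nothing further is needed there.

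The only genuinely new part is the lower bound $r(T)\ge\max(r(\Tl),r(\Tr))$. Since the in-order values of $T$ are the concatenation of those of $\Tl$, $k$, and $\Tr$, we always have $|T|=|\Tl|+|\Tr|+1\ge\max(|\Tl|,|\Tr|)$. For WB trees and treaps the rank $\lceil\log_2 w(\cdot)\rceil-1$ is monotonically non-decreasing in size, so this size inequality immediately yields $r(T)\ge\max(r(\Tl),r(\Tr))$. For AVL and RB trees the rank is height-based rather than a function of size, so size monotonicity is unavailable; here I would instead use the structural observation already present in the proofs of Lemmas~\ref{lem:AVL} and~\ref{lem:RB}, namely that the rebalancing sweep up the spine changes the height (resp.\ black height) of each processed subtree by $0$ or $+1$ and never decreases it. Consequently, in the case $r(\Tl)>r(\Tr)$, the modified (taller) child of the root has height/black height no smaller than before, so the measure at the root cannot drop, giving $h(T)\ge h(\Tl)$ (resp.\ $\bh(T)\ge\bh(\Tl)$) and hence $r(T)\ge r(\Tl)=\max(r(\Tl),r(\Tr))$; the symmetric case is identical. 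Combining with the upper bound yields the stated sandwich.

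I expect the main obstacle to be precisely this lower bound for AVL and RB trees. Because their ranks depend on height, one cannot simply invoke size monotonicity, and one must verify that the join procedure is height non-decreasing along the entire rebalancing path -- in particular that the single and double rotations restore, but never shrink, the subtree measure, and that for RB the recoloring bookkeeping at the root (which can lift the rank from $2\bh-2$ to $2\bh-1$, or from $2\bh-1$ to $2\bh$) remains consistent with $r(T)\ge r(\Tl)$. All of this is already latent in the detailed arguments behind Lemmas~\ref{lem:AVL} and~\ref{lem:RB}, so the effort lies mainly in extracting the monotonicity statement cleanly rather than in proving anything substantially new.
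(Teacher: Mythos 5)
Your proposal is correct and follows essentially the same route as the paper, which likewise obtains Theorem~\ref{thm:main} purely by consolidating Lemmas~\ref{lem:AVL}--\ref{lem:treap} (the paper offers no further argument beyond ``from the above lemmas''). In fact you go slightly beyond the paper by explicitly proving the lower bound $r(T)\ge\max(r(\Tl),r(\Tr))$ --- via size monotonicity of the rank for WB trees and treaps, and via height/black-height monotonicity of the rebalancing sweep for AVL and RB trees --- which is the one part of the sandwich the lemmas do not state and the paper leaves implicit; your observation that the size argument fails for the height-based ranks and must be replaced by the structural one is exactly the right care to take.
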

\par~\par

%This could be seen by the theorems and statements above.% For general height of RB tree, it is at most $2+\max(h(\Tl),h(\Tr))$.

\hide{\begin{figure}[!h!t]
\begin{lstlisting}[frame=lines]
split$(T,k)$ =
  if $T$ = Leaf then (Leaf,false,Leaf)
  else $(L,m,R)$ = expose$(T)$;
    if $k = m$ then ($L$,true,$R$)
    else if $k < m$ then
      $(L_L,b,L_R)$ = split$(L,k)$;
      ($L_L$,$b$,join$(L_R,m,R)$)
    else $(R_L,b,R_R)$ = split$(R,k)$;
      (join$(L,m,R_L),b,R_R)$  @\vspace{.1in}@
splitLast($T$) =
  $(L,k,R)$ = expose$(T)$;
  if $R$ = Leaf then ($L,k$)
  else $(T',k')$ = splitLast($R$);
    (join($L,k,T'$),$k'$)@\vspace{.1in}@
join2($\Tl$,$\Tr$) =
  if $\Tl$ = $\leaf$ then $\Tr$
  else $(\Tl',k)$ = splitLast($\Tl$);
    join($\Tl',k,\Tr$)@\vspace{.1in}@
insert$(T,k)$ =
  $(T_L,m,T_R)$ = split$(T,k)$;
  join$(\Tl,k,\Tr)$@\vspace{.1in}@
delete$(T,k)$ =
  $(\Tl,m,\Tr)$ = split$(T,k)$;
  join2$(\Tl,\Tr)$
\end{lstlisting}
\caption{Pseudocodes of some functions implemented by \join.} \label{fig:pseudocode}
\end{figure}
}
\hide{
forall($T$,$f$) = @\label{line:forall}@
  if $T$ = Leaf then Leaf
  else ($L$,$(k,v)$,$R$) = expose($T$);
    join(forall($L$,$f$),$(k,f(v))$,forall($R$,$f$))@\vspace{.1in}@
filter($T$,$f$) =
  if $T$ = Leaf then Leaf
  else ($L$,$k$,$R$) = expose(T);
    if $f(k)$ then join(filter($L$,$f$),$k$,filter($R$,$f$))
    else join2(filter($L$,$f$),filter($R$,$f$))@\vspace{.1in}@
range$(T,l,r)$ =
  ($T_1$,$T_2$) = split($T,l$);
  ($T_3$,$T_4$) = split($T_2$,$r$);
  $T_3$@\label{line:last}@
join2($\Tl$,$\Tr$) =
  if $\Tl$ = $\leaf$ then $\Tr$
  else if $\Tr$ = Leaf then $\Tl$
  else if $r(\Tr)\ge r(\Tl)$ then
    ($L_R,k_R,R_R$) = expose($\Tr$);
    join(join2($\Tl$,$L_R$),$k_R$,$R_R$)
  else ($L_L,k_L,R_L$) = expose($\Tl$);@\label{line:join2expose}@
    join($L_L$,$k_L$,join2($R_L$,$T_R$))@\label{line:join2join}@ @\vspace{.1in}@
}

\section{Other Functions Using JOIN}
\label{sec:operations}
%The \join{} function, as a subroutine, has been used and studied by many
%researchers and programmers to implement more general set operations.
In this section, we describe algorithms for various functions that use
just \join{}.  The algorithms are generic across balancing schemes.  The pseudocodes for the algorithms in this section is shown in Figure \ref{fig:unioncode}.
%Beyond \join{} the only access to the trees we make use of
%is through \expose{} and $r(\cdot)$, which only read the root. %The
%main set operations, which are \union{}, \intersect{} and \difference{},
%are optimal (or known as \emph{efficient}) in work.
\hide{
Although similar algorithms has been introduced in previous research,
to our best knowledge, there has been no detailed analysis of the efficiency of those
algorithms other than \cite{BR98}, which analyzed the work and span only for treaps.
Besides their algorithm is different from ours in the sense that they always use
the heavier root as the pivot (see details later).
Based on the algorithms and theorems in this section, in Section \ref{sec:proof}, we provide a general
proof, which is very different from and much simpler than \cite{BR98}, that works for all
the four balancing schemes mentioned in this paper.}

\para{Split.}
For a BST $T$ and key $k$, \split{}$(T,k)$ returns a triple $(\Tl,b,\Tr)$, where $\Tl$ ($\Tr$) is a tree containing all keys in $T$ that are less (larger) than $k$, and $b$ is a flag indicating whether $k \in T$.  The algorithm first searches for $k$ in $T$, splitting the tree along the path into three parts: keys to the left of the path, $k$ itself (if it exists), and keys to the right.  Then by applying \join{}, the algorithm merges all the subtrees on the left side (using keys on the path as intermediate nodes) from bottom to top to form $\Tl$, and merges the right parts to form $\Tr$.
Figure~\ref{fig:split} gives an example.
%The pseudocode is shown in Figure \ref{fig:insertandsplit}.

\begin{figure}[t!h!]
\centering
  \includegraphics[width=0.9\columnwidth]{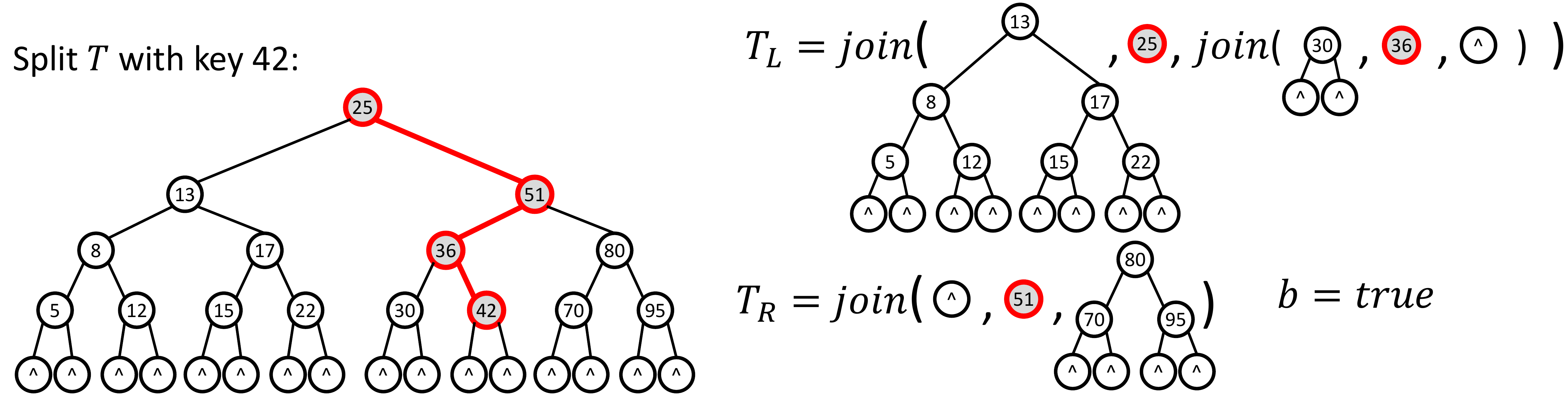}\\
  \caption{An example of \split{} in a BST with key $42$. We first search for $42$ in the tree and split the tree by the searching path, then use \join{} to combine trees on the left and on the right respectively, bottom-top. }
\label{fig:split}
\end{figure}

\begin{theorem}
\label{thm:split}
The work of \split{}$(T,k)$ is $O(\log |T|)$ for all balancing schemes described in Section \ref{sec:paradigm} (w.h.p. for treaps). The two resulting trees $\Tl$ and $\Tr$ will have rank at most $r(T)$.
\end{theorem}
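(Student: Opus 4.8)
The plan is to prove the two claims in order, establishing the rank bound first since the work bound relies on it.

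\textbf{Rank bound.} I would show $r(\Tl)\le r(T)$ and $r(\Tr)\le r(T)$ by structural induction on $T$. For WB trees and treaps this is immediate: the rank depends only on the size (weight), and since \split{} only distributes the keys of $T$ between $\Tl$ and $\Tr$ we have $|\Tl|,|\Tr|\le |T|$, hence $r(\Tl),r(\Tr)\le r(T)$. For AVL and RB trees the rank depends on height, so I argue inductively on $T=\node(L,m,R)$. When $k=m$ the outputs are $L$ and $R$, both subtrees of $T$, so monotonicity of rank gives the bound. When $k<m$ the left output is the left output of the recursive call on $L$, whose rank is $\le r(L)\le r(T)$ by the inductive hypothesis; the right output is $\join(L_R,m,R)$, and by Theorem~\ref{thm:main} its rank is at most $1+\max(r(L_R),r(R))$. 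The key point is that in AVL and RB trees every child has rank strictly smaller than its parent (one checks $r(L),r(R)\le r(T)-1$ directly from the definition of rank for each scheme), so $\max(r(L_R),r(R))\le r(T)-1$ and the join adds at most one, keeping the rank $\le r(T)$. The case $k>m$ is symmetric.

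\textbf{Work bound.} \split{} descends a single search path for $k$ of length $O(h(T))=O(\log|T|)$ (w.h.p.\ for treaps). On the way back up it assembles $\Tl$ by a sequence of joins at the nodes where the search turns right: if these nodes, from deepest to shallowest, are $\node(L_i,m_i,R_i)$, then $\Tl_i=\join(L_i,m_i,\Tl_{i-1})$, where $\Tl_{i-1}$ is the left part returned from splitting $R_i$; $\Tr$ is assembled symmetrically. For AVL, RB and WB trees each such join costs $O(|r(L_i)-r(\Tl_{i-1})|)$, so the total left work is $O\left(\sum_i |r(L_i)-r(\Tl_{i-1})|\right)$, which I would bound by telescoping. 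The obstacle is that these rank differences need not be monotone: the accumulated tree $\Tl_{i-1}$ can have larger rank than the next piece $L_i$, so the naive telescoping fails. I would therefore split into two cases. When $r(L_i)\ge r(\Tl_{i-1})$, Theorem~\ref{thm:main} gives $r(\Tl_i)\ge r(L_i)$, so the cost is at most $r(\Tl_i)-r(\Tl_{i-1})$; since the ranks $r(\Tl_i)$ are nondecreasing in $i$, these terms sum to at most $r(\Tl)\le r(T)$. When $r(L_i)<r(\Tl_{i-1})$, I use the rank bound just proved to get $r(\Tl_{i-1})\le r(R_i)$, together with the balance invariant, which forces the two children of a node to have ranks within an additive constant, i.e.\ $r(R_i)\le r(L_i)+O(1)$; hence each such term costs only $O(1)$, and there are at most $O(\log|T|)$ of them along the path. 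Summing, the left work is $O(r(T))+O(\log|T|)=O(\log|T|)$, and symmetrically for the right, giving $O(\log|T|)$ overall.

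\textbf{Treaps.} For treaps the join cost is governed by priority rank rather than rank difference, so the telescoping above does not apply directly. Here I would instead bound the path length by $O(\log|T|)$ w.h.p.\ using the w.h.p.\ height bound for treaps, and charge the cost of the joins to the right and left spines of the resulting trees $\Tl$ and $\Tr$, which are themselves treaps on subsets of the keys and hence have spines of length $O(\log|T|)$ w.h.p. The main difficulty throughout is this non-monotonicity of the rank differences in the telescoping, and adapting the charging argument to the probabilistic cost model of treaps; everything else is a routine induction or an appeal to Theorem~\ref{thm:main} and the per-scheme balance invariants.
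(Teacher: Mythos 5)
Your argument for AVL, RB and WB trees is correct and is essentially the paper's proof: both bound the total join work along the search path by a telescoping sum of rank differences, using the fact from Theorem~\ref{thm:main} that each \join{} raises the rank by at most one above the larger argument. The only real difference is the bookkeeping for the non-telescoping terms. Writing $A_{i-1}$ for the accumulated left tree, the paper asserts that the ranks of the subtrees hanging off the path are nondecreasing from bottom to top, so $r(A_{i-1})$ overshoots $r(L_i)$ by at most one and every term is bounded by a telescoping difference plus $O(1)$; you derive the same per-level $O(1)$ overshoot from the rank bound $r(A_{i-1})\le r(R_i)$ together with the sibling balance $r(R_i)\le r(L_i)+O(1)$. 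Both routes give $O(r(T))+O(\log|T|)=O(\log|T|)$, and both need the rank bound $r(\Tl),r(\Tr)\le r(T)$, which you prove by the same induction the paper uses (the paper folds it into the last paragraph of its proof).

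The one step that does not hold up as stated is your treap argument. The nodes visited by an intermediate join $\join(L_i,m_i,A_{i-1})$ are the high-priority nodes on the right spine of $L_i$ and the left spine of $A_{i-1}$; after subsequent joins these need not lie on a spine of the final $\Tl$ or $\Tr$, so ``charging the join cost to the spines of the resulting trees'' is not a well-founded accounting. No charging is needed, however: in every join performed by \split{}, the middle key $m_i$ is the key of a node on the search path, and both $L_i$ and $A_{i-1}$ consist entirely of keys that were descendants of that node in the original treap, so $m_i$ has the highest priority among all keys involved. Hence $t=1$ in Lemma~\ref{lem:treap} and each join costs $O(1)$; summing over the $O(\log|T|)$ (w.h.p.) path nodes gives the bound. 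This is exactly the observation the paper uses, and it closes the only genuine gap in your proposal.
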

\begin{proof}
We only consider the work of joining all subtrees on the left side.
The other side is symmetric. Suppose we have $l$ subtrees on the left side, denoted from bottom to top as $T_1,T_2,\dots T_l$.  We have that $r(T_1)\le r(T_2)\le\dots\le r(T_l)$. As stated above, we consecutively join $T_1$ and $T_2$ returning $T'_2$, then join $T'_2$ with $T_3$ returning $T'_3$ and so forth, until all trees are merged. The overall work of \split{} is the sum of the cost of $l-1$ \join{} functions.

For AVL trees, red-black trees and weight-balanced trees, recall Theorem \ref{thm:main} that we have $r(T'_{i}) \le r(T_{i})+1$, so $r(T'_i)\le r(T_i)+1\le r(T_{i+1})+1$. According to Theorem \ref{thm:main}, the work of a single operation is $O(|r(T_{i+1})-r(T'_i)|)$. The overall complexity is $\sum_{i=1}^{l} |r(T_{i+1})-r(T'_i)|\le \sum_{i=1}^{l} r(T_{i+1})-r(T'_i)+2=O(r(T))=O(\log |T|)$.
\hide{
\begin{align*}
&\sum_{i=1}^{l} |h(T_{i+1})-h(T'_i)|\\
\le&\sum_{i=1}^{l} h(T_{i+1})-h(T'_i)+4\\
%\le& 4l+\sum_{i=1}^{l} (h(T_{i+1})-h(T_i))\\
\le& 5h(T)= O(h(T))
\end{align*}
}

For treaps, each \join{} uses the key with the highest priority since
the key is always on a upper level.  Hence by Lemma \ref{lem:treap}, the
complexity of each \join{} is $O(1)$ and the work of split is at
most $O(\log |T|)$ w.h.p.

Also notice that when getting the final result $\Tl$ and $\Tr$, the last step is a \join{}
on two trees, the larger one of which is a subtree of the original $T$. Thus the rank of
the two trees to be joined is of rank at most $r(T)-1$, according to Theorem \ref{thm:main}
we have $r(\Tl)$ and $r(\Tr)$ at most $r(T)$.
\end{proof}

%Notice that in the process of splitting a treap, when we call \join{}, the key is always of a higher priority since it is on a upper level than the two subtrees. Thus \join{} could be done in $O(1)$ time.

\para{Join2.}
\joinTwo{}$(\Tl,\Tr)$ returns a binary tree for which the in-order
values is the concatenation of the in-order values of the binary trees
$\Tl$ and $\Tr$ (the same as \join{} but without the middle key).
For BSTs, all keys in $\Tl$ have to be less than keys in $\Tr$.
\joinTwo{} first finds the last element $k$ (by following the right spine) in $\Tl$ and
on the way back to root, joins the subtrees along the path, which is similar to \split{}
$\Tl$ by $k$. We denote the result of dropping $k$ in $T_L$ as $\Tl'$. Then \join($\Tl',k,\Tr$)
is the result of \joinTwo. Unlike \join{}, the work of \joinTwo{} is
proportional to the rank of both trees since both \split{} and \join{} take at most logarithmic work.
%log of size of both trees since both \split{} and \join{} take at most logarithmic work..

\hide{$(\Tl,\Tr)$ returns a binary tree for which the in-order
values is the concatenation of the in-order values of the binary trees
$\Tl$ and $\Tr$ (the same as \join{} but without the middle value).
For BSTs, all keys in $\Tl$ have to be less than keys in $\Tr$.  The
algorithm is recursive.  Assume $\Tl$ has a higher rank than $\Tr$, it
then exposes the larger tree $\Tl$ into $(L_L,k_L,R_L)$
(line~\ref{line:join2expose}), recursively applies
\joinTwo{}$(R_L,\Tr)$ to get $T'$, and then applies
\join($L_L,k_L,T'$) to merge the three parts
(line~\ref{line:join2join}).
  Unlike \join{}, the work of \joinTwo{} is
proportional to the log of size of both trees.
}
\begin{theorem}
\label{thm:join2}
The work of \joinTwo{}$(\Tl,\Tr)$ is $O(r(\Tl) + r(\Tr))$ for all balancing schemes described in Section \ref{sec:paradigm} (bounds are w.h.p for treaps). %For AVL trees, RB trees and WB trees, $\max(r(\Tl),r(\Tr))\le r(T) \le 1+\max(r(\Tl),r(\Tr))$. %, where $c$ is a constant.
\end{theorem}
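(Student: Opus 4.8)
The plan is to follow the two-phase structure of the pseudocode: a call to \efunc{splitLast}$(\Tl)$ that strips off the rightmost key $k$ and returns the remaining tree $\Tl'$, followed by a single \join$(\Tl',k,\Tr)$. I would bound the work of each phase separately and then add them, exactly as the informal remark preceding the theorem suggests (``both \split{} and \join{} take at most logarithmic work'').

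First I would analyze \efunc{splitLast}$(\Tl)$. This recursion only walks down the right spine of $\Tl$ and, on the way back up, joins the left subtrees hanging off that spine using the spine keys as intermediate values. This is precisely the behavior of \split$(\Tl,k)$ when $k$ is the maximum key of $\Tl$: the search path is then exactly the right spine and there is nothing to accumulate on the right. Hence its cost is a special case of Theorem~\ref{thm:split}. Concretely, writing the left subtrees along the spine from bottom to top as $T_1,\dots,T_l$ (of non-decreasing rank) and the running accumulation as $T'_i$, Theorem~\ref{thm:main} gives $r(T'_i)\le r(T_i)+1\le r(T_{i+1})+1$, so the successive \join{} costs telescope to $O(r(\Tl))=O(\log|\Tl|)$ (with high probability for treaps, where each spine \join{} uses the highest-priority key and hence costs $O(1)$). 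The same output-rank bound from Theorem~\ref{thm:split} gives $r(\Tl')\le r(\Tl)$.

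Second I would bound the closing \join$(\Tl',k,\Tr)$. For AVL, RB, and WB trees, Theorem~\ref{thm:main} bounds this by $O(|r(\Tl')-r(\Tr)|)\le O(r(\Tl')+r(\Tr))\le O(r(\Tl)+r(\Tr))$, using $r(\Tl')\le r(\Tl)$. Summing the two phases yields the claimed $O(r(\Tl)+r(\Tr))$ work.

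The step I expect to be the main obstacle is the treap case, since there the cost of a \join{} is controlled by the priority rank of the middle key rather than by a difference of tree ranks, so I cannot directly reuse the rank-difference bound for the closing \join{}. The remedy is to note that the priority rank of $k$ is at most $|\Tl|+|\Tr|+1$, so by Lemma~\ref{lem:treap} the single closing \join{} costs $O(\log(|\Tl|+|\Tr|+1))=O(\max(r(\Tl),r(\Tr)))$ with high probability, which is within $O(r(\Tl)+r(\Tr))$. The only extra care needed is to take a union bound over the high-probability guarantees for \efunc{splitLast} and for the closing \join{}, so that the overall treap bound still holds with high probability.
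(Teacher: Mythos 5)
Your proof is correct and follows essentially the same route as the paper: the paper justifies this theorem by exactly the decomposition you use---\efunc{splitLast} costs $O(r(\Tl))$ by the same telescoping argument as Theorem~\ref{thm:split} and returns a tree of rank at most $r(\Tl)$, after which the single closing \join{} is bounded via Theorem~\ref{thm:main} (and via Lemma~\ref{lem:treap} with $t\le|\Tl|+|\Tr|+1$ for treaps). Your explicit handling of the treap case and the union bound over the two phases is a faithful filling-in of details the paper leaves implicit.
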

%The proof of the theorem is not trivial and will be presented in the final version.

\hide{ \proof{} We prove the case when $r(\Tl)\ge
  r(\Tr)$, the other case is symmetric.

We first prove that $\max(r(\Tl),r(\Tr))\le r(T) \le c+\max(r(\Tl),r(\Tr))$, where $c$ is a constant. Recall Theorem \ref{thm:main} that \join{} on AVL trees, RB trees and WB trees will increase the rank by at most one.

For AVL trees, we use induction on $r(\Tl)+r(\Tr)$. If $r(\Tl)=r(\Tr)=r$, it will recursively \joinTwo{} $R(\Tl)$ with $L(\Tr)$, getting $T'$ of rank at most $r$. Then \joinTwo$(L(\Tl),$\joinTwo$(T',R(\Tr)))$. Denote the intermediate result of \joinTwo$(T',R(\Tr))$ as $T''$. If $r(T')<r$, $r(T'')$ is at most $r$, and \joinTwo$(L(\Tl),T'')$ will get the final result of rank at most $r+1$. The rank increases by at most $1$. If $r(T')=r$, \joinTwo$(T',R(\Tr))$ will insert $R(\Tr)$ into the right spine of $T'$ without affecting the left spine, and \joinTwo$(L(\Tl),T'')$ will insert $L(\Tl)$ into $T''$'s (which is exactly $T'$'s) left spine. Each process could increase the rank of $T'$'s one child, and as a result, $T$'s rank will be at most $1+r(T')=1+r$. The rank increases by at most one.

If $r(\Tl)=r>r(\Tr)$, the \joinTwo$(R(\Tl),\Tr)$ will get $T'$ of rank at most $r$, and then $\join(L(\Tl),k(\Tl),T')$ will get the final result of rank at most $r+1$.

For red-black trees, we use induction on $r(\Tl)+r(\Tr)$. If $r(\Tl)=r(\Tr)=r$, it will recursively join $R(\Tl)$ with $L(\Tr)$, getting $T'$. Notice that if $k(R(\Tl))$ is red, the rank of $R(\Tl)$ is still $r$ (when processing the subtree $R(\Tl)$, we should first convert the root into black to guarantee it is valid). If so, $R(R(\Tl))$ must be black. Assume that $k(R(\Tl))$ and $k(L(\Tr))$ are both red, the algorithm should first let $T'=$\joinTwo$(R(R(\Tl)),L(L(\Tr)))$, then \join{} it with $R(L(\Tl))$, then with $L(R(\Tr))$, then with $L(\Tl)$, then with $R(\Tr)$. Notice that the left spine and the right spine still do not affect each other. We just show on the left spine, the process will make $r(T)$ to be $r+2$. From the inductive assumption we know that $r(T')$ is at most $r+1$. If $r(T')<r+1$, it is trivial that $r(T)\le r+2$. Assume $r(T')=r+1$. Notice that $r(R(L(\Tr)))=r-1$. When joining $R(L(\Tr))$ with $T'$, if the rank of the result is $1+r(T')$, it will make all the nodes from $k(R(L(\Tr)))$ to $T'$'s root to be black. Thus when further joining $L(\Tl)$, the double red case will not happen, and the rank of $T$ will stay $1+r(T')=r+2$. If $k(R(\Tl))$ and $k(L(\Tr))$ are not both red, similar analysis will lead the same result that $r(T)\le r+2$. If $r(\Tl)=r>r(\Tr)$, the analysis is similar.

For weight-balanced trees, the conclusion is trivial.

Now we prove the bound on the work of \joinTwo. We make induction on $r(\Tl)+r(\Tr)$. Denote the complexity to be $W(r_1,r_2)$ for \joinTwo{} two trees of rank $r_1$ and $r_2$. When $r_1+r_2=1$, the theorem obviously holds. Assuming the theorem holds for $r_1+r_2<k$, i.e., $W(r_1,r_2)\le c(r_1+r_2)$ holds for some constant $c$. When $r_1+r_2=k$, we have $W(r_1,r_2)\le W(r_1-1,r_2)+t_{\rm{join}}$, where $t_{\rm{join}}$ is the time required by the final \join{} step. From the inductive assumption, the rank of $T'$ (which is the result of \joinTwo($R_L,\Tr$) is at most $r(R_L)+2$. Thus from Theorem \ref{thm:main} we know that the \join($L(\Tl),k(\Tl),T'$) takes only constant time $c'$. Let $c=c'$, we have $W(r_1,r_2)\le c'(r_1+r_2-1)+c'=c'(r_1+r_2)=O(\log |\Tl| + \log |\Tr|)$.

For treaps, we substitute the rank $r(\cdot)$ in the above proof by the height $h(\cdot)$ and get the same conclusion.
\qed
}

\para{Union, Intersect and Difference.}
\union{}$(T_1,T_2)$ takes two BSTs and returns a BST that contains the union
of all keys.   The algorithm uses a classic
divide-and-conquer strategy, which is parallel.   At each level of recursion,
$T_1$ is split by $k(T_2)$, breaking $T_1$ into three parts: one with all keys smaller than $k(T_2)$ (denoted as $L_1$), one in the middle either of only one key equal to $k(T_2)$ (when $k(T_2)\in T_1$) or empty (when $k(T_2) \notin T_1$), the third one with all keys larger than $k(T_2)$ (denoted as $R_1$).
%with keys equivalent to $T_1$'s root (just a bit $b$ to show if this part is empty).
%Meanwhile, all keys in the left (right) subtree of $T_1$ are originally smaller (larger) than $k(T_1)$.
Then two recursive calls to \union{} are made in parallel.  One unions
$L(T_2)$ with $L_1$, returning $\Tl$, and the other one unions
$R(T_2)$ with $R_1$, returning $\Tr$.  Finally the algorithm returns
\join$(\Tl,k(T_2),\Tr)$, which is valid since $k(T_2)$ is greater than
all keys in $\Tl$ are less than all keys in $\Tr$.

The functions \intersect$(T_1,T_2)$ and \difference$(T_1,T_2)$ take
the intersection and difference of the keys in their sets,
respectively.  The algorithms are similar to \union{} in that they use one tree to split the other.  However, the
method for joining is different.  For \intersect{}, \joinTwo{} is used
instead of \join{} if the root of the first \emph{is not} found in the
second.  For \difference{}, \joinTwo{} is used anyway because $k(T_2)$ should be excluded in the result tree.  The base cases are
also different in the obvious way.

\begin{theorem}[Main Theorem]
\label{thm:unionwork}
For all the four balance schemes mentioned in Section \ref{sec:paradigm}, the work and span of the algorithm (as shown in Figure~\ref{fig:unioncode}) of \union{}, \intersect{} or \difference{} on two balanced BSTs of sizes $m$ and $n$ ($n\ge m$) is $\displaystyle O\left(\boundcontent\right)$ and $O(\log n \log m)$ respectively (the bound is in expectation for treaps).
\end{theorem}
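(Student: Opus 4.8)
The plan is to charge the total work to the nodes of the recursion tree and then bound the resulting sum using the balance of the exposed tree. The recursion of \union{} (and, identically in shape, \intersect{} and \difference{}) follows $T_2$: at the call corresponding to a subtree of $T_2$ of size $t$, the algorithm runs \split{} on the matching contiguous region of $T_1$, of some size $s$, and then recombines the two recursive results with a single \join{} (for \union{}, and for the equal case of \intersect{}) or \joinTwo{} (otherwise). By Theorem~\ref{thm:split} the split costs $O(\log(s+1))$, and I will argue the recombination costs the same up to constants, so each visited node contributes $O(\log(s+t+1))$. The key structural point is that the base cases ``$T_1 = \leaf$'' and ``$T_2 = \leaf$'' halt the recursion the moment one region becomes empty; hence the set of visited nodes, and therefore the whole cost, is governed by the \emph{smaller} of the two inputs even though the code always recurses on $T_2$.

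First I would set up the size recurrence $W(t,s) \le W(t_L,s_L) + W(t_R,s_R) + O(\log(s+t+1))$, where $s_L + s_R \le s$ are the two pieces of the $T_1$-region produced by the split and $t_L + t_R = t-1$ are the sizes of the two children of the $T_2$-subtree, with base cases $W(t,0) = W(0,s) = O(1)$. Because $T_2$ is balanced we have $t_L, t_R \le \alpha t$ for a constant $\alpha < 1$ (in expectation and w.h.p.\ for treaps), so productive recursion has depth $O(\log t)$. I would then prove by induction the \emph{symmetric} bound $W(t,s) = O\left((\min(s,t)+1)\log\left(\frac{\max(s,t)}{\min(s,t)}+1\right)\right)$. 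Specializing to the root with $\{s,t\} = \{m,n\}$ gives $O\left(\boundcontent\right)$ regardless of which input is exposed as $T_2$; the truncating base cases together with this symmetric form are exactly what make the bound hold when either tree is the larger one.

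For the recombination cost I would invoke Theorem~\ref{thm:main}: \join{} returns a tree whose rank exceeds the larger operand rank by at most one, and by Theorem~\ref{thm:split} \split{} never raises rank above the original, so each recursively returned tree has rank $O(\log(s+t+1))$. Thus \join{} costs $O(\log(s+t+1))$, and for \intersect{}/\difference{} the \joinTwo{} cost $O(r(\Tl)+r(\Tr))$ from Theorem~\ref{thm:join2} is also $O(\log(s+t+1))$, matching the split term; the same recurrence therefore covers all three operations. The technical heart is closing the induction: per level the node costs form a concave function of the region sizes, and it is the balance condition $t_L,t_R\le\alpha t$ that keeps the associated geometric sum from incurring a spurious $\log\log$ factor (a naive per-level application of Jensen's inequality is too weak and does \emph{not} give the clean bound, which is why the recursive accounting with the balance constraint is needed).

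For the span I would use that the two recursive calls run in parallel, giving $S(t) \le \max(S(t_L),S(t_R)) + O(\log n)$, where the additive term accounts for the sequential \split{} preceding the calls and the \join{}/\joinTwo{} following them. The critical path visits $O(\log|T_2|)$ recursion levels and does $O(\log n)$ sequential work per level, and taking the smaller tree to drive the depth yields $O(\log n \log m)$; for treaps the depth and the per-level split are $O(\log m)$ and $O(\log n)$ w.h.p., giving the bound in expectation. I expect the main obstacle to be the induction of the second and third steps: finding a hypothesis strong enough to be symmetric in $s$ and $t$ (so that ``either tree larger'' falls out automatically) while still closing under the recurrence, and correctly controlling the interaction between the balance constraint and the empty-region base cases that truncate the recursion.
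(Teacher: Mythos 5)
There is a genuine gap, and it sits exactly at the point the paper identifies as the crux: bounding the recombination cost by a generic logarithm of the subtree sizes is not enough. You charge each visited node $O(\log(s+t+1))$, justifying the \join{}/\joinTwo{} cost by the \emph{ranks} of the two recursive results. But Theorem~\ref{thm:main} bounds \join{} by the \emph{difference} of the ranks, and you need that sharper fact. With your per-node charge the claimed symmetric bound is simply false for the recurrence you wrote down: take $|T_1|=1$, so $\min(s,t)=1$ and the target is $O(\log n)$; the single decomposed element threads down one root-to-leaf path of $T_2$, and at each of the $\Theta(\log n)$ levels you charge $\Theta(\log t)$ for joining two results of size $\Theta(t)$, summing to $\Theta(\log^2 n)$. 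So the induction cannot close. The paper's fix is Lemma~\ref{lem:AVLRBranksum} ($r(T_p)>r(T_d)$ implies $r(\union(T_p,T_d))\le r(T_p)+r(T_d)$), which together with the lower bound $r(T_p)-1$ on each recursive result shows the two trees being joined differ in rank by only $O(r(T_d(v)))=O(\log(s+1))$; hence (Theorem~\ref{thm:joinlarger}) the join work is dominated by the split work $O(\log(s+1))$ at every node, with a separate constant-expected-cost argument for treaps. Your proposal contains no analogue of this rank-growth lemma, and without it the $t$-dependence in your per-node cost is fatal.

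A second problem is the balance assumption $t_L,t_R\le\alpha t$ that your induction leans on to avoid the extra logarithmic factor: it holds for weight-balanced trees (and in expectation for treaps) but fails for AVL and red-black trees, where a node of height $h$ may have a complete subtree of size about $2^{h-1}$ on one side and a minimal Fibonacci-shaped subtree of size about $\phi^{h-2}$ on the other, so the child/parent size ratio tends to $1$. The paper avoids this by layering the pivot tree by \emph{rank}, proving the number of nodes per layer decays geometrically (Lemma~\ref{lem:shrink}) and that each layer's split regions are disjoint (Lemma~\ref{lem:lessthanm}), and then applying concavity of $x\mapsto\log(x+1)$ within each layer --- which is precisely the per-level Jensen step you dismiss as too weak; it works because the geometric decay is in the layer \emph{counts} $s_k$, not in subtree sizes. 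Your span argument and your observation that the recursion truncates on empty split regions are fine, but the work bound needs both the rank-difference accounting for joins and a size-independent geometric structure on the pivot tree before either your recurrence or a direct summation will give $O\left(\boundcontent\right)$ for all four schemes.
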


A generic proof of Theorem \ref{thm:unionwork}
working for all the four balancing schemes will be shown in the next section.
\fullin{
The span of \textal{Union} can be reduced to $O(\log m)$ by
doing a more complicated divide-and-conquer strategy based on (1)
finding splitters that divide the result into blocks of size $(m \log
m)/(n \log(m/n +1))$, (2) for each block in parallel, extracting the
block from each input with \split{} and taking the union of the results,
and (3) joining the results in parallel using a tree.}

The work bound for these algorithms is optimal in the comparison-based model.
In particular considering all possible interleavings, the minimum
number of comparisons required to distinguish them is $\log
{m+n \choose n}=\Theta\left(\boundcontent\right)$ \cite{hwang1972simple}.

%% The algorithm first uses the root of one tree as the pivot to split the other tree, and recursively intersects the left and right part respectively, getting two trees $\Tl$ and $\Tr$. If the pivot is found in the other tree, it should appear in the return tree, so \join{} is called to join $\Tl$, $\Tr$ and the pivot. Otherwise \joinTwo{} should be called to join $\Tl$ and $\Tr$. The function \difference{} is similar. The only difference is that if the root appears in the other tree, it should be excluded.

%% It is easy to see that the work and depth of \intersect{} and \difference{} is the same as \union.
\hide{
\para{Build.}
\build{} takes a unsorted array and returns a valid BST containing all
keys in that array.  The algorithm divides the full array into two
parts of equal length and recursively build two smaller trees in
parallel, and then unions the results.  %Figure~\ref{fig:pseudocode}
%lines~\ref{line:builds}--\ref{line:builde} shows the pseudocode for
%\build. Here \textal{Singleton} denotes a function to initialize a
%tree with a single node.

\begin{theorem}
The work and span of \build{} on an array of size $n$ is $O(n\log n)$ and $O(\log^3 n)$ respectively for the four balancing schemes mentioned in Section \ref{sec:paradigm}.
\end{theorem}
}
\fullin{
{\bf [Yihan: Are we using depth or span.]}
\proof
We use $W(n)$ and $D(n)$ to denote the work and depth to build a BST from an array of size $n$. From Theorem \ref{mergework} we know that the work and depth of \union{} in the last step is $O(n)$ and $O(\log^2 n)$ respectively. Thus we have:
\begin{align*}
W(n)&=2W(\frac{n}{2})+n\\
%W(1)&=O(1)\\
D(n)&=1+D(\frac{n}{2})+O(\log^2 n)\\
%D(1)&=O(1)
\end{align*}
Considering $W(1)=D(1)=O(1)$, solve the recursion formula we can get $W(n)=O(n\log n)$ and $D(n)=O(\log^3 n)$.
\qed}

\fullin{
Similarly, if the \textal{Union} with $O(\log m)$ depth is applied, the depth of \build{} could be reduced to $O(\log^2 n)$.

\begin{corollary}
The parallel algorithm \build{} is optimal for the four balancing schemes in terms of work.
\end{corollary}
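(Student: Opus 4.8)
The plan is to establish work-optimality by matching the $O(n\log n)$ work upper bound, already given by the theorem immediately preceding this corollary, against a corresponding $\Omega(n\log n)$ lower bound in the comparison model. Since the two bounds then coincide asymptotically, \build{} is work-optimal. So the only thing left to supply is the lower bound, together with the argument that it genuinely applies to any algorithm solving the same problem that \build{} solves.

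First I would recall from the preceding theorem that \build{} runs in $O(n\log n)$ work on an input array of size $n$ for each of the four balancing schemes (in expectation for treaps). It therefore suffices to show that any comparison-based algorithm that takes an unsorted array of $n$ elements and returns an ordered structure containing those elements must perform $\Omega(n\log n)$ comparisons in the worst case. The argument is a reduction to sorting. A balanced BST is, by the paper's own definition, an ordered structure: an in-order traversal outputs its keys in sorted order using no further comparisons. Hence running \build{} and then performing a comparison-free in-order traversal constitutes a comparison-based sorting algorithm. The classical information-theoretic bound for comparison sorting says that any decision tree distinguishing the $n!$ possible input permutations has depth at least $\log_2(n!) = \Theta(n\log n)$, so $\Omega(n\log n)$ comparisons are forced in the worst case. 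This lower bound is thus inherited by \build, and the matching $O(n\log n)$ upper bound makes it optimal across all four balancing schemes.

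The step requiring the most care is the reduction itself, specifically the claim that the output of \build{} truly qualifies as an ordered structure so that no comparisons are smuggled into the traversal step. This follows immediately from the fact that a BST stores its keys in in-order sorted order, so the extraction of the sorted sequence is purely structural and comparison-free. I would also note that for treaps the upper bound holds in expectation rather than worst case, so the optimality statement for treaps should correspondingly be read in expectation, while for AVL, red-black, and weight-balanced trees it is a deterministic worst-case statement.
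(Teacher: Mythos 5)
Your proof is correct and takes essentially the approach the paper intends: the paper states this corollary without an explicit proof, treating it as the immediate consequence of the preceding $O(n\log n)$ work bound for \build{} combined with the standard information-theoretic sorting lower bound of $\log_2(n!)=\Omega(n\log n)$ comparisons, which is exactly the reduction you spell out (an in-order traversal of the resulting BST extracts the sorted order comparison-free, so \build{} inherits the sorting lower bound). Your added caveat that the treap bound, and hence its optimality, holds in expectation is also consistent with the paper's statement of the \build{} theorem.
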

}

\para{Other Functions.}
Many other functions can be implemented with \join{}.  Pseudocode for
\insertnew{} and \delete{} was given in Figure \ref{fig:unioncode}.  For a tree
of size $n$ they both take $O(\log n)$ work.
\hide{Pseudocode for
\forallnew{}, \filter{}, and \range{} is given in
Figure~\ref{fig:pseudocode}.  The \forallnew{} function only works
with maps, and applies a function to each value of the map, returning
a map with the same keys and new values.  The \filter{} function
applies a Boolean function to each key (or key-value for maps),
returning a set (or map) containing just the elements for which the
function returns true.  The \range{} function extracts a range of keys
between given keys.  Several functions only search the tree and make
no new structure.  These include \func{Find}$(T,k)$, which checks if
$k$ is in $T$; \func{First}$(T)$ (\func{Last}$(T)$), which returns the
first (last) key in $T$; and \func{Previous}$(T,k)$
(\func{Next}$(T,k)$), which returns the first key less (greater) than
$k$ in $T$.  These function only need \expose.

If the size of the subtree is kept with each node (as done in our
library), then it is possible to efficiently implement several other
functions.  These include \func{SplitAt}$(T,i)$, which splits at the
$i^{th}$ position in $T$; \func{Select}$(T,i)$, which returns the
$i^{th}$ element of $T$, and \func{Rank}$(T,k)$; which returns the
position (rank) of the key $k$ in $T$.  These operations all take
O$(\log |T|)$ work using just \join{}, \expose{} and $|\cdot|$.}

\begin{comment}
As mentioned in Section \ref{sec:paradigm}, \join{} can be used for
sequences as well as sets and maps (it is defined for any binary tree,
not just binary search trees).  For sequences it can support an
interface with operations such as append, insertAt and deteleAt
(insert or delete an element at the $i^{th}$ location) in $O(\log n)$
work.  This compares to $O(n)$ work that is required for an array
implementation.  In fact, \joinTwo{} is append, and \func{splitAt} can
easily be used with \join{} and \joinTwo{} to implement subSequence,
insertAt, and deleteAt. \filter{} and \forallnew{} also apply to
sequences with effectively the same code, and would be parallel.
\end{comment}

\section{The Proof of the Main Theorem}
\label{sec:proof}

In this section we prove Theorem \ref{thm:unionwork}, for all the
four balance schemes (AVL trees, RB trees, WB trees and treaps) and
all three set algorithms (\union{}, \intersect{}, \difference{}) from
Figure~\ref{fig:unioncode}.  For this purpose we make two observations.
The first is that all the work for the algorithms can be accounted for
within a constant factor by considering just the work done by the
\split{}s and the \join{}s (or \joinTwo{}s), which we refer to as
\emph{split work} and \emph{join work}, respectively. This is
because the work done between each split and join is constant.  The
second observation is that the split work is identical among the three
set algorithms. This is because the control flow of the three algorithms
is the same on the way down the recursion when doing \split{}s---the
algorithms only differ in what they do at the base case and on the way
up the recursion when they join.

Given these two observations, we prove the bounds on work by first
showing that the join work is asymptotically at most as large as the
split work (by showing that this is true at every node of the
recursion for all three algorithms), and then showing that the split
work for \union{} (and hence the others) satisfies our claimed bounds.

\begin{table}
  \centering
  \setlength{\extrarowheight}{.2em}
\begin{tabular}{c|c}
  \hline
  % after \\: \hline or \cline{col1-col2} \cline{col3-col4} ...
  \textbf{Notation} & \textbf{Description} \\
  \hline
  $T_p$ & The pivot tree \\
  $T_d$ & The decomposed tree \\
  $n$ & $\max(|T_p|,|T_d|)$ \\
  $m$ & $\min(|T_p|,|T_d|)$ \\
  $T_p(v), v \in T_p$ & The subtree rooted at $v$ in $T_p$ \\
  $T_d(v), v\in T_p$ & The tree from $T_d$ that $v$ splits\footnotemark\\
%  $|T_d(v)|$($v\in T_p$)& The splitting size of $v$\\
  $s_{i}$ & The number of nodes in layer $i$ \\
  $v_{kj}$ & The $j$-th node on layer $k$ in $T_p$ \\
  $d(v)$ & The number of nodes attached to a layer\vspace{-.5em}\\
  &root $v$ in a treap \\
  \hline
\end{tabular}
  \caption{Descriptions of notations used in the proof.}\label{tab:proofnotation}
\end{table}
\footnotetext{The nodes in $T_d(v)$ form a subset of $T_d$, but not necessarily a subtree. See details later.}

We start with some notation, which is summarized in
Table~\ref{tab:proofnotation}.  In the three algorithms the first tree ($T_1$) is split by the keys in the second tree ($T_2$).  We
therefore call the first tree the \emph{decomposed tree} and the second the
\emph{pivot tree}, denoted as $T_d$ and $T_p$ respectively.  The
tree that is returned is denoted as $T_r$.
Since our proof works for
either tree being larger, we use $m=\min(|T_p|, |T_d|)$ and
$n=\max(|T_p|,|T_d|)$.
We denote the subtree rooted at $v \in T_p$ as $T_p(v)$, and the tree
of keys from $T_d$ that $v$ splits as $T_d(v)$ (i.e.,
\split$(v,T_d(v))$ is called at some point in the algorithm).  For $v
\in T_p$, we refer to $|T_d(v)|$ as its \emph{splitting size}.

Figure \ref{fig:workproof} (a) illustrates the pivot tree with the
splitting size annotated on each node.  Since \split{} has logarithmic
work, we have,
\[\mbox{split work } = O\left(\sum_{v\in T_p} (\log|T_d(v)|+1)\right),\]
which we analyze in Theorem \ref{thm:workofsplit}.  We first, however, show that the join work is
bounded by the split work.  We use the following Lemma, which is
proven in the appendix.

\hide{
\begin{lemma}
\label{lem:treapconstant}
For treaps, if $|T_p|>|T_d|$, the final \join{} step after the recursive calls takes
constant work in expectation.
\end{lemma}
}

\begin{lemma}
\label{lem:AVLRBranksum}
For $T_r=$\union$(T_p,T_d)$ on AVL, RB or WB trees,
if $r(T_p)> r(T_d)$ then $r(T_r) \le r(T_p)+r(T_d)$.
\end{lemma}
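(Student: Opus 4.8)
The plan is to prove the rank bound by structural induction on the pivot tree $T_p$, following the recursion of \union{} exactly, and to reduce each step to the rank behavior of a single \join{} as quantified by Theorem~\ref{thm:main}. First I would dispatch weight-balanced trees directly, since there rank and size are tightly coupled: because $|T_r|\le |T_p|+|T_d|$ we have $w(T_r)\le w(T_p)+w(T_d)$, and the hypothesis $r(T_p)>r(T_d)$ forces $w(T_p)\ge w(T_d)$, so $w(T_r)\le 2w(T_p)$ and the claimed bound follows by plugging into $r(T)=\lceil\log_2 w(T)\rceil-1$. The real work is for AVL and RB trees, where rank and size are only loosely related and a structural argument is unavoidable.

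For AVL and RB, let $k$ be the key at the root of $T_p$, let $L_p=L(T_p)$ and $R_p=R(T_p)$, and let $(L_d,b,R_d)=\split(T_d,k)$; the algorithm then returns $T_r=\join(T_L,k,T_R)$ with $T_L=\union(L_p,L_d)$ and $T_R=\union(R_p,R_d)$. By Theorem~\ref{thm:main} we have $r(T_r)\le 1+\max(r(T_L),r(T_R))$; by Theorem~\ref{thm:split} we have $r(L_d),r(R_d)\le r(T_d)$; and because rank strictly decreases from a parent to each child in both schemes, $r(L_p),r(R_p)\le r(T_p)-1$. Hence it suffices to show $r(T_L),r(T_R)\le r(T_p)+r(T_d)-1$, and the problem reduces to controlling the two recursive unions.

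The delicate point is that the sub-calls need not preserve the hypothesis that the pivot rank exceeds the decomposed rank: even though $r(L_p)\le r(T_p)-1$ and $r(L_d)\le r(T_d)$, we may have $r(L_p)\le r(L_d)$, so Lemma~\ref{lem:AVLRBranksum} cannot be applied to $\union(L_p,L_d)$ as stated. I would therefore carry a simultaneous companion bound covering the opposite regime $r(T_p)\le r(T_d)$ and run a single induction on $|T_p|+|T_d|$. When the ordering is preserved the induction closes at once, since $r(T_L)\le r(L_p)+r(L_d)\le (r(T_p)-1)+r(T_d)$ and symmetrically for $T_R$. In the flipped regime I would instead invoke the companion bound on $\union(L_p,L_d)$, whose right-hand side is tamed by the fact that $r(L_d)\le r(T_d)<r(T_p)$, so that the extra slack is absorbed by the strict gap $r(T_p)>r(T_d)$; the base cases, where $T_p$ or $T_d$ degenerates to a leaf and the returned tree is simply the other input, are routine.

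The main obstacle is exactly this flipped sub-case together with the accounting of the additive term. A purely rank-level recurrence such as $f(r_p,r_d)\le 1+f(r_p-1,r_d)$ over-counts badly, because it pretends $T_d$ retains its full rank at every level of the recursion, whereas in reality the \split{}s partition $T_d$ and spread its rank across the leaves of the recursion tree. The crux of the argument is therefore to charge the per-level ``$+1$'' contributed by each \join{} against the guaranteed unit decrease of $r(T_p)$ as the recursion descends $T_p$, while arguing that the distributed pieces of $T_d$ contribute at most $r(T_d)$ in total along any root-to-leaf path rather than once per level. Formulating the companion statement so that this charging is exactly tight is where essentially all the effort lies; once it is in place, the proof is the same shape as that of Lemma~\ref{lem:AVL}, using only the join-rank and split-rank inequalities already established.
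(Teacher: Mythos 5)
Your overall strategy is the same as the paper's: dispatch WB trees through the size--rank relation $|T_r|\le|T_p|+|T_d|\le 2|T_p|$, and handle AVL and RB trees by an induction that follows the recursion of \union{}, using Theorem~\ref{thm:split} to bound the ranks of the split pieces and Theorem~\ref{thm:main} to account for the final \join{}. You have also put your finger on a real subtlety that the paper's own two-line induction passes over silently: the recursive calls $\union(L_p,L_d)$ and $\union(R_p,R_d)$ need not satisfy the hypothesis $r(\mbox{pivot})>r(\mbox{decomposed})$, since $r(L_p)$ can drop to $r(T_p)-2$ while $r(L_d)$ can be as large as $r(T_d)$.

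However, your proposal does not close this gap; it only names it. The companion bound for the flipped regime is never stated, and you yourself concede that formulating it is ``where essentially all the effort lies''---which is exactly the effort that is missing. Worse, the natural candidate does not work. The only unconditional bound your induction scheme yields is $r(\union(T_p',T_d'))\le r(T_p')+r(T_d')+1$, and feeding that into the flipped sub-case gives $r(T_L)\le r(L_p)+r(L_d)+1\le (r(T_p)-1)+r(T_d)+1=r(T_p)+r(T_d)$, after which the outer \join{} produces $r(T_p)+r(T_d)+1$, exceeding the target. The strict gap $r(T_p)>r(T_d)$ supplies exactly one unit of slack, and that unit is already consumed by descending one level in $T_p$; nothing is left to absorb the companion's extra $+1$. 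Indeed the stated constant is fragile: for $T_d$ a single key and $T_p$ a complete AVL tree, \union{} degenerates to an insertion, which can raise the height, giving $r(T_r)=r(T_p)+1>r(T_p)+r(T_d)$; your WB argument has the same boundary issue when $r(T_d)=0$, where $w(T_r)\le 2w(T_p)$ only yields $r(T_r)\le r(T_p)+1$. The constructive repair is to prove the slightly weaker statement $r(T_r)\le r(T_p)+r(T_d)+1$ for all rank orderings---your induction closes immediately for that, with no case split---and to note that the additive constant is harmless in the one place the lemma is used (Theorem~\ref{thm:joinlarger}), which only needs the two recursive results to differ in rank by $O(r(T_d)+1)$. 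As written, though, your proof is incomplete, and the step you defer cannot be carried out in the form you describe.
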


\begin{theorem}
\label{thm:joinlarger}
For each function call to \union{}, \intersect{} or \difference{} on
trees $T_p(v)$ and $T_d(v)$, the work to do the \join{} (or \joinTwo{})
is asymptotically no more than the work to do the \split{}.
\end{theorem}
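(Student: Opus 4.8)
The plan is to fix an arbitrary recursive call, associated with a node $v \in T_p$, and charge the \join{} (or \joinTwo{}) performed after the two recursive calls against the \split{} done at $v$. By Theorem \ref{thm:split} the split work at $v$ is $O(\log|T_d(v)| + 1)$, so it suffices to show the join work at $v$ is also $O(\log|T_d(v)|+1)$. Write $L_p = L(v)$ and $R_p = R(v)$ for the two children subtrees of $v$ in $T_p$, let $L_d,R_d$ be the two trees into which $k(v)$ splits $T_d(v)$, and let $\Tl,\Tr$ be the trees returned by the two recursive calls (so $\Tl$ is the result on $L_p$ and $L_d$, and $\Tr$ on $R_p$ and $R_d$). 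I would then split the analysis into the two ``directions'' of the recursion: the case of \intersect{}/\difference{}, which is immediate, and the case of \union{}, which is the real content.

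For \intersect{} and \difference{} the recursive results only shrink: every key of $\Tl$ comes from $L_d$ and every key of $\Tr$ from $R_d$, so $|\Tl| \le |L_d| \le |T_d(v)|$ and $|\Tr| \le |R_d| \le |T_d(v)|$. Since $r(T) = \Theta(\log(|T|+1))$ for every scheme (w.h.p.\ for treaps), this yields $r(\Tl),r(\Tr) = O(\log|T_d(v)|+1)$ directly. The final step is then either a \join{}, of cost $O(|r(\Tl)-r(\Tr)|+1)$, or a \joinTwo{}, of cost $O(r(\Tl)+r(\Tr)+1)$ by Theorem \ref{thm:join2}; both are $O(\log|T_d(v)|+1)$, matching the split work.

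The interesting case is \union{}, where $\Tl$ and $\Tr$ may be much larger than $T_d(v)$, so I cannot bound their ranks individually; instead I bound the \emph{difference} $|r(\Tl)-r(\Tr)|$, which controls the \join{} cost on AVL, RB and WB trees. The idea is to sandwich both ranks around $\max(r(L_p),r(R_p))$. For the upper bound, Lemma \ref{lem:AVLRBranksum} gives $r(\Tl) \le r(L_p)+r(L_d)+O(1)$, and Theorem \ref{thm:split} gives $r(L_d) \le r(T_d(v))$, so $r(\Tl) \le r(L_p)+r(T_d(v))+O(1)$, and symmetrically $r(\Tr) \le r(R_p)+r(T_d(v))+O(1)$. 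For the lower bound I would use that a \union{} never loses the rank of its pivot up to a constant: since \union{} finishes with a \join{} whose result has rank at least the maximum of its parts (Theorem \ref{thm:main}), an easy induction shows $r(\Tl) \ge r(L_p)-O(1)$ and $r(\Tr) \ge r(R_p)-O(1)$. Combining these with the fact that siblings in a balanced tree have nearly equal rank, $|r(L_p)-r(R_p)| = O(1)$ (height differs by one for AVL, black height by $O(1)$ for RB, weight by a constant factor for WB), both $r(\Tl)$ and $r(\Tr)$ lie between $\max(r(L_p),r(R_p))-O(1)$ and $\max(r(L_p),r(R_p))+r(T_d(v))+O(1)$, so $|r(\Tl)-r(\Tr)| \le r(T_d(v))+O(1) = O(\log|T_d(v)|+1)$, and the \join{} cost is bounded by the split work.

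Finally, treaps need a separate but short argument, because their \join{} cost is governed by priorities rather than by a rank difference. By Lemma \ref{lem:treap} the \join{} at $v$ costs $O(\log t + 1)$, where $t$ is the priority rank of $k(v)$ among the keys of $\Tl \cup \{k(v)\} \cup \Tr$. Because $k(v)$ is the root of $T_p(v)$ it already has the highest priority among all keys of $T_p(v)$, so every key outranking it must come from $T_d(v)$; hence $t \le |T_d(v)|+1$ and the \join{} cost is $O(\log|T_d(v)|+1)$, again matching the split work (the \joinTwo{} case for \intersect{}/\difference{} on treaps is covered by the subset argument above). I expect the main obstacle to be the two lower-bound ingredients in the \union{} case---that a union retains its pivot's rank up to a constant, and that balanced siblings have rank within $O(1)$---since these are exactly the places where the constants specific to each balancing scheme enter an argument that is otherwise uniform across schemes.
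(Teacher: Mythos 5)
Your proposal is correct and follows essentially the same route as the paper's proof: the subset argument ($T_r\subseteq T_d$) for \intersect{}/\difference{}, and for \union{} on AVL, RB and WB trees the same sandwiching of the two recursive results' ranks between roughly $r(T_p(v))-O(1)$ and $r(T_p(v))+r(T_d(v))$ (upper bound via Lemma~\ref{lem:AVLRBranksum}, lower bound via the pivot's rank being retained), so that the \join{} cost $O(|r(\Tl)-r(\Tr)|)$ is $O(\log|T_d(v)|+1)$. The only real deviation is the treap case, where your deterministic bound $t\le |T_d(v)|+1$ replaces the paper's case split on $r(T_p)$ versus $r(T_d)$ together with the computation $\mathbb{E}[t]\le 2$; both are valid, and yours is arguably cleaner.
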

\begin{proof}
For \intersect{} or \difference{}, the cost of \join{} (or \joinTwo{})
is $O(\log (|T_r|+1))$. Notice that \difference{} returns the keys in $T_d\backslash T_p$.
Thus for both \intersect{} and \difference{} we have $T_r\subseteq T_d$. The join work is
$O(\log (|T_r|+1))$, which is no more than $O(\log (|T_d|+1))$ (the split work).

For \union{}, if $r(T_p)\le r(T_d)$, the \join{} will cost $O(r(T_d))$, which is no more than the split work.

Consider $r(T_p)>r(T_d)$ for AVL, RB or WB trees. The rank of $L(T_p)$ and $R(T_p)$,
which are used in the recursive calls, are at least $r(T_p)-1$.
Using Lemma \ref{lem:AVLRBranksum}, the rank of the two trees
returned by the two recursive calls will be at least $(r(T_p)-1)$ and
at most $(r(T_p)+r(T_d))$, and differ by at most $O(r(T_d))=O(\log
|T_d|+1)$.  Thus the join cost is $O(\log|T_d|+1)$, which is no more than
the split work.

Consider $r(T_p)>r(T_d)$ for treaps. If $r(T_p)> r(T_d)$, then $|T_p|\ge |T_d|$.
The root of $T_p$ has the highest priority among all $|T_p|$ keys, so on expectation it takes at most the $\frac{|T_p|+|T_d|}{|T_p|}\le 2$-th place among all the $|T_d|+|T_p|$ keys.
From Lemma \ref{lem:treap} we know that the cost on expectation is $\mathbb{E}[\log t]+1\le \log \mathbb{E}[t]+1\le \log 2+1$, which is a constant.
\end{proof}

\begin{figure}%[!h!t]
  \centering
  \includegraphics[width=0.6\columnwidth]{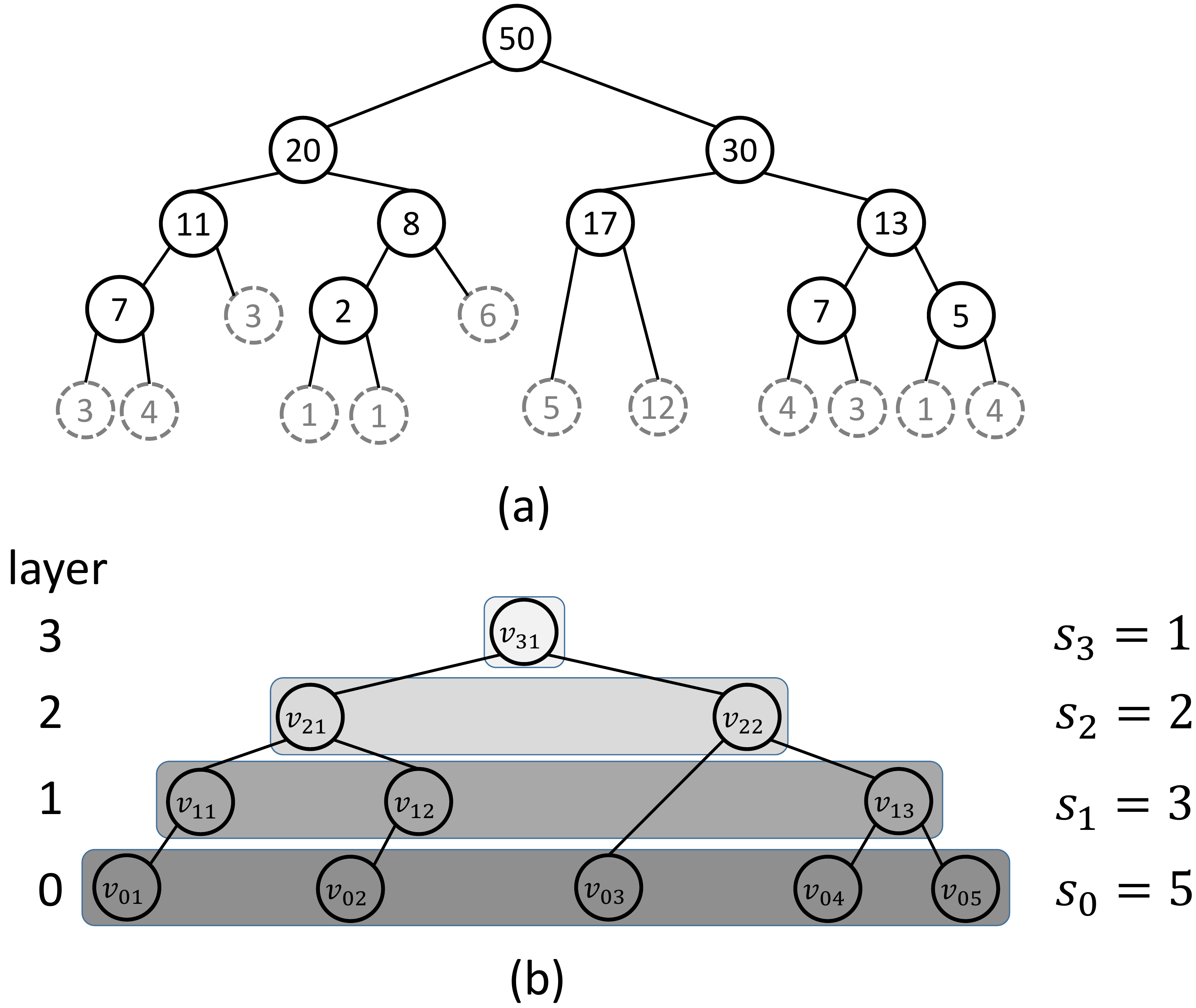}\\
  \caption{An illustration of splitting tree and layers. The tree in (a) is $T_p$, the dashed circle are the exterior nodes. The numbers on the nodes are the sizes of the tree from $T_d$ to be split by this node, i.e., the ``splitting size'' $|T_d(v)|$. In (b) is an illustration of layers on an AVL tree.}\label{fig:workproof}
\end{figure}

This implies the total join work is asymptotically bounded by the
split work.

We now analyze the split work.  We do this by layering the pivot tree
starting at the leaves and going to the root and such that nodes in a
layer are not ancestors of each other.  We define layers based on the
ranks and denote the size of layer $i$ as $s_i$.  We show that $s_i$
shrinks geometrically, which helps us prove our bound on the split
work.  For AVL and RB trees, we group the nodes with rank $i$ in layer $i$. For WB trees and treaps, we put a node $v$ in layer $i$ iff $v$ has rank $i$ and $v$'s parent has rank strictly greater than $i$. Figure \ref{fig:workproof} (b) shows an example of the layers
of an AVL tree.

\begin{definition}
In a BST, a set of nodes $V$ is called a \emph{disjoint set} if and only if for any two nodes $v_1, v_2$ in $V$, $v_1$ is not the ancestor of $v_2$.
\end{definition}

\begin{lemma}
\label{lem:lessthanm}
\hspace{-5pt}
For any disjoint set $V \subseteq T_p$, $\sum_{v\in V} |T_d(v)|\le |T_d|$.
\end{lemma}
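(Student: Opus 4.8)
The plan is to exploit the fact that the set $T_d(v)$ of keys split by a node $v \in T_p$ is exactly the set of keys of $T_d$ that fall into the key-interval associated with $v$ in the search structure of $T_p$. Concretely, when \union{} (or \intersect{}, \difference{}) reaches $v$, it has already split $T_d$ by every proper ancestor of $v$, so $T_d(v)$ consists of precisely those keys of $T_d$ lying strictly between the two nearest enclosing ancestor keys of $v$. The key consequence is that passing from a node $u$ to its two children partitions $T_d(u)$: the \split{} at $k(u)$ sends the keys of $T_d(u)$ less than $k(u)$ to the left child and those greater than $k(u)$ to the right child, so $|T_d(L(u))| + |T_d(R(u))| \le |T_d(u)|$, with equality unless $k(u) \in T_d(u)$.

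Given this, I would prove the stronger statement by induction on the structure of $T_p$: for every node $u \in T_p$ and every disjoint set $V \subseteq T_p(u)$, we have $\sum_{v \in V} |T_d(v)| \le |T_d(u)|$. The lemma follows by taking $u$ to be the root of $T_p$, since then $T_d(u) = T_d$. The base case, where $T_p(u)$ is a single node or $V$ is empty, is immediate.

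For the inductive step at a node $u$, I would split into two cases. If $u \in V$, then because $u$ is an ancestor of every other node of $T_p(u)$ and $V$ is a disjoint set, necessarily $V = \{u\}$ and the bound $|T_d(u)| \le |T_d(u)|$ is trivial. If $u \notin V$, then $V$ decomposes as a disjoint union $V = V_L \cup V_R$ with $V_L \subseteq T_p(L(u))$ and $V_R \subseteq T_p(R(u))$, each itself a disjoint set. Applying the inductive hypothesis to each child and then the partition inequality gives $\sum_{v \in V} |T_d(v)| = \sum_{v \in V_L} |T_d(v)| + \sum_{v \in V_R} |T_d(v)| \le |T_d(L(u))| + |T_d(R(u))| \le |T_d(u)|$, completing the induction.

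I expect the main obstacle to be pinning down the semantics of $T_d(v)$ precisely enough to justify the partition inequality $|T_d(L(u))| + |T_d(R(u))| \le |T_d(u)|$, that is, arguing from the definition of \split{} that the keys split by the two children are disjoint subsets of those split by the parent, with at most the single key $k(u)$ removed. Once that structural fact is established, the rest is a clean antichain induction. An equivalent non-inductive phrasing is available: incomparable nodes of $T_p$ correspond to disjoint key-intervals (separated by the key at their lowest common ancestor), so the sets $\{T_d(v) : v \in V\}$ are pairwise disjoint subsets of $T_d$ and hence their sizes sum to at most $|T_d|$. I would keep the inductive version as the main argument since it makes the bookkeeping explicit.
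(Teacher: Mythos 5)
Your proof is correct; the paper itself omits the argument entirely, stating only that ``The proof of this Lemma is straightforward.'' Your induction on $T_p$ (or equivalently the observation that incomparable nodes of $T_p$ split pairwise disjoint subsets of $T_d$) is exactly the natural formalization of the argument the authors had in mind, so there is nothing to compare beyond noting that you have supplied the details they skipped.
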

The proof of this Lemma is straightforward.

\begin{lemma}
\label{lem:shrink}
For an AVL, RB, WB tree or a treap of size $N$, each layer is a disjoint set, and $s_i\le
\frac{N}{ c^{\lfloor i/2\rfloor}}$ holds for some constant $c>1$.
\end{lemma}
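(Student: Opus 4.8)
The plan is to reduce both claims to two ingredients: a purely combinatorial fact that the subtrees rooted at the nodes of a disjoint set are pairwise node-disjoint, so that their sizes sum to at most $N$; and a per-scheme lower bound on the subtree size of any node of a given rank. Given these, if every node in layer $i$ has $|T(v)| \ge c^{\lfloor i/2\rfloor}$ and the layer is a disjoint set, then $s_i \cdot c^{\lfloor i/2\rfloor} \le \sum_{v\in \text{layer }i} |T(v)| \le N$, which is exactly the claimed bound (a subtree contributes its internal nodes, and disjoint subtrees occupy disjoint subsets of the $N$ internal nodes). So the real work is to establish disjointness of each layer and the size lower bound, scheme by scheme, and I expect $c=2$ to suffice uniformly.

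First I would prove that each layer is a disjoint set, using that rank is monotone along root-to-leaf paths: since a proper descendant has strictly smaller weight, size, and height, we have $r(p(v)) \ge r(v)$ for every node. For AVL trees $r=h-1$ and for RB trees the rank drops by $1$ or $2$ at every parent-to-child step (checking the black/red and black-height combinations shows a black parent loses $2$ to a black child and $1$ to a red child, while a red parent can only have a black child and loses $1$), so the monotonicity is in fact strict; hence two nodes of equal rank can never be in an ancestor-descendant relation, and grouping by rank yields disjoint sets directly. For WB trees and treaps rank is only weakly monotone, since a parent and child may share a rank (e.g. weights $4$ and $3$ both give rank $1$), which is precisely why the layer of a node additionally requires $r(p(v))>r(v)$. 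If $v_1$ were a proper ancestor of $v_2$ with both in layer $i$, then $p(v_2)$ is a descendant of, or equal to, $v_1$, so $i = r(v_1) \ge r(p(v_2)) > i$, a contradiction; thus these layers are disjoint as well.

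Next I would establish the subtree-size bound for a layer-$i$ node $v$ in each scheme. For WB trees and treaps, $r(v)=\lceil\log_2 w(v)\rceil-1=i$ forces $w(v)>2^i$, hence $|T(v)|\ge 2^i\ge 2^{\lfloor i/2\rfloor}$. For RB trees, rank $i$ pins the black height to $\lfloor i/2\rfloor + 1$ (even $i$ give a black root, odd $i$ a red root), and a node of black height $b$ has at least $2^{b}-1$ internal nodes in its subtree, giving $|T(v)|\ge 2^{\lfloor i/2\rfloor+1}-1\ge 2^{\lfloor i/2\rfloor}$. For AVL trees, a node of rank $i$ has height $i+1$, so its subtree contains at least $F_{i+3}-1$ nodes (the Fibonacci minimum), and since $\phi^2>2$ one checks $F_{i+3}-1\ge 2^{\lfloor i/2\rfloor}$ for all $i\ge 0$ (the small cases directly, the rest by exponential growth). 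Combined with disjointness and the counting step, this yields $s_i\le N/2^{\lfloor i/2\rfloor}$, so $c=2$ works for all four schemes.

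The main obstacle is the WB/treap disjointness argument: because rank need not strictly decrease down the tree, grouping naively by rank fails, and the proof hinges on correctly exploiting the refined layer condition $r(p(v))>r(v)$ together with monotonicity of rank. The size bounds are then routine arithmetic, the only mild subtlety being that the RB case is exactly where the $\lfloor i/2\rfloor$ exponent originates, since two consecutive ranks correspond to a single increment of black height.
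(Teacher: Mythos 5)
Your proof is correct, and for the AVL and red-black cases it takes a genuinely different route from the paper's. The paper handles those two schemes with a layer-to-layer recurrence: every node in layer $i+2$ has at least two descendants in layer $i$ (its two subtrees each contain a node of the required height/black-height), giving $s_i \ge 2 s_{i+2}$ and hence the geometric decay directly, with no need for minimum-subtree-size bounds. You instead use a single uniform counting principle for all four schemes --- a layer is a disjoint set, so the subtrees rooted at its nodes are pairwise node-disjoint and their sizes sum to at most $N$ --- combined with a per-scheme lower bound of $2^{\lfloor i/2\rfloor}$ on the size of a rank-$i$ subtree (the Fibonacci minimum $F_{i+3}-1$ for AVL, the $2^{\bh}-1$ bound with $\bh = \lfloor i/2\rfloor+1$ for RB, and $w(v) > 2^i$ for WB trees and treaps). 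For WB trees and treaps your argument coincides with the paper's. Both proofs yield $c=2$; the paper's recurrence is shorter for AVL/RB and avoids the Fibonacci arithmetic, while yours is more uniform, makes explicit where the $\lfloor i/2\rfloor$ exponent comes from, and spells out the disjointness claims (strict rank decrease along edges for AVL/RB, and the $r(p(v)) > r(v)$ condition for WB/treaps) that the paper dismisses as obvious. No gaps.
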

\begin{proof}
For AVL, RB, WB trees and treaps, a layer is obviously a disjoint set: a node and its ancestor cannot lie in the same layer.

For AVL trees, consider a node in layer 2, it must have at least two descendants in layer 0. Thus $s_0 \ge  2s_2$.
Since an AVL tree with its leaves removed is still an
AVL tree, we have $s_i\ge 2s_{i+2}$.  Since $s_0$ and $s_1$ are no more than $N$, we can get that
$s_i<\frac{N}{2^{\lfloor i/2 \rfloor}}$.

For RB trees, the number of black nodes in layer $2i$ is more than twice as many as in layer $2(i+1)$ and less than four times as many as in layer $2(i+1)$,
i.e., $s_{2i} \ge 2s_{2i+2}$. Also, the number of red nodes in layer $2i+1$ is no more than the black nodes in layer $2i$. Since $s_0$ and $s_1$ are no more than $N$, $s_i<\frac{N}{2^{\lfloor i/2 \rfloor}}$.

For WB trees and treaps, the rank is defined as $\lceil \log_2(w(T)) \rceil -1$, which means that a node in layer $i$ has weight at least $2^{i}+1$. Thus $s_i \le (N+1)/(2^i+1)\le N/2^i$.
\end{proof}

Not all nodes in a WB tree or a treap are assigned to a layer. We call a node a \emph{layer root} if it is in a layer. We attach each node $u$ in the tree to the layer root that is $u$'s ancestor and has the same rank as $u$. We denote $d(v)$ as the number of descendants attached to a layer root $v$.

\begin{lemma}
\label{lem:treapchain}
For WB trees and treaps, if $v$ is a layer root, $d(v)$ is less than a constant (in expectation for treaps). Furthermore, the random variables $d(v)$ for all layer roots in a treap are i.i.d. (See the proof in the Appendix.)
\end{lemma}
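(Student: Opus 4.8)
The plan is to first give a purely combinatorial description of $d(v)$ that holds for both WB trees and treaps, and only afterwards bring in the weight-balance invariant (for WB) or the randomness (for treaps). Recall that in both schemes $r(T)=\lceil\log_2 w(T)\rceil-1$, so $r(v)=i$ is equivalent to $2^i<w(v)\le 2^{i+1}$. Since $w$ is nonincreasing down the tree, the descendants of $v$ that share its rank $i$ form exactly $\{u\in T_p(v):w(u)>2^i\}$, and these are precisely the nodes attached to $v$, so $d(v)=|\{u\in T_p(v):w(u)>2^i\}|$. The first step is to observe that this set is a \emph{path}: if a node $u$ has $w(u)\le 2^{i+1}$ and both of its children had weight greater than $2^i$, their weights would sum to more than $2^{i+1}\ge w(u)$, a contradiction. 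Hence every rank-$i$ node has at most one rank-$i$ child, and $d(v)$ equals the length of the ``heavy path'' that starts at $v$ and repeatedly descends to the unique child of weight exceeding $2^i$, stopping when no such child exists.

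For WB trees I would finish immediately. The weight-balance invariant gives $w(c)\le(1-\alpha)\,w(u)$ for either child $c$ of $u$, so each step of the heavy path shrinks the weight by a factor of at least $1/(1-\alpha)>1$. Starting from $w(v)\le 2^{i+1}$ and remaining above $2^i$, the path can take at most $\log_{1/(1-\alpha)}2$ steps, a constant depending only on $\alpha$; thus $d(v)=O(1)$.

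For treaps I would analyze the heavy path probabilistically. Conditioned on $v$ being a layer root with $|T_p(v)|=s<2^{i+1}$, the tree $T_p(v)$ is a random treap on $s$ keys, and the root of any random treap of size $s'$ splits its keys into a left part of size $L\sim\mathrm{Uniform}\{0,\dots,s'-1\}$ and a right part of size $s'-1-L$, recursing independently. The heavy path therefore follows, at each step, the larger of the two parts and continues only while the current subtree has weight exceeding $2^i$. Since the two parts sum to $s'-1<2^{i+1}$, at most one can stay in the band (consistent with the path structure above), and the expected larger half is about $\tfrac34 s'$, so the subtree size contracts by a constant factor in expectation at each step. As the process must drive the size from below $2^{i+1}$ to below $2^i$, a single factor-of-two window, the expected number of steps is $O(1)$; equivalently, the expected number of descendant-sets of size at least $2^i$ inside a random treap of size $s<2^{i+1}$ is $O(s/2^i)=O(1)$. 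This yields $\mathbb{E}[d(v)]=O(1)$.

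Finally, for the independence claim I would argue by scales. The value $d(v)$ is a function only of the relative priority order of the keys of $T_p(v)$, and the heavy path it measures visits only nodes of weight greater than $2^{r(v)}$. Two layer roots of the same rank lie in disjoint subtrees and hence depend on disjoint priorities; and for a layer root $v'$ of strictly smaller rank $j<r(v)$ nested below $v$, its heavy path visits only nodes of weight at most $2^{j+1}\le 2^{r(v)}$, so the node sets probed by $d(v)$ and $d(v')$ are again disjoint. Since the uniform split decisions at distinct treap nodes are mutually independent given the sizes, the resulting counts are independent. Identical distribution then follows from the self-similarity of treaps: every $T_p(v)$ is itself a random treap, and the heavy-path process is the same at every scale once the starting size is normalized into $[2^i,2^{i+1})$. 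I expect the delicate point to be exactly this last step---making ``independent and identically distributed'' fully rigorous despite the starting size varying within the factor-two band and despite layer roots being nested rather than sitting in disjoint subtrees. The disjoint-scale argument settles independence cleanly, and for the use in the sequel it should suffice that the $d(v)$ are independent with a uniformly bounded mean and an exponentially decaying tail, both of which the geometric contraction above provides.
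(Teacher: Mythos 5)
Your proof follows the same overall route as the paper's: the same chain observation (two same-layer children would force the parent's weight above $2^{i+1}$ and hence the parent into the next layer), the identical geometric-decay argument for WB trees via the $(1-\alpha)$ factor per generation, a probabilistic tail bound for treaps, and an appeal to the randomness of the permutation for the i.i.d.\ claim. The one genuine difference is in the treap step: the paper explicitly computes $P(d(v)\ge 2)=\frac{1}{2^k}\sum_{i=2^k+1}^{N}\frac{i-2^k}{i}\approx 1-\ln 2$ and iterates this to get a geometrically decaying tail, whereas you bound the chain by the process that repeatedly takes the larger of the two uniform split parts, which contracts by an expected factor of about $3/4$ per step; combined with Markov's inequality this also yields an exponentially decaying tail and $\mathbb{E}[d(v)]=O(1)$. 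Your version avoids the explicit sum and is arguably more robust, at the cost of being slightly less tight; both are fine. On the i.i.d.\ claim you are in fact more careful than the paper, which disposes of it in one sentence --- your disjoint-scales argument for independence and your observation that the distribution of $d(v)$ technically depends on where $w(v)$ falls in the band $(2^i,2^{i+1}]$ (so that a uniformly bounded mean and tail is what is really needed and used downstream) identify real gaps in the paper's own treatment rather than introducing any of your own.
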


By applying Lemma \ref{lem:shrink} and \ref{lem:treapchain} we prove the split work. In the following proof, we denote $v_{kj}$ as the $j$-th node in layer $k$.
\begin{theorem}
\label{thm:workofsplit}
The split work in \union{}, \intersect{} and \difference{} on two trees of size $m$ and $n$ is $O\left(\boundcontent\right)$.% if one tree always serves as the pivot tree.
\end{theorem}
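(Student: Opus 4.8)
The plan is to bound the split work $O\left(\sum_{v\in T_p}\log(|T_d(v)|+1)\right)$ by analyzing the pivot tree one layer at a time and then summing the contributions, exploiting the geometric shrinkage of layer sizes. First I would prove a per-layer bound. Fix a layer $i$, let $V_i$ be its set of nodes (a disjoint set by Lemma~\ref{lem:shrink}), and let $t_i$ be the number of $v\in V_i$ with $|T_d(v)|\ge 1$; nodes with empty $T_d(v)$ contribute nothing since $\log(0+1)=0$. Because $V_i$ is disjoint, Lemma~\ref{lem:lessthanm} gives $\sum_{v\in V_i}|T_d(v)|\le |T_d|$, and since $f(x)=\log(x+1)$ is concave, increasing, and satisfies $f(0)=0$, Jensen's inequality over the $t_i$ positive terms yields $\sum_{v\in V_i}f(|T_d(v)|)\le t_i\log\left(\frac{|T_d|}{t_i}+1\right)$. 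The key refinement is that each of those $t_i$ terms is at least $1$, so $t_i\le\min(s_i,|T_d|)$; together with the monotonicity of $g(x)=x\log(|T_d|/x+1)$ and the bound $s_i\le N/c^{\lfloor i/2\rfloor}$ from Lemma~\ref{lem:shrink} (with $N=|T_p|$), the layer-$i$ contribution is at most $g\left(\min\left(N/c^{\lfloor i/2\rfloor},|T_d|\right)\right)$.

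Next I would sum over the $O(\log N)$ layers, splitting into two regimes according to whether $N/c^{\lfloor i/2\rfloor}$ exceeds $|T_d|$. In the saturated regime (nonempty only when $|T_p|>|T_d|$) each layer contributes $g(|T_d|)=|T_d|\log 2=O(|T_d|)$, and there are only $O\left(\log(|T_p|/|T_d|)+1\right)$ such layers, for a total of $O\left(|T_d|\log(|T_p|/|T_d|+1)\right)$. In the remaining regime, pairing the two layers that share each value of $\lfloor i/2\rfloor$, the bounds form a geometric sum of the shape $\sum_{j}(N/c^{j})\log\left(\frac{|T_d|}{N}c^{j}+1\right)$; substituting $x=N/c^{j}$ turns the summand into $x\log(|T_d|/x+1)$ over geometrically decreasing $x$ starting at $\Theta(\min(|T_p|,|T_d|))$, and bounding $\log(1+z)\le\log 2+\max(\log z,0)$ reduces it to the convergent series $\sum_j c^{-j}$ and $\sum_j j c^{-j}$. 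Writing $m=\min(|T_p|,|T_d|)$ and $n=\max(|T_p|,|T_d|)$, in both orientations of the two trees the two regimes together give $O\left(\boundcontent\right)$.

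Then I would reconcile the fact that for WB trees and treaps only the layer roots are assigned to layers. Since splitting sizes only shrink down the pivot tree, every node $u$ attached to a layer root $w$ has $|T_d(u)|\le |T_d(w)|$, so the full split work is at most $\sum_{w}(d(w)+1)\log(|T_d(w)|+1)$ summed over layer roots $w$. By Lemma~\ref{lem:treapchain}, $d(w)$ is bounded by a constant (in expectation for treaps, where the $d(w)$ are i.i.d.), so this sum is only a constant factor larger than the layer-root sum, which the disjoint-set and shrinkage arguments above already bound by $O\left(\boundcontent\right)$; for treaps the constant factor passes through the expectation by linearity and the independence in Lemma~\ref{lem:treapchain}.

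The hard part will be the saturated regime when the pivot tree is the larger one. A naive per-layer Jensen bound charges the topmost layers $\Theta(N)=\Theta(n)$ work, which overshoots badly whenever $\boundcontent=o(n)$ (for instance $m=1$ and $n$ huge). The refinement $t_i\le\min(s_i,|T_d|)$, which caps each layer at $O(|T_d|)$, combined with the fact that only $O(\log(n/m))$ layers are saturated, is precisely what recovers the optimal bound, and getting this interaction between the layer cap and the layer count right is the delicate step.
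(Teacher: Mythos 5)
Your proposal is correct and follows essentially the same route as the paper: a per-layer Jensen bound using the disjoint-set property (Lemma~\ref{lem:lessthanm}), the geometric layer-size decay of Lemma~\ref{lem:shrink} with monotonicity of $x\log(|T_d|/x+1)$, a case split at the threshold where the layer-size bound saturates at $|T_d|$, and the layer-root/$d(v)$ accounting of Lemma~\ref{lem:treapchain} for WB trees and treaps. Your cap $t_i\le\min(s_i,|T_d|)$ on the number of nonzero terms per layer is exactly the refinement the paper invokes when it restricts attention to nodes with positive splitting size, so nothing essential differs.
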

\begin{proof}
The total work of \split{} is the sum of the log of all the splitting
sizes on the pivot tree $O\left(\sum_{v\in T_p}\log
(|T_d(v)|+1)\right)$. Denote $l$ as the number of layers in the
tree. Also, notice that in the pivot tree, in each layer there are at most $|T_d|$ nodes with $|T_d(v_{kj})|>0$.
Since those nodes with splitting sizes of $0$ will not cost any work,
we can assume $s_i\le |T_d|$. We calculate the dominant term
$\sum_{v\in T_p}\log (|T_d(v)|+1)$ in the complexity by summing the
work across layers:
\begin{align*}
%\label{workequation}
\sum_{k=0}^{l} \sum_{j=1}^{s_k} \log \left(|T_d(v_{kj})|+1\right) &\le  \sum_{k=0}^{l} s_k \log \left(\frac{\sum_j |T_d(v_{kj})|+1}{s_k}\right)\\
&=\sum_{k=0}^{l} s_k\log\left(\frac{|T_d|}{s_k}+1\right)
\end{align*}

We split it into two cases. If $|T_d|\ge |T_p|$, $\frac{|T_d|}{s_k}$ always dominates $1$. we have:
\begin{align}
\label{eqn:from1}
\sum_{k=0}^{l} s_k \log \left(\frac{|T_d|}{s_k}+1\right)=~&\sum_{k=0}^{l} s_k \log \left(\frac{n}{s_k}+1\right)\\
\label{eqn:to1}
\le ~&\sum_{k=0}^{l} \frac{m}{c^{\lfloor k/2 \rfloor}} \log \left(\frac{n}{m/c^{\lfloor k/2 \rfloor}}+1\right)\\
\nonumber
\le ~&2\sum_{k=0}^{l/2} \frac{m}{c^k} \log \frac{n}{m/c^k}\\
\nonumber
\le~& 2\sum_{k=0}^{l/2} \frac{m}{c^k} \log \frac{n}{m}+2\sum_{k=0}^{l/2} k\frac{m}{c^k}\\
\nonumber
=~&O\left(m\log\frac{n}{m}\right)+O(m)\\
\label{eq:mlargern}
=~&O\left(\boundcontent\right)
\end{align}

If $|T_d|< |T_p|$, $\frac{|T_d|}{s_k}$ can be less than $1$ when $k$ is smaller, thus the sum should be divided into two parts. Also note that we only sum over the nodes with splitting size larger than $0$. Even though there could be more than $|T_d|$ nodes in one layer in $T_p$, only $|T_d|$ of them should count. Thus we assume $s_k\le |T_d|$, and we have:
\begin{align}
\label{eqn:from2}
\sum_{k=0}^{l} s_k \log \left(\frac{|T_d|}{s_k}+1\right)=~&\sum_{k=0}^{l} s_k \log \left(\frac{m}{s_k}+1\right)\\
\nonumber
\le ~& \sum_{k=0}^{2\log_c \frac{n}{m}} |T_d| \log \left(1+1\right) \\
\label{eqn:to2}
+& \sum_{k=2\log_c \frac{n}{m}}^{l} \frac{n}{c^{\lfloor k/2 \rfloor}} \log \left(\frac{m}{n/c^{\lfloor k/2 \rfloor}}+1\right)\\
\nonumber
%\le~& O(n\log \frac{m}{n}) + 2\sum_{k=\log_c \frac{m}{n}}^{l/2} \frac{m}{c^k} \log (\frac{n}{m/c^k})\\
%\nonumber
\le~& O\left(m\log \frac{n}{m}\right) + 2\sum_{k'=0}^{\frac{l}{2}-\log_c \frac{m}{n}} \frac{m}{c^{k'}} \log c^{k'}\\
\nonumber
=~&O\left(m\log\frac{n}{m}\right)+O(m)\\
\label{eq:nlargerm}
=~&O\left(m\log(\frac{n}{m}+1)\right)
\end{align}

From (\ref{eqn:from1}) to (\ref{eqn:to1}) and (\ref{eqn:from2}) to (\ref{eqn:to2}) we apply Lemma \ref{lem:shrink} and the fact that $f(x)=x\log ({n\over x}+1)$ is monotonically increasing when $x\le n$.

For WB trees and treaps, the calculation above only includes the log of splitting size on layer roots. We need to further prove the total sum of the log of all splitting size is still $O\left(\boundcontent\right)$. Applying Lemma \ref{lem:treapchain}, the expectation is less than:
\begin{align*}
&\mathbb{E}\left[2\sum_{k=0}^{l} \sum_{j=1}^{x_k} d(v_{kj})\log ((T_d(v_{kj})+1)\right]\\
=~&\mathbb{E}[d(v_{kj})]\times2\sum_{k=0}^{l} \sum_{j=1}^{x_k} \log ((T_d(v_{kj})+1)\\
=~&O\left(\boundcontent\right)
\end{align*}
For WB trees $d(v_{kj})$ is no more than a constant, so we can also come to the same bound.
% that $s(v_{kj})$ are i.i.d. with expected value of a constant.

To conclude, the split work on all four balancing schemes of all three functions is $O\left(\boundcontent\right)$.
\end{proof}

\begin{theorem}
\label{thm:mainwork}
The total work of \union{}, \intersect{} or \difference{} of all four balancing schemes on two trees of size $m$ and $n$ ($m\ge n$) is $O\left(\boundcontent\right)$.
\end{theorem}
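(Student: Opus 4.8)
The plan is to assemble Theorem~\ref{thm:mainwork} directly from the two structural observations stated at the opening of this section together with Theorems~\ref{thm:joinlarger} and~\ref{thm:workofsplit}; essentially no new calculation should be needed.

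First I would recall the accounting observation: every recursive call of \union{}, \intersect{} or \difference{} performs exactly one \split{} on the way down and one \join{} (or \joinTwo{}) on the way up, with only a constant amount of additional bookkeeping between them, and the base cases cost constant work. Consequently the total work is, up to a constant factor, the sum of the \emph{split work} and the \emph{join work}, with the per-call overhead absorbed into the ``$+1$'' already present in the split-work term $O\left(\sum_{v\in T_p}(\log|T_d(v)|+1)\right)$. This reduces the claim to bounding split work plus join work.

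Next I would apply Theorem~\ref{thm:joinlarger}, which shows that at every node of the recursion the \join{} (or \joinTwo{}) cost is asymptotically dominated by the corresponding \split{} cost. Summing this node-wise inequality over all recursive calls---appealing to linearity of expectation in the treap case, where the per-node bound holds only in expectation---shows the total join work is asymptotically at most the total split work. Hence the total work is of the same order as the split work, and invoking Theorem~\ref{thm:workofsplit}, which bounds the split work on two trees of sizes $m=\min(|T_p|,|T_d|)$ and $n=\max(|T_p|,|T_d|)$ by $O\left(\boundcontent\right)$, gives the claimed bound for all four balancing schemes (in expectation for treaps).

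I expect the only delicate point to be the treap case. Both the node-wise domination of join work by split work and the split-work bound itself hold only in expectation and rely on the i.i.d.\ structure of the $d(v)$ supplied by Lemma~\ref{lem:treapchain}, so I must check that aggregating the per-node expected bounds by linearity of expectation is legitimate and that no correlation between the number of recursive calls and the per-call costs invalidates the summed expectation. For the deterministic schemes (AVL, RB and WB trees) every step is a worst-case inequality and the combination is immediate.
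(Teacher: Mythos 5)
Your proposal is correct and follows exactly the paper's route: the paper proves this theorem by citing the section's opening accounting observations together with Theorems~\ref{thm:joinlarger} and~\ref{thm:workofsplit}, precisely as you do. Your extra care about linearity of expectation in the treap case is a reasonable elaboration of a point the paper leaves implicit, but it does not change the argument.
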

This directly follows Theorem \ref{thm:joinlarger} and \ref{thm:workofsplit}.

\begin{theorem}
\label{thm:mainspan}
The span of \union{} and \intersect{} or \difference{} on all four balancing schemes is $O(\log n \log m$). Here $n$ and $m$ are the sizes of the two tree.
\end{theorem}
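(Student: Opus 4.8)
The plan is to bound the span by reusing the recursion tree already set up for the work analysis, namely the one whose shape is the pivot tree $T_p=T_2$. At each node $v\in T_p$ the algorithm first performs \split$(T_d(v),k(v))$, then launches its two recursive calls in parallel, and finally combines the two returned trees with a single \join{} or \joinTwo{}. Both \split{} and \join{}/\joinTwo{} are sequential, so their span equals their work, and the two recursive calls compose with the $\max$ rule. Hence the span obeys
\[
S(v)\;=\;S_{\mathrm{split}}(v)\;+\;\max\bigl(S(L(v)),\,S(R(v))\bigr)\;+\;S_{\mathrm{join}}(v)\;+\;O(1),
\]
so the span of the whole computation is at most the maximum, over root-to-leaf paths of $T_p$, of the summed per-node costs $S_{\mathrm{split}}(v)+S_{\mathrm{join}}(v)$.

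First I would bound the per-node cost. The split at $v$ operates on $T_d(v)\subseteq T_d$, so $S_{\mathrm{split}}(v)=O(\log|T_d(v)|+1)=O(\log|T_d|+1)$ by Theorem~\ref{thm:split}. For the join the key input is Theorem~\ref{thm:joinlarger}: at every recursion node the \join{} (or \joinTwo{}) work is asymptotically no larger than the \split{} work, and since these operations are sequential this bounds the join span too, giving $S_{\mathrm{join}}(v)=O(\log|T_d(v)|+1)=O(\log|T_d|+1)$. Thus each node on a path contributes $O(\log|T_d|+1)$ to the span.

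Next I would bound the number of nodes on any root-to-leaf path of $T_p$, which is exactly the recursion depth and equals $h(T_p)=O(\log|T_p|)$ for all four schemes (for treaps this height bound holds with high probability). Multiplying, the span along any path is
\[
O\!\left(h(T_p)\cdot(\log|T_d|+1)\right)=O\!\left(\log|T_p|\cdot\log|T_d|\right).
\]
Because $\{|T_p|,|T_d|\}=\{m,n\}$, the product $\log|T_p|\cdot\log|T_d|$ equals $\log n\,\log m$ regardless of which input is the pivot; this symmetry is precisely what makes the bound $O(\log n\log m)$ hold even when the larger tree is the one being split. The same argument applies verbatim to \intersect{} and \difference{}: as observed at the beginning of this section their recursion structure down to the base cases is identical to \union{}, and their joins are covered by the same Theorem~\ref{thm:joinlarger}.

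The main obstacle is the treap case, where both the height bound $h(T_p)=O(\log|T_p|)$ and the per-node join cost (Lemma~\ref{lem:treap}) are probabilistic rather than worst-case, so naively summing expectations along a path is delicate because the per-node costs are not independent and the critical path itself is random. I would resolve this by using the high-probability height bound to cap the path length at $O(\log|T_p|)$, bounding each per-node join cost in expectation (and, where needed, w.h.p.\ via Lemma~\ref{lem:treap}), and then combining these by linearity of expectation over the at-most-$O(\log|T_p|)$ nodes on the path. This yields the stated bound in expectation, consistent with the expectation qualifier in Theorem~\ref{thm:unionwork}.
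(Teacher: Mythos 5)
Your proposal is correct and follows essentially the same route as the paper: the paper's proof sets up the recurrence $D(h(T_p),h(T_d))\le D(h(T_p)-1,h(T_d))+2h(T_d)$, which is exactly your observation that each of the $h(T_p)$ levels of the recursion contributes $O(\log|T_d|)$ span from the sequential \split{} and \join{}, yielding $O(h(T_p)\,h(T_d))=O(\log n\log m)$. Your additional care with the probabilistic treap case is a reasonable elaboration of a point the paper leaves implicit, but it does not change the argument.
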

\begin{proof}
For the span of these algorithms, we denote $D(h_1,h_2)$ as the span on \union{}, \intersect{} or \difference{} on two trees of height $h_1$ and $h_2$. According to Theorem \ref{thm:joinlarger}, the work (span) of \split{} and \join{} are both $O(\log |T_d|)=O(h(T_d))$. We have:
$$D(h(T_p),h(T_d))\le D(h(T_p)-1,h(T_d))+2h(T_d)$$
Thus $D(h(T_p),h(T_d))\le 2h(T_p)h(T_d)=O(\log n \log m)$.
\end{proof}

Combine Theorem \ref{thm:mainwork} and \ref{thm:mainspan} we come to Theorem \ref{thm:unionwork}.

\hide{
\begin{figure}
\small
\begin{lstlisting}
joinRightRB$(\Tl,k,\Tr)$ =
  ($l,k',c$)=expose($\Tl$);
  if ($r(c) = r(\Tr)$) and ($c$.color=black) then
    $k$.color=red;
    $\node(l,k',\node(c,k,\Tr))$;
  else
    $T'$ = joinRight$(c,k,\Tr)$;
    $(l_1,k_1,r_1)$ = expose$(T')$;
    $(l_2,k_2,r_2)$ = expose$(r_1)$
    $T''$ = $\node$($l, k', T'$);
    if ($k_1$.color=red) and ($k_2$.color=red) then
    $k_2$.color=black;
    rotateLeft$(T'')$
    else $T''$@\vspace{.1in}@
joinRB$(\Tl,k,\Tr)$ =
  if $r(\Tl) > r(\Tr)+1 $ then joinRight$(\Tl,k,\Tr)$
  else if $r(\Tr) > r(\Tl) + 1$ then joinLeft$(\Tl,k,\Tr)$
  else $\node(\Tl,k,\Tr)$
\end{lstlisting}
\caption{RB \join{} algorithm.}
\label{fig:rbjoin}
\end{figure}
}

\hide{
\begin{figure}
\small
\begin{lstlisting}
joinRightRB$(\Tl,k,\Tr)$ =
  if ($r(\Tl) = \lfloor r(\Tr)/2 \rfloor\times 2$) then
    $\node(\Tl,\left<k,\texttt{red}\right>,\Tr)$;
  else
    $(L',\left<k',c'\right>,R')$=expose($\Tl$);
    $T'$ = $\node(L', \left<k',c'\right>$,joinRightRB$(R',k,\Tr))$;
    if ($c'$=black) and ($c(R(T'))=c(R(R(T')))$=red) then
      $c(R(R(T')))$=black;
      $T''$=rotateLeft$(T')$
    else $T''$@\vspace{.1in}@
joinRB$(\Tl,k,\Tr)$ =
  if $\lfloor r(\Tl)/2 \rfloor > \lfloor r(\Tr)/2 \rfloor$ then $T'=$joinRightRB$(\Tl,k,\Tr)$;
    if ($c(T')$=red) and ($c(R(T'))$=red) then
      $\node(L(T'),\left<k(T'),\texttt{black}\right>,R(T'))$
    else $T'$
  else if $\lfloor r(\Tr)/2 \rfloor > \lfloor r(\Tl)/2 \rfloor$ then $T'=$joinLeftRB$(\Tl,k,\Tr)$;
    if ($c(T')$=red) and ($c(L(T'))$=red) then
      $\node(L(T'),\left<k(T'),\texttt{black}\right>,R(T'))$
    else $T'$
  else if ($c(\Tl)$=black) and ($c(\Tr)$=black) then
    $\node(\Tl,\left<k,\texttt{red}\right>,\Tr)$
  else $\node(\Tl,\left<k,\texttt{black}\right>,\Tr)$
\end{lstlisting}
\caption{RB \join{} algorithm.}
\label{fig:rbjoin2}
\end{figure}

\begin{figure}
\small
\begin{lstlisting}
joinRightWB$(\Tl,k,\Tr)$ =
  ($l,k',c$)=expose($\Tl$);
  if (balance($|\Tl|,|\Tr|$) then $\node(\Tl,k,\Tr))$;
  else
    $T'$ = joinRightWB$(c,k,\Tr)$;
    $(l_1,k_1,r_1)$ = expose$(T')$;
    if balance$(|l|,|T'|)$ then $\node$($l, k', T'$)
    else if (balance$(|l|,|l_1|)$) and (balance$(|l|+|l_1|,r_1)$) then
      rotateLeft($\node$($l, k', T'$))
    else rotateLeft($\node$($l,k'$,rotateRight$(T')$))@\vspace{.1in}@
joinWB$(\Tl,k,\Tr)$ =
  if heavy($\Tl, \Tr$) then joinRightWB$(\Tl,k,\Tr)$
  else if heavy($\Tr,\Tl$) then joinLeftWB$(\Tl,k,\Tr)$
  else $\node(\Tl,k,\Tr)$
\end{lstlisting}
\caption{WB \join{} algorithm.}
\label{fig:wbjoin}
\end{figure}

\begin{figure}
\small
\begin{lstlisting}
joinTreap$(\Tl,k,\Tr)$ =
  ($l_1,k_1,r_1$)=expose($\Tl$);
  ($l_2,k_2,r_2$)=expose($\Tr$);
  if prior($k_1, k$) and prior($k_1,k_2$) then
    $\node$($l_1,k_1$,joinTreap$(r_1,k,\Tr)$)
  else if prior($k_2,k$) and prior($k_2$,$k_1$) then
    $\node$(joinTreap($\Tl,k,l_2$),$k_2,r_2$)
  else $\node(\Tl,k,\Tr)$
\end{lstlisting}
\caption{Treap \join{} algorithm.}
\label{fig:treapjoin}
\end{figure}
}

\section{Experiments}
\label{sec:exp}
To evaluate the performance of our algorithms we performed several
experiments across the four balancing schemes using different set functions,
while varying the core count and tree sizes.
We also compare the performance of our implementation to other existing libraries and algorithms.
%For the sequential setting we compare ourself to the STL's \texttt{std::set} and \texttt{std::vector},
%while for parallel experiments we compare our code to the MCSTL (Multi Core Standard Template) library \cite{FS07} and
%parallel weight-balanced B-trees (WBB-tree) \cite{WBTree}.

\hide{
We compare the performance of our library on a single core to the C++
Standard Template Library (STL) implementation of ordered maps, which is
based on red-black trees, and their generic implementation of
\union{}.
}
\paragraph{Experiment setups and baseline algorithms}
For the experiments we use a 64-core machine with 4 x AMD Opteron(tm)
Processor 6278 (16 cores, 2.4GHz, 1600MHz bus and 16MB L3 cache). Our code was
compiled using the \texttt{g++} 4.8 compiler with the Cilk Plus
extensions for nested parallelism. The only compilation flag we used was the \texttt{-O2} optimization flag. In all our experiments we use keys
of the double data type. The size of each node is about 40 bytes,
including the two child pointers, the key, the balance
information, the size of the subtree, and a reference count. We
generate multiple sets varying in size from $10^4$ to $10^8$. Depending on the experiment the keys are drawn either from an uniform or a Gaussian distribution.
We use $\mu$ and $\sigma$ to denote the mean and the standard deviation in Gaussian distribution.
Throughout this section $n$ and $m$ represent the two input sizes for functions with two input sets ($n\ge m$).

We test our algorithm by comparing it to other available
implementations. This includes the sequential version of the set
functions defined in the C++ Standard Template Library
(STL)~\cite{musser2009stl} and STL's \texttt{std::set} (implemented by
RB tree).  The STL supports the set operations \texttt{set\_union},
\texttt{set\_intersection}, and \texttt{set\_difference} on any
container class. Using an \texttt{std::vector} these algorithms
achieve a runtime of $O(m + n)$.  Since the STL does not offer any
parallel version of these functions we could only use it for
sequential experiments.  To see how well our algorithm performs in a
parallel setting, we compare it to parallel WBB-trees \cite{WBTree}
and the MCSTL library \cite{FS07}.  WBB-trees, as well as the MCSTL,
offer support for bulk insertions and deletions. They process the bulk
updates differently.  The MCSTL first splits the main tree
among $p$ processors, based on the bulk sequence, and then inserts the
chunks dynamically into each subtree.  The WBB-tree recursively inserts
the bulk in parallel into the main tree.  To deal with heavily
skewed sequences they use partial tree reconstruction for fixing
imbalances, which takes constant amortized time. The WBB-tree has a
more cache aware layout, leading to a better cache utilization
compared to both the MCSTL and our implementation. To make a
comparison with these implementations we use their bulk insertions,
which can also be viewed as an union of two sets. However notice that
WBB-trees take the bulk in the form of a sorted sequence, which gives
them an advantage due to the faster access to the one array than to a tree, and
far better cache performance (8 keys per cache line as opposed to 1).

\hide{
We perform our experiments with different data distributions, such as the uniform distribution and Gaussian distribution. In the following
we use $\mu$ and $\sigma$ to denote the mean and the standard deviation in the Gaussian distribution.}

\begin{center}
\begin{figure*}[t!]
\begin{minipage}{.24\textwidth}
  \centering \includegraphics[width=1\columnwidth]{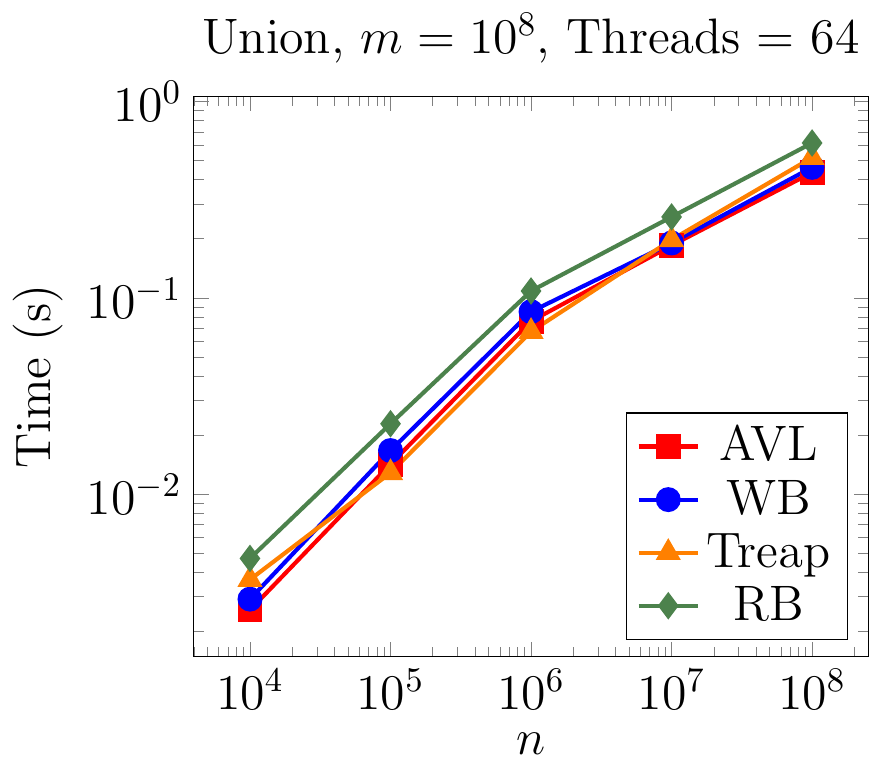} \\ (a)
\end{minipage}
\begin{minipage}{.24\textwidth}
\centering \includegraphics[width=1\columnwidth]{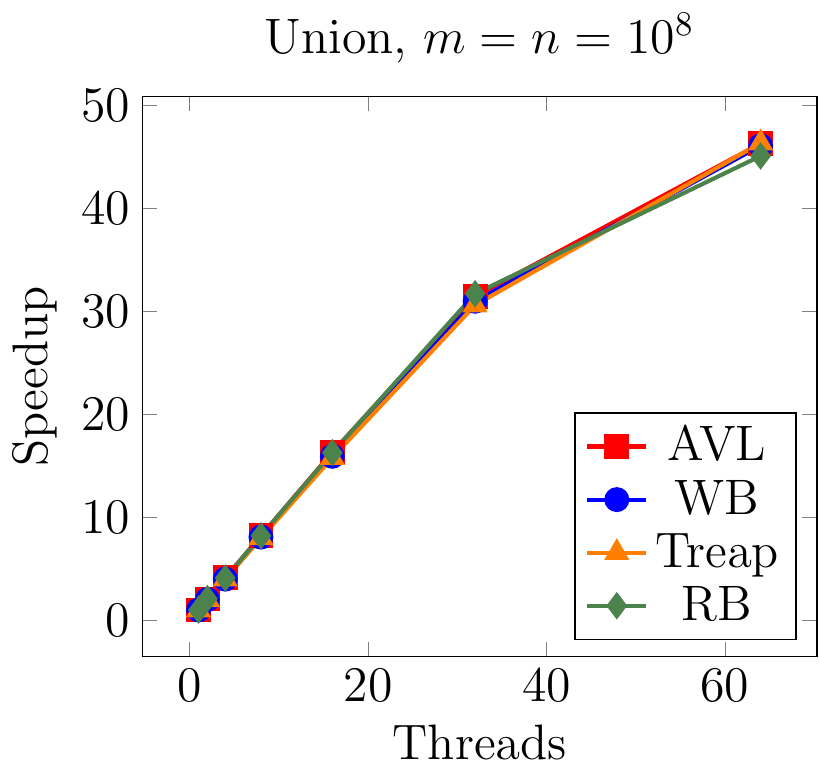} \\ (b)
\end{minipage}
\begin{minipage}{.24\textwidth}
\centering \includegraphics[width=1\columnwidth]{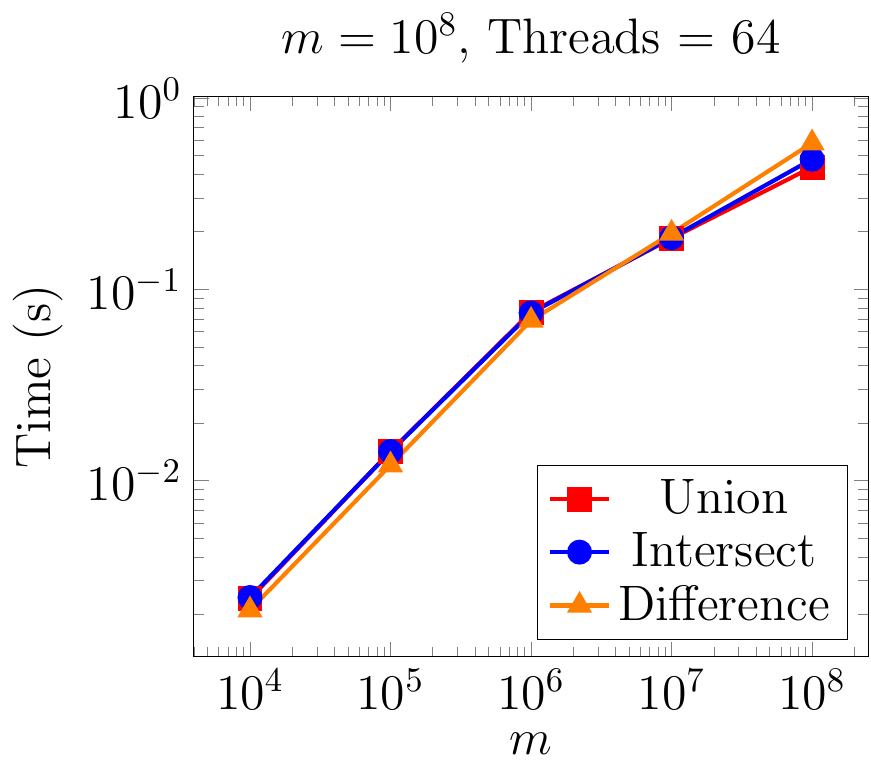} \\ (c)
\end{minipage}
%\vspace{.1in}
\begin{minipage}{.24\textwidth}
  \centering \includegraphics[width=1\columnwidth]{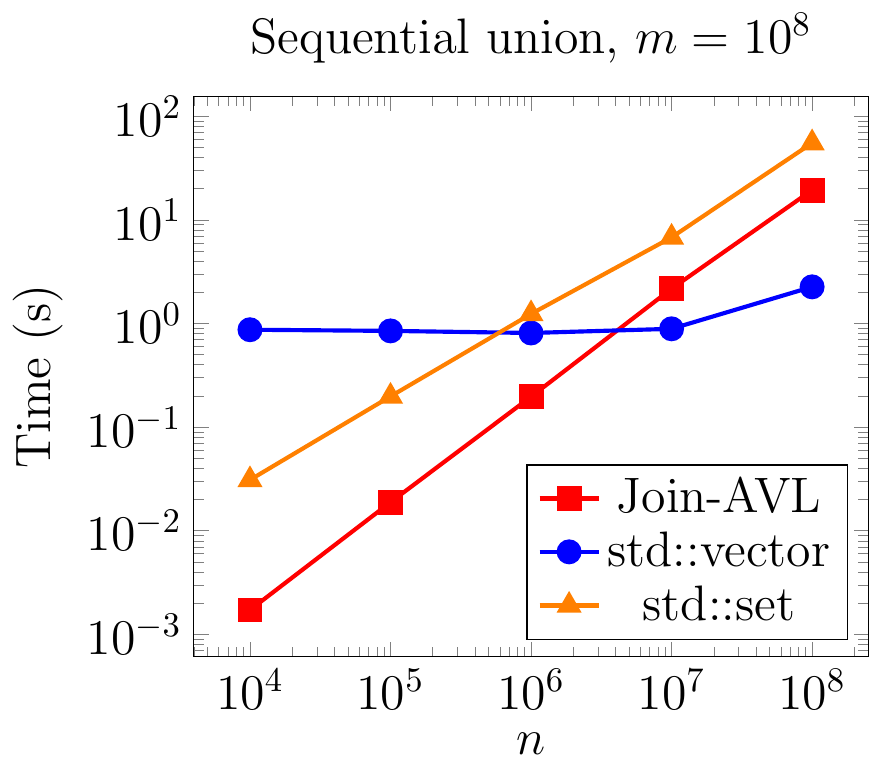} \\ (d)
\end{minipage}

\vspace{.1in}

\begin{minipage}{.24\textwidth}
\centering\includegraphics[width=1\columnwidth]{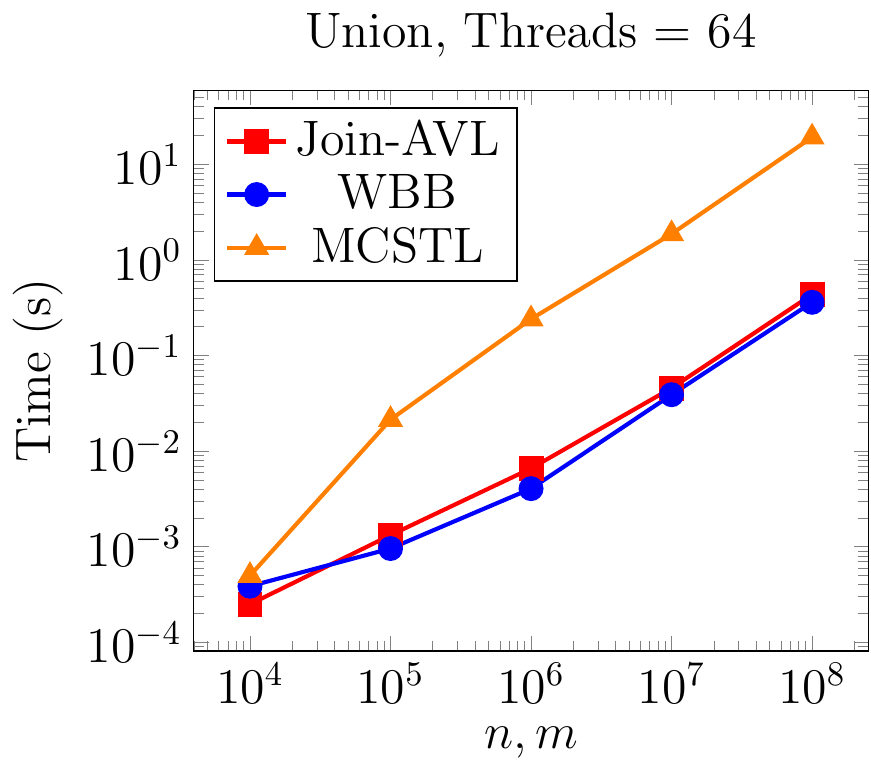}\\(e)
\end{minipage}
%\begin{minipage}{.32\textwidth}
%  \centering \includegraphics[scale=\figscale]{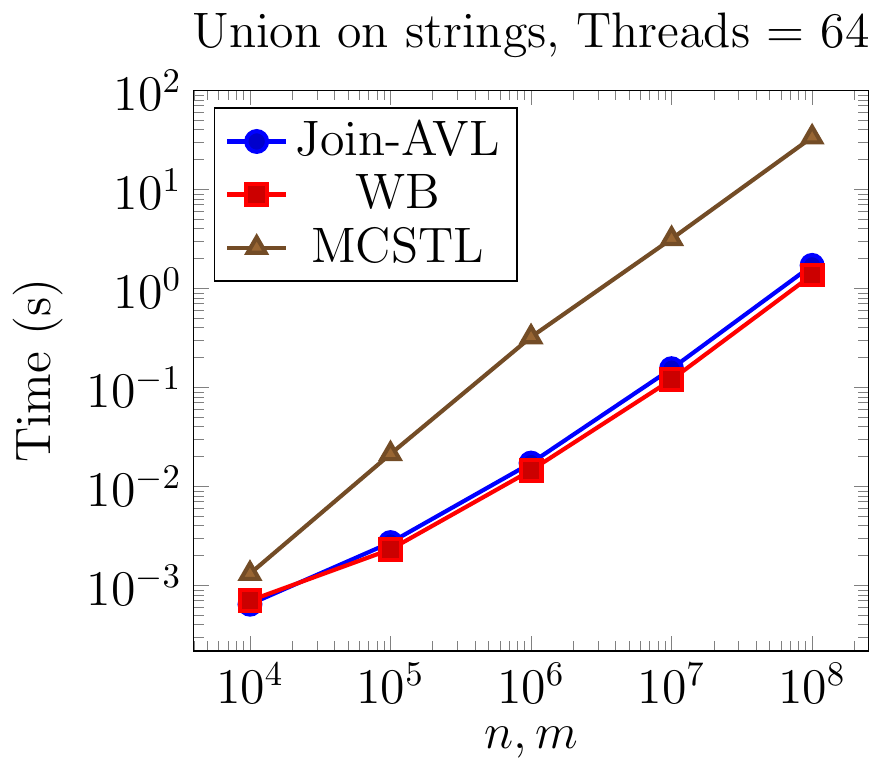} \\ (f)
%\end{minipage}
%\vspace{.1in}
\begin{minipage}{.24\textwidth}
\centering\includegraphics[width=1\columnwidth]{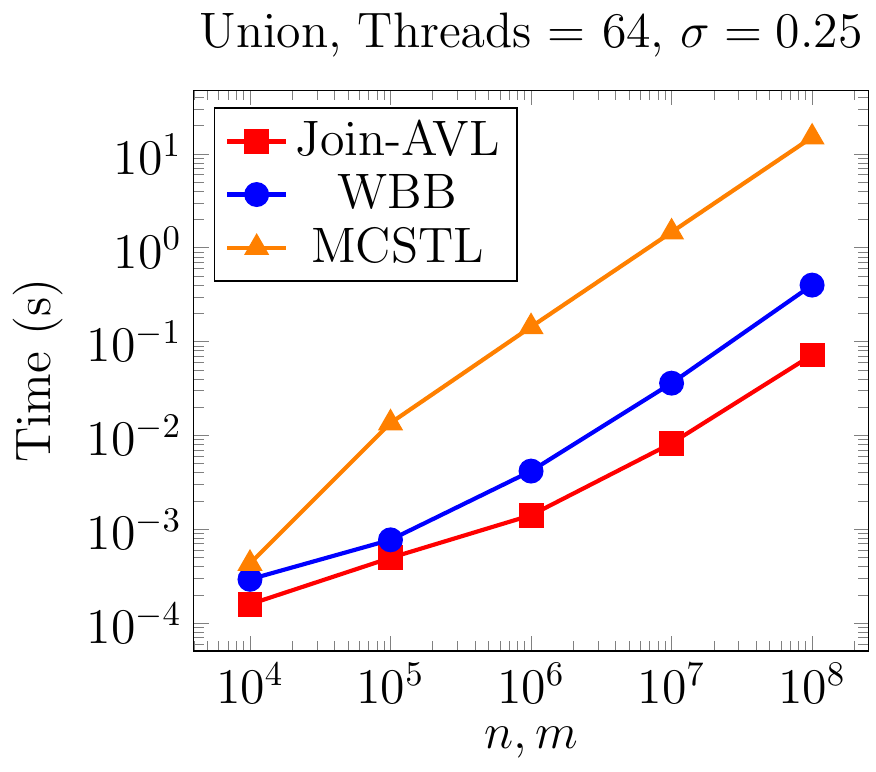}\\(f)
\end{minipage}
\begin{minipage}{.24\textwidth}
\centering\includegraphics[width=1\columnwidth]{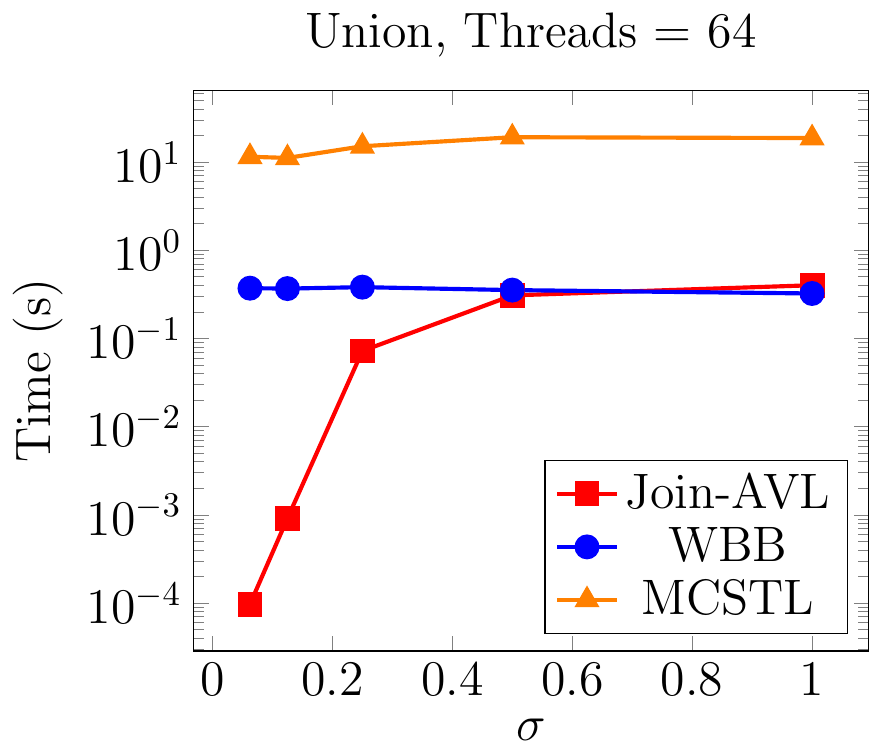}\\(g)
\end{minipage}
\begin{minipage}{.24\textwidth}
\centering\includegraphics[width=1\columnwidth]{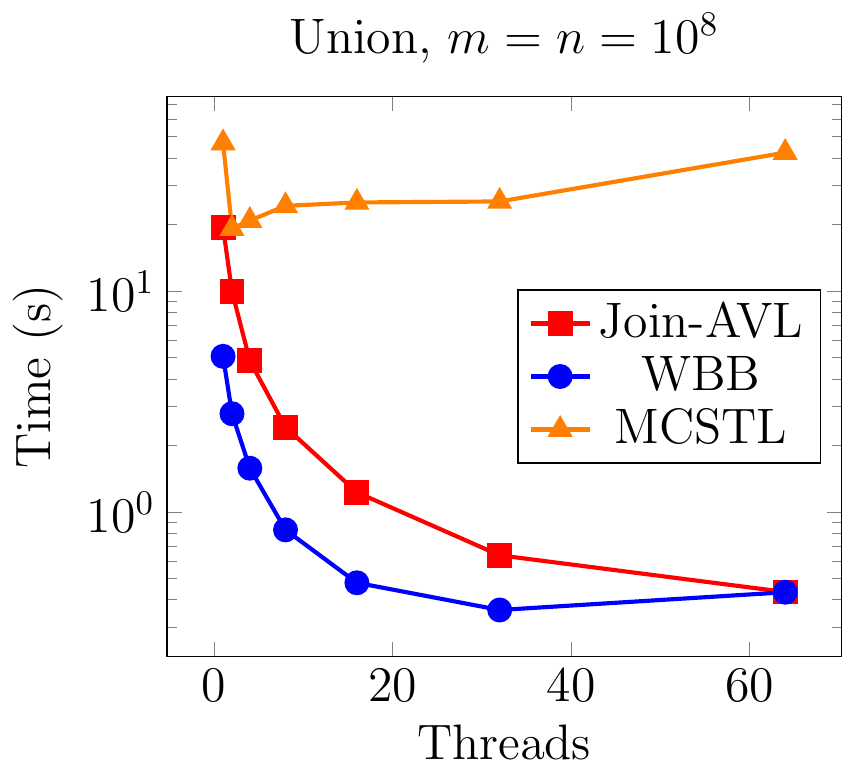}\\(h)
\end{minipage}

%\nocaptionrule
\caption{
(a) Times for \union{} as a function of size ($n = 10^8$) for different
BBSTs;
(b) speed up of \union{} for different BBSTs;
(c) times for various operations on AVL trees as a function of size ($n  = 10^8$);
(d) comparing STLs \texttt{set\_union} with our \union{};
(e, f, g, h) comparing our \union{} to other parallel search trees; (e, h) input keys are uniformly distributed doubles in the range of [0, 1];
(f, g) inputs keys follow a normal distribution of doubles -
the mean of the main tree is always $\mu_1=0$, while the mean of the bulk is $\mu_2=1$. Figure (f) uses a standard deviation of $\sigma=0.25$, while
Figure (g) shows the performance across different standard deviations.
}
\label{fig:times}
\end{figure*}
\end{center}

\paragraph{Comparing the balancing schemes and functions}
To compare the four balancing schemes we choose \union{} as the
representative operation. Other operations would lead to similar
results since all operations except \join{} are generic across the
trees. We compare the schemes across different thread counts and different sizes.

Figure~\ref{fig:times} (a) shows the runtime of \union{} for various
tree sizes and all four balancing schemes across 64
cores. The
times are very similar among the balancing schemes---they
differ by no more than 10\%.

Figure~\ref{fig:times} (b) shows the speedup curves for \union{} on varying
core numbers
with two inputs of size $10^8$. All balancing
schemes achieve a speedup of about 45 on 64 cores, and about 30 on 32
cores. The less-than-linear speedup beyond 32 cores is not
due to lack of parallelism, since when we ran the same experiments on
significantly smaller input (and hence less parallelism) we get very
similar curves (not shown). Instead it seems to be due to saturation
of the memory bandwidth.

We use the AVL tree as the representative tree to compare different
functions. Figure~\ref{fig:times} (c) compares the \union{},
\intersect{} and \difference{} functions.  The size of the
larger tree is fixed ($10^8$), while the size of the smaller
tree varies from $10^4$ to $10^8$.
As the plot indicates, the three functions have very similar performance.

The experiments are a good indication of the performance
of different balancing schemes and different functions, while controlling other factors. The
conclusion is that all schemes perform almost equally on all the set functions. It is
not surprising that all balancing schemes achieve similar performance because the dominant cost is in cache misses along the
paths in the tree, and all schemes keep the trees reasonably balanced.
The AVL tree is always slightly faster than the other trees and this
is likely due to the fact that they maintain a slightly stricter balance than
the other trees, and hence the paths that need to be traversed are
slightly shorter. For different set functions the performance is also as expected given the similarity of the code.

Given the result that the four balancing schemes do not have a big difference in timing
and speedup, nor do the three set functions, in the following experiments we use the AVL tree along with
\union{} to make comparisons with other implementations.

\paragraph{Comparing to sequential implementations}
The STL supports \texttt{set\_union} on any sorted container class, including sets
based on red-black trees, and sorted vectors (arrays). The STL
\texttt{set\_union} merges the two sorted containers by moving from left to right
on the two inputs, comparing the current values, and writing the lesser
to the end of the output. For two inputs of size $n$ and $m$, $m \le
n$, it takes $O(m + n)$ time on \texttt{std::vectors}, and $O((n+m) \log (n+m))$
time on \texttt{std::set} (each insertion at the end of the output
red-black tree takes $O(\log(n+m))$ time). In the case of ordered sets we can do better
by inserting elements from the smaller set into the larger, leading a time of $O(m \log (n+m)$.
This is also what we do in our experiments. For vectors we stick with the available \texttt{set\_union} implementation.

  Figure~\ref{fig:times} (d)
gives a comparison of times for \union{}. For equal lengths our implementation is
about a factor of 3 faster than set variant (red-black trees), and about
8 times slower than the vector variant. This is not surprising since we are
asymptotically faster than their red-black tree implementation, and
their array-based implementation just reads and writes the values, one by one,
from flat arrays, and therefore has much less overhead and much
fewer cache misses. For taking the union of smaller and larger inputs,
our \union{} is orders of magnitude faster than either STL version.
This is because their theoretical work bound ($O(m+n)$ and $O(m\log(m+n)$)
is worse than our $O(m \log (n/m+1))$, which
is optimal in comparison model.
%they both take time proportional to the larger length
%while ours takes $O(n \log (m/n))$  work (time).

\paragraph{Comparing to parallel implementations on various input distributions}
We compare our implementations to other parallel search trees, such as the WBB-trees, as described in \cite{WBTree}, and parallel
RB trees from the MCSTL~\cite{FS07}. We test the performance on different input distributions.

In Figure \ref{fig:times} (e) we show the result of \union{} on
uniformly distributed doubles in the range of [0,1] across 64
cores. We set the input size to $n=m=10^i$, $i$ from $4$ to $8$. The
three implementations have similar performance when $n=m=10^4$. As the
input size increases, MCSTL shows much worse performance than the
other two because of the lack of parallelism (Figure \ref{fig:times}
(h) is a good indication), and the WBB-tree implementation is slightly
better than ours.  For the same reason that STL vectors outperform STL
sets (implemented with RB trees) and our sequential implementation,
the WBB-trees take the bulk as a sorted array, which has much less
overhead to access and much better cache locality.   Also their tree
layout is more cache efficient and the overall height is lower since
they store multiple keys per node.

%Figure \ref{fig:times} (f) shows the result on strings on all 64 cores with set sizes $n=m=10^i$ for $i=4$ through $8$. The strings are of length from 3 through 20, each character
%picked randomly from a to z. The trend of the curves in (f) are very similar to (e).

Figure \ref{fig:times} (f) shows the result of a Gaussian distribution with doubles, also on all 64 cores with set sizes of $n=m=10^i$ for $i=4$ through $8$. The distributions of the two sets have
means at $0$ and $1$ respectively, and both having a standard deviation of $0.25$, meaning that the data in the two sets have less overlap comparing to a uniform distribution (as in (e)). In this case our code achieves better performance than the other two implementations. For our algorithms less overlap in the data means more parts of the trees will be untouched, and therefore less nodes will be operated on. This in turn leads to less processing time.

We also do an in-depth study on how the overlap of the data sets affects the performance of each algorithm. We generate two sets of size $n=m=10^8$, each from a Gaussian distribution.
The distributions of the two sets have means at $0$ and $1$ respectively, and both have an equal standard deviation varying in $\{1,1/2,1/4,1/8,1/16\}$. The different standard deviations are to control
the overlap of the two sets, and ideally less overlap should simplify the problem. Figure \ref{fig:times} (g) shows the result of the three parallel implementations on a Gaussian distribution with different standard deviations. From the figure we can see that MCSTL and WBB-tree are not affected by different standard deviations, while our join-based union takes advantage of less overlapping and achieves a much better performance when $\sigma$ is small. This is not surprising since when the two sets are less overlapped, e.g., totally disjoint, our \union{} will degenerate to a simple \join{}, which costs only $O(\log n)$ work. This behavior is consistent with the ``adaptive'' property (not always the worst-case) in \cite{demaine2000adaptive}. This indicates that our algorithm is the only one among the three parallel implementations that can detect and take advantage of less overlapping in data, hence have a much better performance when the two operated sets are less overlapped.

We also compare the parallelism of these implementations. In Figure \ref{fig:times} (h) we show their performance across 64 cores.
The inputs are both of size $10^8$, and generated from an uniform distribution of doubles.
It is easy to see that MCSTL does not achieve good parallelism beyond 16 cores, which explains why the MCSTL always performs the worst on 64 cores in all settings.
As we mentioned earlier, the WBB-tree are slightly faster than our code, but when it comes to all 64 cores, both algorithms have similar performance. This indicates that our algorithm achieves better parallelism.

To conclude, in terms of parallel performance, our code and WBB-trees are always much better than the MCSTL because of MCSTL's lack of parallelism. WBB-trees achieve a slightly better performance than ours on uniformly distributed data, but it does not improve when the two sets are less overlapped. Thus our code is much better than the other two implementations on less overlapped data, while still achieving a similar performance with the other algorithms when the two sets are more intermixed with each other.

\newpage
\section{Conclusions}
\label{conclusion}
In this paper, we study ordered sets implemented with balanced binary search trees.  We show for the first time that a very simple ``classroom-ready'' set of algorithms due to Adams' are indeed work optimal when used with four different balancing schemes--AVL, RB, WB trees and treaps---and also highly parallel.  The only tree-specific algorithm that is necessary is the \join, and even the \join{}s are quite simple, as simple as \insertnew{} or \delete{}.   It seems it is not sufficient to give a time bound to \join{} and base analysis on it.   Indeed if this were the case it would have been done years ago.   Instead our approach defines the notion of a rank (differently for different trees) and shows invariants on the rank.  It is important that the cost of \join{} is proportional to the difference in ranks.     It is also important that when joining two trees the resulting rank is only a constant bigger than the larger rank of the inputs.    This insures that when joins are used in a recursive tree, as in \union{}, the ranks of the results in a pair of recursive calls does not differ much on the two sides.  This then ensures that the set functions are efficient.

We also test the performance of our algorithm. Our experiments show that our sequential algorithm is about 3x faster for union on two maps of size $10^8$ compared to the STL red-black tree implementation.  In parallel settings our code is much better than the two baseline algorithms (MCSTL and WBB-tree) on less overlapped data, while still achieves similar performances with WBB-tree when the two sets are more intermixed. Our code also achieves 45x speedup on 64 cores.

\bibliographystyle{abbrv}
\bibliography{main}
\appendix
\section{Proofs for Some Lemmas}

\subsection{Proof for Lemma \ref{lem:treapchain}}
\label{appendix:treapchain}
\begin{proof}
One observation in WB trees and treaps is that all nodes attached to a
single layer root form a chain. This is true because if two children of one node $v$ are both in layer $i$, the weight of $v$ is more than $2^{i+1}$, meaning that $v$ should be layer $i+1$.

For a layer root $v$ in a WB tree on layer $k$, $w(v)$ is at most $2^{k+1}$. Considering the balance invariant that its child has weight at most $(1-\alpha)w(v)$, the weight of the $t$-th generation of its descendants is no more than $2^{k+1}(1-\alpha)^{t}$. This means that after $t^{*}=\log_{\frac{1}{1-\alpha}}2$ generations, the weight should decrease to less than $2^k$. Thus $d(v)\le \log_{\frac{1}{1-\alpha}}2$, which is a constant.

For treaps consider a layer root $v$ on layer $k$ that has weight
$N\in[2^k,2^{k+1})$.  The probability that $d(v)\ge 2$ is equal to the
probability that one of its grandchildren has weight at least $2^k$.  This
probability $P$ is:

\begin{align}
P&=\frac{1}{2^k}\sum_{i=2^k+1}^{N}\frac{i-2^k}{i}\\
&\le\frac{1}{2^k}\sum_{i=2^k+1}^{2^{k+1}}\frac{i-2^k}{i}\\
%&=1-\sum_{i=2^k+1}^{2^{k+1}}\frac{1}{i}\\
&\approx 1-\ln 2
\end{align}
We denote $1-\ln 2$ as $p_c$. Similarly, the probability that $d(v)\ge 4$ should be less than $p_c^2$, and the probability shrink geometrically as $d(v)$ increase. Thus the expected value of $d(v)$ is a constant.

Since treaps come from a random permutation, all $s(v)$ are i.i.d.
\end{proof}

\subsection{Proof for Lemma \ref{lem:AVLRBranksum}}
\label{appendix:AVLRBranksum}
\begin{proof}
  We are trying to show that for $T_r=$\union$(T_p,T_d)$ on AVL, RB or
  WB trees, if $r(T_p)> r(T_d)$ then $r(T_r) \le r(T_p)+r(T_d)$.

  For AVL and RB trees we use induction on $r(T_p)+r(T_d)$.  When
  $r(T_d)+r(T_p)=1$ the conclusion is trivial.  If $r=r(T_p)>r(T_d)$,
  $T_p$ will be split into two subtrees, with rank at most $r(T_p)-1$
  since we remove the root.  $T_d$ will be split into two trees with
  height at most $r(T_d)$ (Theorem \ref{thm:split}).
  Using the inductive hypothesis, the two recursive calls will return
  two trees of height at most $r(T_p)-1+r(T_d)$. The result of the
  final \join{} is therefore at most $r(T_p)+r(T_d)$.

For WB trees, $|T|\le |T_p|+|T_d|\le 2|T_p|$. Thus $r(T)\le r(T_p)+1 \le r(T_p)+r(T_d)$.
\end{proof}

\hide{
\subsection{Proof for Theorem \ref{thm:workofboundary}}
\label{appendix:workofboundary}
\begin{proof}
Notice that the cost of the boundary condition on each leaf is $O(1)$. Also, if the splitting size on a node in the splitting tree is $0$, it will not go further down. If the smaller tree is the pivot tree, there are at most $\min(|T_p|,|T_d|)$ leaves. Thus the total base cost is $O(\min(|T_p|,|T_d|))$. If the larger tree is the pivot tree, we remove all nodes with splitting size $0$. The remaining tree has at most $\min(|T_p|,|T_d|)$ leaves, which are the base conditions. The overall base cost is also $O(\min(|T_p|,|T_d|))$.
\end{proof}
}

\subsection{Proof for Lemma \ref{lem:wbvalid}}
\label{appendix:wbvalid}
\begin{proof}
\begin{figure}
  % Requires \usepackage{graphicx}
  \centering
  \includegraphics[width=0.6\columnwidth]{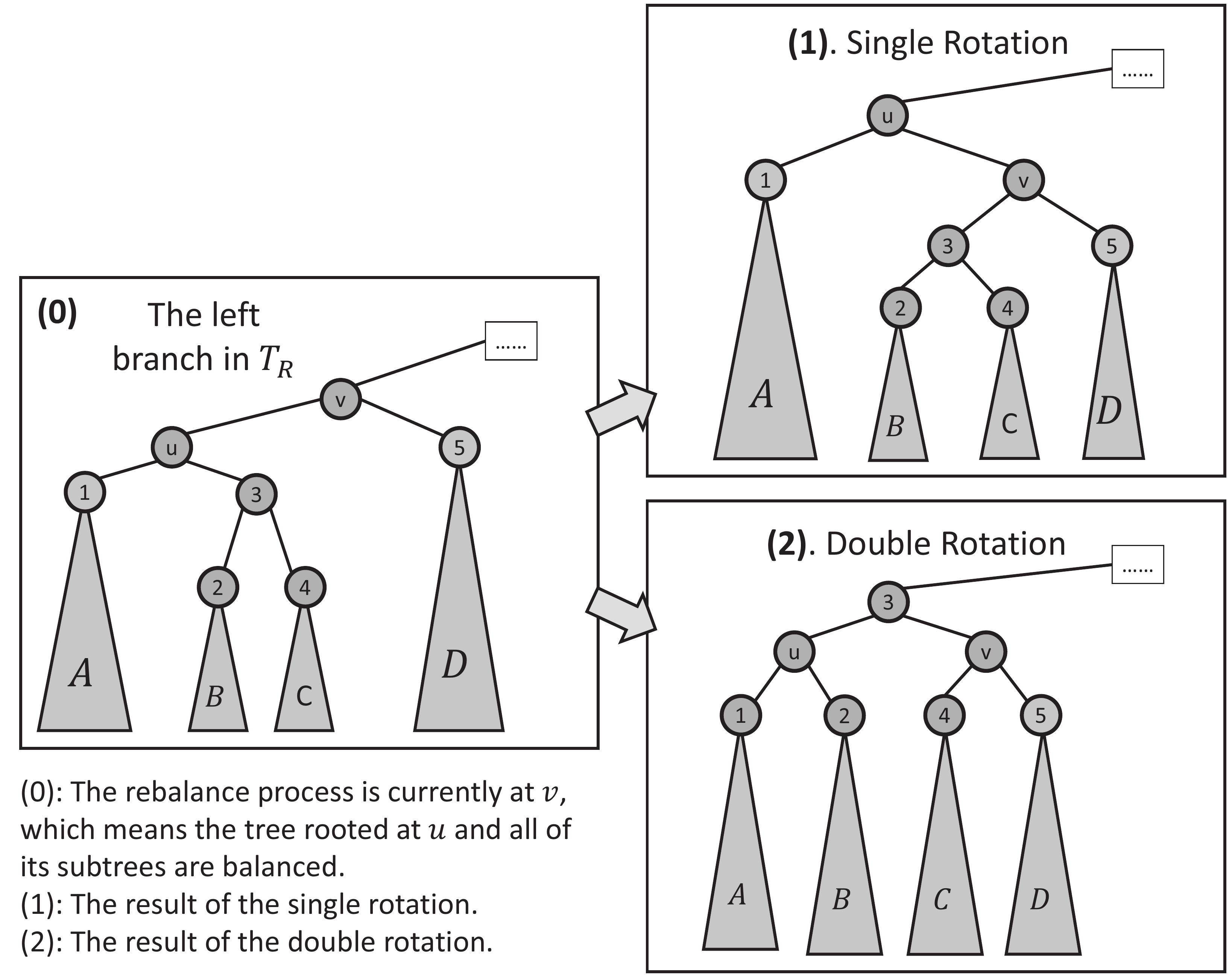}\\
  \caption{An illustration of two kinds of outcomes of rotation after joining two weight balanced trees. After we append the smaller tree to the larger one and rebalance from that point upwards, we reach the case in (0), where $u$ has been balanced, and the smaller tree has been part of it. Now we are balancing $v$, and two options are shown in (1) and (2). At least one of the two rotation will rebalance $v$.}\label{wbtree2}
\end{figure}
Recall that in a weight balanced tree, for a certain node, neither of its children is $\beta$ times larger than the other one, where $\beta=\frac{1}{\alpha}-1$. When $\alpha \le 1-\frac{1}{\sqrt{2}}$, we have $\beta\ge 1+\sqrt{2}$.

WLOG, we prove the case when $|\Tl|<|\Tr|$, where $\Tl$ is
inserted along the left branch of $\Tr$. Then we rebalance the tree
from the point of key $k$ and go upwards. As shown in Figure
\ref{wbtree2} (0), suppose the rebalance has been processed to $u$
(then we can use reduction). Thus the subtree rooted at $u$ is
balanced, and $\Tl$ is part of it. We name the four trees from left to
right $A,B,C$ and $D$, and the number of nodes in them $a,b,c$
and $d$. From the balance condition we know that $A$ is balanced with
$B+C$, and $B$ is balanced to $C$, i.e.:
\begin{align}
\label{b+ctoa}
\frac{1}{\beta}(b+c)\le&a\le\beta (b+c)\\
\label{btoc}
\frac{1}{\beta}b\le&c\le\beta c
\end{align}
We claim that at least one of the two operations will rebalanced the tree rooted at $v$ in Figure \ref{wbtree2} (0):
\begin{enumerate}[Op. (1).]
  \item Single rotation: right rotation at $u$ and $v$ (as shown in Figure \ref{wbtree2} (1));
  \item double rotation: Left rotation followed by a right rotation (as shown in Figure \ref{wbtree2} (2)).
\end{enumerate}

Also, notice that the inbalance is caused by the insertion of a
subtree at the leftmost branch. Suppose the size of the smaller tree
is $x$, and the size of the original left child of $v$ is $y$. Note
that in the process of \join{}, $T_L$ is not concatenated with
$v$. Instead, it goes down to deeper nodes.
%Guy change further to deeper?
Also, note that the original subtree of size $y$ is weight balanced with $D$. This means we have:

\begin{align*}
x&<\frac{1}{\beta}(d+y)\\
\frac{1}{\beta} d &\le y\le \beta d\\
x+y&=a+b+c
\end{align*}

From the above three inequalities we get $x< \frac{1}{\beta}d+d$, thus:
\begin{equation}
\label{a+b+csmallerd}
a+b+c=x+y<(1+\beta+\frac{1}{\beta})d
\end{equation}

Since a unbalance occurs, we have:
\begin{equation}
\label{a+b+clargerd}
a+b+c>\beta d
\end{equation}

We discuss the following 3 cases:

\begin{enumerate}[{Case} 1.]
  \item \textbf{$\bm{B+C}$ is weight balanced with $\bm D$}, i.e.,
  \begin{equation}
  \label{b+ctodbalance}
  \betaone (b+c)\le d \le \beta(b+c)
  \end{equation}

  In this case, we apply a right rotate. The new tree rooted at $u$ is now balanced. $A$ is naturally balanced. %Now we prove that $A$ is balanced with $B+C+D$.

  Then we discuss in two cases:

  \begin{enumerate}[{Case 1.}1.]
  \item $\bm{\beta a \ge b+c+d}$.

  Notice that $b+c\ge\betaone a$, meaning that $b+c+d>\betaone a$. Then in this case, $A$ is balanced to $B+C+D$, $B+C$ is balanced to $D$.
  Thus just one right rotation will rebalance the tree rooted at $u$ (Figure \ref{wbtree2} (1)).

  \item $\bm{\beta a<b+c+d}$.

  In this case, we claim that a double rotation as shown in Figure \ref{wbtree2} (2) will rebalance the tree. Now we need to prove the balance of all the subtree pairs: $A$ with $B$, $C$ with $D$, and $A+B$ with $C+D$.

  First notice that when $\beta a<b+c+d$, from (\ref{a+b+clargerd}) we can get:
  \begin{align}
  \nonumber
  &\beta d <a+b+c<\betaone(b+c+d)+b+c\\
  \nonumber
  \Rightarrow &(\beta-\betaone)d<(\betaone+1)(b+c)\\
  \label{dtob+c}
  \Rightarrow &(\beta-1)d<b+c
  \end{align}
  Considering (\ref{b+ctodbalance}), we have $(\beta-1)d<b+c\le\beta d$. Notice $b$ and $c$ satisfy (\ref{btoc}), we have:
  \begin{align}
  \label{btod}
  b>\frac{1}{\beta+1}(b+c)>\frac{\beta-1}{\beta+1}d\\
  \label{ctod}
  c>\frac{1}{\beta+1}(b+c)>\frac{\beta-1}{\beta+1}d
  \end{align}

  Also note that when $\beta>1+\sqrt{2}\approx 2.414$, we have
  \begin{equation}
  \label{beta1}
  \frac{\beta+1}{\beta-1}<\beta
  \end{equation}
  We discuss the following three conditions of subtrees' balance:
  \begin{enumerate}[I.]
    \item \textbf{Prove $\bm A$ is weight balanced to $\bm B$}.
    \begin{enumerate}[i.]
    \item \textbf{Prove} $\bm{b\le\beta a}$.

    Since $\beta a\le b+c$ (applying (\ref{b+ctoa})) , we have $b\le \beta a$.

    \item \textbf{Prove} $\bm{a\le\beta b}$.

    In the case when $\beta a<b+c+d$ we have:
    \begin{align*}
    a&<\frac{1}{\beta}(b+c+d)~~~~{\rm{(applying{(\ref{btoc}),(\ref{btod})})}}\\
    &<\betaone(b+\beta b +\frac{\beta+1}{\beta-1}b)\\
    &=\frac{\beta+1}{\beta-1}b\\
    &<\beta b
    \end{align*}
    \end{enumerate}
    \item \textbf{Prove $\bm C$ is weight balanced to $\bm D$}.
    \begin{enumerate}[i.]
    \item \textbf{Prove} $\bm{c\le \beta d}$.

    Since $b+c\le\beta d$ (applying (\ref{b+ctodbalance})), we have $c\le\beta d$.

    \item \textbf{Prove} $\bm{d\le \beta c}$.

    From (\ref{ctod}), we have
    \begin{align*}
    d<\frac{\beta+1}{\beta-1}c<\beta c
    \end{align*}
    \end{enumerate}
    \item \textbf{Prove $\bm{A+B}$ is weight balanced to $\bm{C+D}$}.
    \begin{enumerate}[i.]
    \item \textbf{Prove} $\bm{a+b\le\beta (c+d)}$.

    From (\ref{dtob+c}), (\ref{btoc}) and (\ref{beta1}) we have:
    \begin{align*}
    &d<\frac{1}{\beta-1}(b+c)\le\frac{1}{\beta-1}(\beta c+c)\\
    &=\frac{\beta+1}{\beta-1}c<\beta c\\
    \Rightarrow&\betaone d< c\\
    \Rightarrow&(1+\betaone)d<(1+\beta)c\\
    \Rightarrow&(1+\betaone+\beta)d<\beta(c+d)+c~~\apply{\ref{a+b+csmallerd}}\\
    \Rightarrow& a+b+c<(1+\betaone+\beta)d<\beta(c+d)+c\\
    \Rightarrow& a+b<\beta(c+d)
    \end{align*}
    \item \textbf{Prove} $\bm{c+d\le\beta (a+b)}$.

    When $\beta>2$, we have $\frac{\beta}{\beta-1}<\beta$. Thus applying (\ref{dtob+c}) and (\ref{b+ctoa}) we have:
    \begin{align*}
    d<\frac{1}{\beta-1}(b+c)\le\frac{\beta}{\beta-1}a<\beta a
    \end{align*}
    Also we have $c\le\beta b$ (applying (\ref{btoc})). Thus $c+d<\beta (a+b)$.
    \end{enumerate}
  \end{enumerate}
  \end{enumerate}
  \item \textbf{$\bm{B+C}$ is too light that cannot be balanced with $\bm D$}, i.e.,
  \begin{equation}
  \label{b+clessthand}
  \beta(b+c)<d
  \end{equation}

  In this case, we have $a<\beta (b+c)<d$ (applying (\ref{b+ctoa}) and (\ref{b+clessthand})), which means that $a+b+c<d+\betaone d < \beta d$ when $\beta>\frac{1+\sqrt{5}}{2}\approx 1.618$. This contradicts with the condition that $A+B+C$ is too heavy to $D$ ($a+b+c>\beta d$). Thus this case is impossible.

  \item \textbf{$\bm{B+C}$ is too heavy that cannot be balanced with $\bm D$}, i.e.,
  \begin{align}
  \label{b+clargerthand}
  &b+c>\beta d\\
  \label{alargerd}
  \Rightarrow &a>\betaone (b+c)>d
  \end{align}
  In this case, we apply the double rotation.

  We need to prove the following balance conditions:

  \begin{enumerate}[I.]
    \item \textbf{Prove $\bm A$ is weight balanced to $\bm B$}.
    \begin{enumerate}[i.]
    \item \textbf{Prove} $\bm{b<\beta a}$.

    Since $\beta a>b+c$ (applying (\ref{b+ctoa})) , we have $b<\beta a$.
    \item \textbf{Prove} $\bm{a<\beta b}$.

    Suppose $c=kb$, where $\betaone<k<\beta$. Since $b+c>\beta d$, we have:
    \begin{equation}
    \label{dtob}
    d<\frac{1+k}{\beta}b
    \end{equation}

    From the above inequalities, we have:
    \begin{align*}
    a+b+c&=a+b+kb~~~~~~~~({\rm{applying (\ref{a+b+csmallerd}}}))\\
     &<(1+\beta+\betaone)d~~~({\rm{applying (\ref{dtob}}}))\\
     &<(1+\beta+\betaone)\times \frac{1+k}{\beta}b\\
    \Rightarrow a&< \left(\frac{1+\beta+\betaone}{\beta}-1\right)(1+k)b\\
    &=\frac{\beta+1}{\beta^2}(1+k)b\\
    &<\frac{(\beta+1)^2}{\beta^2}b
    \end{align*}
    When $\beta> \frac{7}{9}\times (\frac{\sqrt{837}+47}{54})^{-1/3}+(\frac{\sqrt{837}+47}{54})^{1/3} + \frac{1}{3} \approx 2.1479$, we have $\frac{(\beta+1)^2}{\beta}<\beta$. Hence $a<\beta b$.
    \end{enumerate}
    \item \textbf{Prove $\bm C$ is weight balanced to $\bm D$}.
    \begin{enumerate}[i.]
    \item \textbf{Prove} $\bm{d\le\beta c}$.

    When $\beta>\frac{1+\sqrt{5}}{2}\approx 1.618$, we have $\beta>1+\betaone$. Assume to the contrary $c<\betaone d$, we have $b<\beta c < d$. Thus:
    \begin{align*}
    b+c<(1+\betaone)d<\beta d
    \end{align*}
    , which contradicts with (\ref{b+clargerthand}) that $B+C$ is too heavy to be balanced with $D$.
    \item \textbf{Prove} $\bm{c<\beta d}$.

    Plug (\ref{alargerd}) in (\ref{a+b+csmallerd}) and get $b+c<(\beta+\betaone)d$. Recall that $\beta>1$, we have:
    \begin{align*}
    &\betaone c+c < b+c < (\beta+\betaone)d\\
    \Rightarrow& c<\frac{\beta^2+1}{\beta+1}d < \beta d
    \end{align*}
    \end{enumerate}
    \item \textbf{Prove $\bm{A+B}$ is weight balanced with $\bm{C+D}$}.
    \begin{enumerate}[i.]
    \item \textbf{Prove} $\bm{c+d\le\beta(a+b)}$.

    From (\ref{btoc}) we have $c<\beta b$, also $d<a<\beta a$ (applying (\ref{alargerd})), thus $c+d<\beta(a+b)$.
    \item \textbf{Prove} $\bm{a+b\le\beta(c+d)}$.

    Recall when $\beta>\frac{1+\sqrt{5}}{2}\approx 1.618$, we have $\beta>1+\betaone$. Applying (\ref{b+clargerthand}) and (\ref{btoc}) we have:
    \begin{align*}
    &d\le\betaone(b+c)\le c+\betaone c <\beta c\\
    \Rightarrow&\betaone d< c\\
    \Rightarrow&(1+\betaone)d<(1+\beta)c\\
    \Rightarrow&(1+\betaone+\beta)d<\beta(c+d)+c~~\apply{\ref{a+b+csmallerd}}\\
    \Rightarrow& a+b+c<(1+\betaone+\beta)d<\beta(c+d)+c\\
    \Rightarrow& a+b<\beta(c+d)
    \end{align*}
    \end{enumerate}
  \end{enumerate}
\end{enumerate}
Taking all the three conclusions into consideration, after either a single rotation or a double rotation, the new subtree will be rebalanced.

Then by induction we can prove Lemma \ref{lem:wbvalid}.
\end{proof} 

\end{document}